\newcommand{\yes}{\ding{51}} 
\newcommand{\no}{\ding{55}}  
\newtheorem{theorem}{Theorem}[section]
\newtheorem*{theorem*}{Theorem}
\newtheorem{lemma}[theorem]{Lemma}
\newtheorem{definition}[theorem]{Definition}
\newcommand{\hc}{\hat{c}}
\renewcommand{\hm}{\hat{m}}
\newcommand{\var}{\text{Var}}
\begin{document}
\begin{CJK}{UTF8}{gbsn}

\title{Optimal and Robust In-situ Quantum Hamiltonian Learning through Parallelization}

\author{Suying Liu}
\email[Electronic address: ]{syliu@umd.edu}
\affiliation{Joint Center for Quantum Information and Computer Science, University of Maryland, College Park, Maryland 20742, USA}
\affiliation{Department of Computer Science, University of Maryland, College Park, USA}

\author{Xiaodi Wu} 
\affiliation{Joint Center for Quantum Information and Computer Science, University of Maryland, College Park, Maryland 20742, USA}
\affiliation{Department of Computer Science, University of Maryland, College Park, USA}

\author{Murphy Yuezhen Niu}
\email[Electronic address: ]{murphyniu@ucsb.edu}
\affiliation{Department of Computer Science, University of California, Santa Barbara, Santa Barbara, CA 93106, USA}
\affiliation{Google Quantum AI, Venice, California, CA 90291, USA}

\date{\today}
\begin{abstract}
Hamiltonian learning is a cornerstone for advancing accurate many-body simulations, improving quantum device performance, and enabling quantum-enhanced sensing. Existing readily deployable quantum metrology techniques primarily focus on achieving Heisenberg-limited precision in one- or two-qubit systems. In contrast, general Hamiltonian learning theories address broader classes of unknown Hamiltonian models but are highly inefficient due to the absence of prior knowledge about the Hamiltonian. There remains a lack of efficient and practically realizable Hamiltonian learning algorithms that directly exploit the known structure and prior information of the Hamiltonian, which are typically available for a given quantum computing platform. In this work, we present the first Hamiltonian learning algorithm that achieves both Cramér–Rao lower bound saturated optimal precision and robustness to realistic noise, while exploiting device structure for quadratic reduction in experimental cost for fully connected Hamiltonians. Moreover, this approach enables simultaneous in-situ estimation of all Hamiltonian parameters without requiring the decoupling of non-learnable interactions during the same experiment, thereby allowing comprehensive characterization of the system’s intrinsic contextual errors. Notably, our algorithm does not require deep circuits and remains robust against both depolarizing noise and time-dependent coherent errors. We demonstrate its effectiveness with a detailed experimental proposal along with supporting numerical simulations on Rydberg atom quantum simulators, showcasing its potential for high-precision Hamiltonian learning in the NISQ era.
\end{abstract}

\maketitle
\end{CJK}
\section{\label{sec:introduction} Introduction}
Learning the underlying Hamiltonian of a quantum system is a foundational problem with broad implications across quantum simulation, quantum control, and fault-tolerant quantum computing. Accurate Hamiltonian characterization enables high-fidelity simulation of quantum many-body dynamics \cite{dag2024distanceerror,arute2020periodiccalibration,andersen2025thermalization,karamlou2024probing,manovitz2025quantum,bluvstein2024logical}, facilitates the calibration and benchmarking of quantum hardware \cite{shulman2014suppressing,niu2019universal,hangleiter2024robustly,olsacher2025hamiltonian}, and underpins the demonstration of quantum advantage in scalable fault-tolerant quantum computers \cite{de2005quantum,leibfried2004toward,mckenzie2002experimental,allen2025sensing}. In this context, parallel in-situ Hamiltonian learning has emerged as a critical component as the size and model complexity of current quantum devices increase dramatically for achieving fault-tolerant computing and completing practical applications. Parallelization means we can learn multiple Hamiltonian parameters with the same experiment, which is essential for efficiently processing the exponentially large Hilbert spaces encountered in large-scale quantum systems \cite{preskill1998fault}. In-situ learning enables characterization of specific Hamiltonian parameters without requiring the deactivation of other couplings. It is needed for characterizing contextual and correlated errors, such as quantum crosstalk, that cannot be captured by ex-situ methods. These capabilities make parallel in-situ learning indispensable for reliable and scalable quantum computation. 

Parallel and in-situ Hamiltonian learning are crucial for resolving key challenges for quantum simulators. For example, in Rydberg atom devices, atom position errors, caused by turning off optical traps during evolution, significantly impact simulation accuracy as the interaction strength scales with distance as $c_{ij} = C_6/R^6_{ij}$. This has led to notable discrepancies between theoretical predictions and experimental results in many-body dynamics \cite{dag2024distanceerror}. A first-principles understanding of these mismatches via in-situ Hamiltonian learning is required to bridge the gap and validate error models for noisy quantum simulators, paving the way for the study of complex quantum many-body systems in the NISQ era. 
\begin{table*}[t]
    \centering
    \begin{tabular}{|c|c|c|c|c|c|}
    \hline
          & Heisenberg limit & Robust & In-situ & $\#$ experiment rounds & Applicability \\
    \hline
    \cite{kimmel2015rpe}  & \yes & \no & - & - &  digital\\
     \hline 
     \cite{dong2025qspe} & \yes &\yes& - & - & digital\\
     \hline 
      \cite{huang2023learning}&  \yes & \no & \no & $O(n^2)$& hybrid \\
    \hline
    This work [Algorithm \ref{alg:qspe-parallel}] &\yes &\yes& \yes & $O(n)$ & digital and hybrid\\
    \hline
    This work [Algorithm \ref{alg:qspe-analog}] &\yes &\yes& \yes & $O(n^2)$ & analog\\
    \hline
    \end{tabular}
    \caption{Comparison of this work and previous methods: (i) ability to reach the Heisenberg limit and robustness against SPAM, decoherence, and coherent errors; (ii) performance in learning fully connected Hamiltonians, including in-situ learning ability, required experiment rounds, and quantum device requirement for implementing the algorithm. `-' represents the method is not applicable at this scenario.}

    \label{tab:table of comparison}
\end{table*}\\
Previous works on Hamiltonian learning~\cite{huang2023learning,hu2025ansatz,ma2024learning,li2024heisenberg,ni2024quantum} have yet to demonstrate parallel in-situ learning of many-body Hamiltonians and have not provided readily deployable protocols that maintain robustness under realistic noise conditions. Instead, these studies have primarily focused on establishing fundamental bounds for general learning tasks, without offering protocols that are directly applicable to current quantum devices. Existing quantum hardware suffers from diverse noise sources—including decoherence, systematic coherent errors, and time-dependent drift \cite{manuel-endres-2024benchmarking}, which severely degrade the performance of learning proposals that are only robust to state preparation and measurement (SPAM) errors. Further, this line of approaches depends on the Hamiltonian Reshaping \cite{huang2023learning} technique, which decomposes the many-body Hamiltonian into non-interacting clusters. As a result, these methods do not support in-situ Hamiltonian learning in general scenarios. In practice, several works in quantum metrology and quantum calibration have addressed the practical task of learning Hamiltonian coefficients as a means to certify quantum simulators \cite{andersen2025thermalization,dag2024distanceerror,bernien2017probing,kimmel2015rpe,dong2025qspe}. However, these protocols are limited to learning two-qubit systems. For instance, the calibration method in \cite{andersen2025thermalization} estimates coupling strengths between two superconducting qubits using pairwise measurements. Similarly, in \cite{bernien2017probing}, interaction terms are inferred through a four-photon process involving only two neighboring atoms. As a consequence, these methods are inherently limited to learning a single interaction term per qubit at a time. Therefore, without parallelization, $O(k)$ experiments are required to characterize couplings associated with a $k$-connected single qubit. Furthermore, because these approaches decompose the original Hamiltonian into pairwise interactions, they preclude in-situ learning and thus fail to capture contextual errors, such as unwanted correlation errors that emerge only when multiple qubits interact simultaneously. Despite these rapid advances, the realization of parallel and in-situ quantum computation remains an open challenge, both in theory and in practice.  

In this work, we introduce the first in-situ Hamiltonian learning algorithm that incorporates both the structural prior of realistic systems and noise resilience. Specifically, we target a class of parallel-learnable Hamiltonians that cover a wide range of physical devices (Rydberg atom, trapped ion, and superconducting qubits), where subsets of parameters can be estimated simultaneously in-situ. The parallelization feature of our algorithm achieves a quadratic reduction in experimental cost with respect to the number of qubits, compared to previous approaches for estimating fully connected Hamiltonians. More specifically, our algorithm leverages parallelization to solve any set of $k$ pairwise coupling terms that share the same qubit using only $O(1)$ experiments. Besides, our algorithm leverages Quantum Signal-Processing (QSP) to decouple the parameters, enabling robustness against SPAM, decoherence, and time-dependent coherent errors, and achieving optimal precision. The algorithm’s optimality and robustness make it well-suited for learning device Hamiltonians on current quantum hardware. Moreover, the in situ feature of the learning algorithm allows us to fill in the blank in capturing the Rydberg atom distance error in many-body systems, serving as the first principles approach to validate the error model in theory and practice. To demonstrate its practicality, we present a detailed experimental proposal along with supporting numerical simulations for high-precision atom-distance estimation on Rydberg atom quantum simulators. A comprehensive comparison between our learning algorithms and previous works on learning Hamiltonian parameters is shown in Table \ref{tab:table of comparison}.

The rest of the paper is structured as follows. In Sec.~\ref{subsec:problem setting}, we define the parallel-learnable Hamiltonian and introduce specific Problem \ref{def:prb 1} as the main focus for presenting our techniques. The main results of the work are summarized in Sec.~\ref{sec:main results}. Sec.~\ref{subsec:methods} presents our learning algorithm with the introduction of the QSPE-based algorithm for two-atom learning and followed by the general parallel learning algorithm for $n$ atom case. Sec.~\ref{subsec:optimality discussion} discusses the optimality of our algorithm with matching Cramér–Rao bound and computed variance. In Sec.~\ref{subsec:robustness}, we show the robustness of the algorithm against decoherence, coherent state preparation error, readout error, and time-dependent coherent error. In Sec.~\ref{sec:experiment proposal}, we present a detailed experimental proposal with corresponding parameter settings for the application of the algorithm on current Rydberg atom quantum simulators. We conclude in Sec.~\ref{sec:discussion} with open questions for further exploration.

\section{\label{sec:results} Results}
\subsection{\label{subsec:problem setting}Problem definition}
Given that current quantum devices operate under well-defined Hamiltonian structures, accurately characterizing the Hamiltonian coefficients is essential for fully utilizing these systems. In this work, we address the problem of estimating the parameters of fixed-term Hamiltonians, where the structure of the terms is known, but the corresponding coefficients are unknown. Our learning technique applies to any general parallel-learnable Hamiltonian, defined as follows.
\begin{definition}\label{def:pl H}
    The Hamiltonian is parallel-learnable if it can be written as a direct sum of $ 2^{n-1}$ non-diagonal Hermitian matrices with dimension $2 \times 2$ up to a realizable unitary transformation. That is, for any Hamiltonian $H$ which is parallel-learnable, there exists a realizable unitary transformation $U$, such that
    \begin{equation}\label{eqn:pl H}
        U^{\dagger}HU = \bigoplus_{i = 1}^{2^{n-1}}A_i,
    \end{equation}
    where $A_i$'s are $2 \times 2$ non-diagonal Hermitian matrices. 
\end{definition}

From the definition, we note that the parallel-learnable structure requires the Hamiltonian $H$ to have $2^{n-1}$ invariant subspaces under certain bases. The unitary transformation $U$ is used to change it from the computational basis to which the block-diagonal structure is presented. In principle, any Hamiltonian can be transformed into a direct sum of non-diagonal \( 2 \times 2 \) blocks through a specially designed unitary transformation \( U \). A specific example of such a transformation is \( U := R V \), where \( V \) is the eigenbasis matrix that diagonalizes the Hamiltonian, and \( R \) is a block-wise rotation that transforms each \( 2 \times 2 \) diagonal block into a non-diagonal form. (See Supplementary Material Sec.~\ref{SM:pl hamiltonian def} for more details on this construction for arbitrary Hamiltonians.) However, the unitary transformation utilized here is constructed from the eigenbasis matrix $V$ of the Hamiltonian $H$, which is not always feasible to implement on current quantum hardware. Therefore, in this work, we focus on Hamiltonians that preserve the parallel-learnable structure, allowing for physically implementable unitary transformations. We present several illustrative examples of parallel-learnable Hamiltonians and their corresponding unitary transformations in the Supplementary Material Sec.~\ref{SM:pl hamiltonian def}. We notice from those examples that parallel-learnable Hamiltonians frequently appear in many-body systems, particularly in the design of foundational quantum platforms. Examples include the transverse field Ising model, which captures the dynamics of Rydberg interactions \cite{bluvstein2021controlling,browaeys2020many}, and the XY model with local 
$Z$ fields, commonly used to approximate superconducting qubit couplings and interactions in trapped ion gates~\cite{kjaergaard2020superconducting,siddiqi2021engineering,msgate}. Here, we mainly focus on the following Hamiltonian, targeting learning all the Hamiltonain coefficients in-situ and in parallel. 
\begin{definition}[Problem 1]\label{def:prb 1}
Consider the Hamiltonian of $n$-atom Rydberg arrays, \begin{equation}\label{eqn: all2all H}
    H = \sum_{\substack{i,j \in\{1,\cdots,n\} \\ i\neq j}} c_{ij} Z_iZ_j,
\end{equation}
which models pure atom-atom interaction between any two Rydberg atoms. There are in total $\frac{n(n-1)}{2}$ terms in the Hamiltonian, whose coefficients are characterized by the parameter vector  $\mathbf{c} = [c_{11},c_{12},\cdots,c_{n(n-1)}]^T$. We aim to construct estimators $\mathbf{\hc}=[\hat{c}_{11},\hc_{12},\cdots,\hc_{n(n-1)}]^T$ for parameter vector $\mathbf{c}$. 
\end{definition}

This problem is motivated by a fundamental challenge from the Rydberg-array-based quantum computers. The Rydberg atom quantum device is built on neutral atoms, which are trapped with optical tweezers \cite{morgado2021quantum}. But those traps are turned off during the evolution phase in quantum computing \cite{bluvstein2021controlling,browaeys2020many}, resulting in unavoidable atom position movements. Given that the interaction strength depends inversely on the atom distance, i.e.,$c_{ij} = C_6/R^6_{ij}$, small deviations in interatomic distances can lead to sixth-order amplification of errors in the Hamiltonian parameters. This amplification can cause significant discrepancies between theory and experiment \cite{dag2024distanceerror}. So it is critical to measure the atom-atom interaction inside the system Hamiltonian to ensure the accuracy of the quantum simulation. However, the pure interaction Hamiltonian in Eq.~\eqref{eqn: all2all H} is already diagonal, thus does not directly satisfy the definition of parallel-learnable Hamiltonian discussed in Definition \ref{def:pl H}. Therefore, instead of tackling the pure interaction Hamiltonian, we consider the full Hamiltonian, which is constructed by adding a single $X_i$ term to the target Hamiltonian $H$.
\begin{definition}
    The full Hamiltonian of the pure interaction Hamiltonian given in Eq.~\eqref{eqn: all2all H} is
    \begin{equation}\label{eqn:full H}
         H = a_iX_i+\sum_{\substack{i,j \in\{1,\cdots,n\} \\ i\neq j}} c_{ij} Z_iZ_j,
    \end{equation}
     where $X_i$ denotes the application of $X$ to the $i$th atom and $a_i$ is the parameter of $X_i$ term.
\end{definition}
We note here that there are $n$ different full Hamiltonians as there are $n$ atoms inside the system. We will utilize $(n-1)$ of them for learning the full parameter vectors later. For each full Hamiltonian corresponding to a different \( X_i \) term, it has a clear invariant subspace structure. As a result, each of these Hamiltonians is parallel-learnable and can be transformed into a direct sum of \( 2 \times 2 \) non-diagonal matrices through simple permutations. Moreover, for this particular class of Hamiltonians, the invariant subspace structure naturally emerges in the computational basis. Consequently, within the learning algorithm, we can implement a virtual permutation by preparing the appropriate encoded initial states corresponding to the defined invariant subspaces, which is effectively equivalent to permuting the Hamiltonian itself.

Lastly, the invariant subspace structure of a parallel-learnable Hamiltonian is preserved under time evolution. This leads to the following lemma concerning the form of the propagator associated with such a Hamiltonian.
\begin{lemma}
    The time-evolution operator (or propagator) of the parallel-learnable Hamiltonian $H$ is also block-diagonal with $2\times 2$ blocks. 
\end{lemma}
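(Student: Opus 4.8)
The plan is to reduce the statement to the elementary fact that a (entire) function of a block-diagonal matrix acts block-by-block, combined with the unitary invariance of the matrix exponential.

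First I would invoke Definition~\ref{def:pl H}: since $H$ is parallel-learnable there is a realizable unitary $U$ with $U^\dagger H U = \bigoplus_{i=1}^{2^{n-1}} A_i =: D$, where each $A_i$ is a $2\times 2$ Hermitian matrix, equivalently $H = U D U^\dagger$. The propagator (with $\hbar = 1$) is $e^{-iHt}$, which I would expand as a power series and rewrite using $H^k = U D^k U^\dagger$ for every $k \ge 0$; this is immediate by telescoping the $U^\dagger U = \id$ factors in $(UDU^\dagger)^k$. Summing the series, which converges absolutely in finite dimension, gives $e^{-iHt} = U\, e^{-iDt}\, U^\dagger$. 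Next I would show $e^{-iDt} = \bigoplus_{i=1}^{2^{n-1}} e^{-iA_i t}$: the product of matrices that are block-diagonal with respect to the same index partition is again block-diagonal with respect to that partition, with blocks multiplying independently, so $D^k = \bigoplus_i A_i^k$ for all $k$; substituting into the exponential series and regrouping term by term yields the blockwise exponential. Since each $e^{-iA_i t}$ is a $2\times 2$ unitary, $e^{-iHt} = U\!\left(\bigoplus_{i} e^{-iA_i t}\right)\!U^\dagger$ has exactly the same $2\times 2$ block-diagonal structure as $H$, up to the realizable unitary $U$.

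The only point needing a little care — and the closest thing to an obstacle, though it is bookkeeping rather than analysis — is to verify that the partition of the index set into $2^{n-1}$ blocks of size $2$ is preserved \emph{exactly} at every power $D^k$, so that the same $U$ that block-diagonalizes $H$ also puts the propagator in genuine $2\times 2$ block form (a coarser common refinement would not suffice for the later algorithmic use, where each $2\times 2$ block is addressed individually by preparing an encoded two-dimensional initial state). Because multiplying block-diagonal matrices never mixes blocks, this is automatic. In the case relevant to Problem~\ref{def:prb 1}, where $U$ is merely a permutation and the invariant two-dimensional subspaces already appear in the computational basis, the same argument shows the propagator decomposes into $2\times 2$ blocks acting on those computational-basis subspaces, which is precisely the structure the learning algorithm exploits.
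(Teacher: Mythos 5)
Your proof is correct and follows the same (standard) reasoning the paper relies on: the paper states this lemma without an explicit proof, appealing to the fact that the invariant-subspace structure is preserved under time evolution, which is exactly what your power-series argument $e^{-iHt}=U\bigl(\bigoplus_i e^{-iA_i t}\bigr)U^{\dagger}$ makes precise. No gaps.
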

Therefore, the propagator of the parallel-learnable Hamiltonian also has $2^{n-1}$ different invariant subspaces. This property enables the simultaneous estimation of multiple parameters by leveraging time-evolution dynamics within different invariant subspaces in parallel. We present specific designs utilizing the propagator for parallel learning the estimators $\mathbf{\hc}$ in Problem~\ref{def:prb 1} in Sec.~\ref{subsec:methods}. But the techniques developed in this work can also be readily applied to other parallel-learnable Hamiltonians.

\subsection{Main Results}\label{sec:main results}
In this section, we present our main results, starting with the discussion on the quadratic speedup achieved by our learning algorithm for parallel-learnable Hamiltonians.
\begin{theorem}\label{thm:quadratic speedup}
    Considering the parallel-learnable full Hamiltonian 
    $
    H_i = a_iX_i + \sum_{{i,j \in\{1,\cdots,n\}} \atop {i\neq j}} c_{ij} Z_iZ_j,
$
there exists an algorithm which involves $O(n)$ number of independent experiments to learn all $O(n^2)$ interaction parameters in Eq.\eqref{eqn: all2all H}.
\end{theorem}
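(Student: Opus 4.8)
The plan is to exploit the block-diagonal structure of each full Hamiltonian $H_i = a_i X_i + \sum_{j\neq k} c_{jk} Z_jZ_k$ to learn, with a \emph{single} experiment (i.e., $O(1)$ rounds) per well-chosen initial state, all $n-1$ coupling coefficients $c_{ij}$ incident to atom $i$. First I would observe that conjugating $H_i$ by $X$-basis rotation on atom $i$ (equivalently, preparing the encoded initial states described in Sec.~\ref{subsec:problem setting}) reveals that $H_i$ decomposes into $2\times 2$ blocks indexed by the computational-basis configuration of the remaining $n-1$ atoms; within the block labeled by bitstring $b \in \{0,1\}^{n-1}$, the off-diagonal entry is $a_i$ and the diagonal splitting is a fixed linear combination $\sum_{j\neq i}(-1)^{b_j}\bigl(c_{ij} + \text{const}_j\bigr)$ of the incident couplings. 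Each such block is a single-qubit Hamiltonian whose precession frequency encodes one linear functional of the vector $(c_{ij})_{j\neq i}$, and crucially, all $2^{n-1}$ blocks evolve \emph{simultaneously} under one global time evolution.

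The second step is to run the two-atom QSPE-based subroutine (Sec.~\ref{subsec:methods}) in parallel across these blocks: by preparing initial states that are superpositions across the invariant subspaces and by reading out in an appropriate basis, one extracts the block frequencies (hence the linear functionals) from a number of experiments independent of $n$. Choosing $n-1$ linearly independent bitstrings $b^{(1)},\dots,b^{(n-1)}$ — e.g.\ the rows of an invertible $\pm1$ matrix such as a Hadamard-type or identity-plus-ones matrix — gives $n-1$ linear equations in the $n-1$ unknowns $\{c_{ij}\}_{j\neq i}$, which I then invert classically. Repeating this for $i = 1,\dots,n-1$ (using the $n-1$ distinct full Hamiltonians) yields every $c_{ij}$: atom $n$'s couplings are recovered as byproducts since $c_{nj}=c_{jn}$ already appears in the equations for atom $j$. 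Summing the rounds: $O(1)$ per functional, $O(n)$ functionals per atom, $O(n)$ atoms would naively give $O(n^2)$, so the key efficiency claim rests on extracting \emph{all} $O(n)$ functionals for a fixed $i$ from $O(1)$ experiments via parallel readout across invariant subspaces — this is precisely the parallelization the block structure affords, and it reduces the per-atom cost to $O(1)$ and the total to $O(n)$.

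The main obstacle I anticipate is making the parallel-readout step rigorous: one must show that a single family of prepared states and measurement settings simultaneously resolves all $2^{n-1}$ block frequencies with enough signal to reconstruct the $n-1$ independent linear functionals, rather than just their aggregate. Concretely, I expect one prepares, for atom $i$, a product initial state on the remaining atoms that populates many blocks at once (e.g.\ an $X$-eigenstate on a subset, computational-basis on the rest), so that the measured signal is a sum of cosines whose distinct frequencies are the needed functionals; disentangling them requires either choosing the subsets so the frequencies are well-separated or performing $O(n)$ (not $O(n^2)$) such preparations. I would therefore be careful in the write-up to count exactly how many initial-state/measurement configurations are needed — the theorem only claims $O(n)$ total experiments, so a constant number of configurations per atom, each yielding several functionals, suffices — and to verify that the resulting $\pm1$ coefficient matrix is invertible and well-conditioned so the classical post-processing does not blow up the variance. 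Optimality of the extracted frequencies follows from the QSPE analysis already established for the two-atom case, applied block-wise.
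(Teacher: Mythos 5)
Your overall route matches the paper's: decompose each full Hamiltonian $H_i$ into $2\times 2$ invariant blocks labelled by the configuration of the other qubits, observe that each block has off-diagonal entry $a_i$ and a diagonal splitting given by a $\pm 1$ combination of the couplings incident to qubit $i$, run QSPE blockwise in parallel, select $n-1$ blocks whose coefficient vectors are linearly independent, invert the linear system classically, and iterate over $i$. Two corrections to the structural part: the couplings \emph{not} incident to $i$ act as a multiple of the identity inside each block (they shift both diagonal entries equally), so they never enter the splitting --- the ``$+\,\mathrm{const}_j$'' inside your linear functional should be dropped; this is exactly what guarantees that the system for atom $i$ closes on the $n-1$ incident unknowns only. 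Moreover, the paper does not just assert an invertible $\pm1$ matrix exists; it exhibits one, choosing the all-zero block together with the single-excitation blocks (Definition~\ref{def:invariant subspace}) and showing the resulting matrix has determinant $(-2)^{n-2}\neq 0$ after column reduction.

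The step you flag as the main obstacle --- making the parallel readout rigorous --- is where your proposal has a genuine gap, and the resolutions you sketch are not the paper's and would partly defeat the claim. There is no sum-of-cosines signal to disentangle and no frequency-separation requirement: in Algorithm~\ref{alg:qspe-parallel} the initial state is a normalized superposition of the logical Bell pairs of the chosen $(n-i)$ subspaces (Definition~\ref{def:initial state}), the logical rotation $Z_L(\omega)$ is a single $Z_i$ rotation that acts correctly in every block simultaneously (Definition~\ref{def:logical z}), and, crucially, the logical-zero states of distinct blocks are \emph{distinct computational bitstrings}, so a computational-basis measurement tags which block each shot belongs to. All per-block transition probabilities are therefore estimated from the same shots; the only price is the $1/(n-i)$ normalization, which is undone by rescaling the probabilities and costs a factor $O(n)$ in variance (Supplementary Sec.~\ref{apx:var}) --- a sampling overhead, not additional experiment rounds. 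Your fallback of ``$O(n)$ such preparations'' per atom would give $O(n^2)$ rounds and is essentially the fully analog Algorithm~\ref{alg:qspe-analog}, which does not prove this theorem; with the bitstring-tagged readout, a constant number of configurations per $H_i$ (two initial states and the $2d-1$ values of $\omega$, counted as one independent experiment per Hamiltonian) suffices. Finally, the paper's bookkeeping is slightly leaner than yours: at iteration $i$ only $(n-i)$ subspaces are used, since couplings to previously processed atoms are already known, yielding $(n-1)+(n-2)+\cdots+1 = n(n-1)/2$ parameters from $n-1$ rounds.
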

The key behind this quadratic speedup of this learning algorithm lies in the ability to exploit multiple parallel invariant subspaces of the full Hamiltonian simultaneously. In our algorithm, each independent experiment evolves a specific full Hamiltonian $H_i, \forall i \in \{1,\cdots,n-1\}$. Each single experiment round allows us to extract information about $O(n)$ parameters, thus to learn $O(n^2)$ parameters, only $O(n)$ independent experiments are required. However, parallelization involves more invariant subspaces, adding sampling overheads for each experimental round. Nevertheless, our algorithm remains favorable, as the total evolution time, accounting for both the number of rounds and the sampling per round, still achieves a $1/2$ reduction in the exponent of $n$ compared to~\cite{huang2023learning}. Further details on the total evolution time are provided in Supplementary Material Sec.~\ref{apx:total time}.

For each independent experiment with Hamiltonian $H_i$, we utilize the parallelization of the QSPE method \cite{dong2025qspe} within each invariant subspace to learn all parameters simultaneously. The QSPE method \footnote{The introduction to the QSPE method can be found in the Supplementary Material Sec.~\ref{sec:QSPE}.} was originally developed for two-qubit gate calibration, leveraging quantum signal transformation to separate parameters in the Fourier domain. It achieves the Heisenberg limit and the variance even scales as $1/d^4$ in the pre-asymptotic regime, while also demonstrating robustness to realistic experimental errors. Consequently, the estimators used in our learning algorithm retain the advantageous characteristics of the QSPE estimator.
\begin{theorem}[Optimal precision performance]\label{thm:heisenberg limit}
    The estimator $\hc_{ij},\forall i\neq j$ has the variance $\var(\hc_{ij}) = O(\frac{1}{Nd^4}),$ where $N$ is the number of shots and $d$ is the number of repetition cycles. Further, the computed variances saturate the Cramér–Rao lower bound.
\end{theorem}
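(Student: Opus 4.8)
\emph{Proof proposal.} The strategy is to reduce the learning of each $c_{ij}$ to a single-parameter phase-estimation problem inside a two-dimensional invariant subspace, import the QSPE precision guarantee of \cite{dong2025qspe} there, and then track how both the computed variance and the Cram\'er--Rao bound transform under the (smooth, then linear) reconstruction of $\mathbf{c}$.

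\textbf{Step 1: reduction to single-qubit blocks.} Using the Lemma that the propagator of the full Hamiltonian $H_i = a_iX_i + \sum_{k\neq l} c_{kl}Z_kZ_l$ is block-diagonal with $2\times 2$ blocks in the computational basis, I would index the blocks by the sign pattern $s=(s_k)_{k\neq i}\in\{\pm1\}^{n-1}$ of the $Z$-values of the atoms other than $i$. On the block spanned by the two states that differ only on atom $i$, the generator equals $a_iX_i + b_sZ_i$ up to a block-global constant (the $Z_kZ_l$ with $k,l\neq i$ contribute only a phase), where $b_s=\sum_{j\neq i}c_{ij}s_j$. Hence each block evolves as a rotation about $(a_i,0,b_s)$ at frequency $\omega_s=\sqrt{a_i^2+b_s^2}$, and since $a_i$ is a known control, estimating $\omega_s$ is equivalent to estimating the linear combination $b_s$ of the unknowns. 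Because the atoms other than $i$ are untouched by the block evolution, measuring them in the $Z$ basis post-selects onto the block $s$, so the $O(n)$ sign patterns needed for reconstruction are probed in parallel within one experiment on $H_i$.

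\textbf{Step 2: QSPE in a block, then reconstruct $\mathbf{c}$.} Within a block, the QSPE protocol interleaves $d$ repetition cycles of the block evolution with known single-qubit corrections, so the measurement signal is a degree-$d$ trigonometric polynomial in the accumulated phase $\omega_s t$; with the evolution time in a window where that phase is unambiguous, this is exactly the two-qubit-gate calibration problem QSPE was designed for. I would then invoke the QSPE analysis verbatim: the phase estimator is asymptotically unbiased with variance $O(1/(Nd^4))$ in the pre-asymptotic regime (and Heisenberg-limited $O(1/(Nd^2))$ asymptotically), and it saturates the classical Fisher information of the QSPE outcome distribution. Transferring this to $\hat b_s$ through the smooth map $\omega_s\mapsto b_s$ (delta method plus reparametrization-covariance of the Cram\'er--Rao bound) preserves both the scaling and the efficiency. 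Finally, choosing the $n-1$ sign patterns $s$ so that $\{b_s\}\mapsto\{c_{ij}\}_{j\neq i}$ is invertible (e.g. rows of a $\pm1$ Hadamard-type matrix over the hypercube) and setting $\hat{\mathbf{c}}$ to the image of $\{\hat b_s\}$ under the inverse, I obtain $\var(\hat c_{ij})$ as a bounded linear combination of the $\var(\hat b_s)$, which keeps the $O(1/(Nd^4))$ rate.

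\textbf{Cram\'er--Rao saturation and main obstacle.} To close the optimality claim I would compute the Fisher information matrix of the full experiment: since different blocks produce statistically independent outcomes, it is block-diagonal in $s$ with entries read off from the QSPE signal model, and composing it with the fixed, well-conditioned reconstruction map gives the inverse covariance of $\hat{\mathbf{c}}$; matching this against the variance computed above shows the bound is saturated for each $\hat c_{ij}$. The delicate step is precisely this parallel, multi-parameter bookkeeping --- verifying that splitting one experiment among the invariant subspaces does not erode the per-parameter Heisenberg scaling in $d$, and that after the nonlinear-then-linear reconstruction the estimator covariance exactly equals the inverse Fisher information of the \emph{joint} problem, so that simultaneously learning all $c_{ij}$ incurs no hidden loss relative to learning them one at a time.
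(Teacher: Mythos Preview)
Your proposal is correct and follows essentially the same route as the paper: reduce to $2\times 2$ invariant blocks indexed by sign patterns, apply QSPE per block so that the phase estimator $\hat\zeta_k$ (with $O(1/(Nd^4))$ variance) yields the linear combination $b_s$ (the paper's $m_i^k$) via the small-angle/delta-method transfer, then invert the full-rank $\pm 1$ system (the paper's $\Lambda=W_s$; your Hadamard-type choice works equally) to recover $\hat c_{ij}$, and finally match the resulting covariance against the Fisher information propagated through the same maps. The paper resolves the obstacle you correctly flag---whether sharing shots across blocks in the parallel version erodes the per-parameter scaling---by explicit computation of the rescaled multinomial variance and the corresponding Fisher information (Supplementary Secs.~V--VI), finding only an $O(n)$ constant-factor overhead with the Cram\'er--Rao bound saturated exactly.
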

Beyond this, the estimators are inherently robust to realistic errors. The estimators decouple the parameters in all invariant subspaces for all independent experiments, resulting in the time-dependent coherent errors on one of them will not affect the estimation accuracy of the other. Moreover, the estimators are designed from Fourier coefficients, enabling simple rescaling procedures in the algorithm to effectively mitigate the depolarization and coherent state preparation errors. As a result, our learning algorithm exhibits full robustness against SPAM errors, decoherence, and time-dependent coherent noise.

The resource efficiency and robustness of our learning algorithm make it well-suited for practical deployment on current quantum hardware. In particular, our method can be applied to infer interatomic distance errors in Rydberg atom arrays, which is an important source of systematic deviation between theoretical predictions and experimental results, as highlighted in prior studies \cite{wurtz2023aquila,dag2024distanceerror,tamura2020analysis,marcuzzi2017facilitation}. While these deviations have been effectively modeled using distance error hypotheses, a first-principles analysis has remained elusive. This is largely due to the fact that such distance errors are difficult to directly capture in experiments, constrained by limited coherence times and the presence of realistic noise. Our algorithm addresses this challenge by requiring only short-time evolutions to amplify the parameters of interest, enabled by estimators that achieve optimal precision scaling and saturate the Cramér–Rao bound. This feature makes it compatible with the coherence times achievable on existing NISQ-era Rydberg platforms. Furthermore, the algorithm’s robustness ensures reliable performance in the presence of realistic system noise. Consequently, our method enables efficient and robust estimation of multiple interatomic distances, which are essential for constructing accurate atom-distance error models and explaining the discrepancies observed in previous Rydberg-based experiments~\cite{dag2024distanceerror}.

\subsection{\label{subsec:methods}Methods}
In this section, we provide a detailed discussion of the learning algorithm for solving Problem~\ref{def:prb 1}. The techniques here are also directly applicable to any other parallel-learnable Hamiltonian. We focus on the full 
parallel-learnable Hamiltonian Eq.~\eqref{eqn:full H} with both interaction terms and the local fields to construct our learning algorithm.

\subsubsection{Two-atom Case}
First, we consider the simplest scenario for the Hamiltonian given in Eq.~\eqref{eqn:full H} with two atoms. The corresponding Hamiltonian is
\begin{equation}
    H(a,c_{12}) = aX_1 + c_{12}Z_1Z_2.
\end{equation}
We aim to estimate $\hc_{12}$ for the interaction term, while $\hat{a}$ can also be inferred via our learning algorithm as a byproduct. This Hamiltonian is native to the Rydberg atom system, where the local field is applied via Rabi frequency $\Omega$ and $c_{12}$ is the atom-atom interaction. Therefore, the learning algorithm can be directly applied to learn the atom-atom interaction and atom distance on the Rydberg device. Details on this implementation will be discussed in Sec.~\ref{sec:experiment proposal}.

Consider the Hamiltonian evolution for time $T$, then the corresponding time evolution operator can be written as 
\begin{align}
    \mathcal{U} &= \exp(-i\int_0^T Hdt)
    = \exp[-i(a^TX_1+c_{12}^TZ_1Z_2)]\label{eqn:two qubit U},
\end{align}
where $a^T:=\int_0^Tadt = aT$ and $c_{12}^T:=\int_0^Tc_{12}dt = c_{12}T$ \footnote{\label{footnote: square pulse} Here we apply the time-independent pulse to the system, the accumulation angle is just the time-integral for the square pulse.}. By simply dividing out the evolution time $T$, we can easily recover the original parameters $(\hat{a},\hc_{12})$. Taylor expanding the above Eq.~\eqref{eqn:two qubit U}, the corresponding unitary is $
    \mathcal{U} = \cos \omega I -i\sin \omega (\frac{a^T}{\omega}X_1+\frac{c_{12}^T}{\omega}Z_1Z_2)$
where $\omega = \sqrt{(a^T)^2+(c_{12}^T)^2}$. Defining the logical subspace as $\ket{0}_l = \ket{00}$ and $\ket{1}_l = \ket{10}$, which spans the invariant subspace $\mathcal{B}$. The unitary restricted to the subspace $\mathcal{B}$ is
\begin{align}\label{eqn:UB_2 atom}
    \mathcal{U}_{\mathcal{B}} = \begin{bmatrix}
         \cos\omega-i\frac{c_{12}^T}{\omega}\sin\omega & -i\frac{a^T}{\omega}\sin\omega  \\
       -i\frac{a^T}{\omega}\sin\omega&  \cos\omega+i\frac{c_{12}^T}{\omega}\sin\omega\\
    \end{bmatrix}.
\end{align}
The standard unitary calibrated using QSPE \cite{dong2025qspe} is in the form of 
\begin{equation}\label{eq:u gate}
    \mathcal{U} = \begin{bmatrix}
        \cos(\theta) e^{-i\zeta} & -i\sin(\theta)e^{i\chi} \\
        -i\sin(\theta)e^{-i\chi} & \cos(\theta)e^{i\zeta} 
    \end{bmatrix}.
\end{equation}
The following mapping can convert one to the other,
\begin{equation}\label{eqn: map}
    \begin{cases}
        \tan(\zeta) =  \frac{c_{12}^T}{\sqrt{(a^T)^2+(c_{12}^T)^2}}\tan(\sqrt{(a^T)^2+(c_{12}^T)^2})\\
        \sin^2(\theta) = (\frac{a^T}{\sqrt{(a^T)^2+(c_{12}^T)^2}})^2\sin^2(\sqrt{(a^T)^2+(c_{12}^T)^2})
    \end{cases},
\end{equation}
and $(a,c_{12}) = (a^T/T,c_{12}^T/T)$. Therefore, we can utilize the QSPE method for the $2\times2$ invariant subspace defined by $(\ket{00},\ket{10})$ for learning both parameters. The details regarding implementation are discussed in the Supplementary Material Sec.~\ref{sec:implement detials for two qubit}. The step-by-step algorithm is shown as follows, see Algorithm \ref{alg:qspe}.
\begin{algorithm}[H]
\caption{Two-atom QSPE-Learning algorithm}\label{alg:qspe}
\begin{algorithmic}
\Require Choose the hyper-parameters for the number of cycles, and sample size $(d,N)$.
\State 1: Compute the control parameter $\omega_j = \frac{j}{2d-1}\pi$ (for $j = 0,1,\cdots,2d-2$).
\State 2: Initiate a complex-valued data vector $\vec{h} \in \mathbb{C}^{2d-1}$.\\
\textbf{Main steps:}\\
\textbf{For} $j \in \{0,1,\cdots,2d-2\}$,\\
\begin{itemize}[label=$\blacktriangleright$, leftmargin=1.5em]
\item Quantum Part:
    \begin{itemize}
        \item Prepare the logical initial state $\ket{+_l}=  \frac{1}{\sqrt{2}}(\ket{00}+\ket{10})$ and $\ket{i_l} = \frac{1}{\sqrt{2}}(\ket{00}+i\ket{10})$.
        \item Define one cycle of evolution: apply the full Hamiltonian evolution $\mathcal{U}$ and subsequently apply the logical Z rotation $Z_l(\omega_i)$ for the specific angle $\omega_i$.
        \item Apply $d$ cycles of evolution to both initial states.
        \item Measure in computational basis and get $x_{j}^{(+)}$ and $x_{j}^{(i)}$ 00's out of $N$ samples for both initial states.
    \end{itemize}
\item Classical Part:
    \begin{itemize}
        \item Estimate the transition probability: $p_X= \frac{x_{j}^{(+)}}{N}$ and $p_Y= \frac{x_{j}^{(i)}}{N}$.
        \item Update $\vec{h}_j \leftarrow p_X(\omega_j)-\frac{1}{2}+i(p_Y(\omega_j)-\frac{1}{2})$.
    \end{itemize}
\end{itemize}
\textbf{end for}\\
\textbf{Post-processing}:
\State 3: Compute the Fourier coefficients $\vec{c} = FFT(\vec{h})$.
\State 4: Compute the phase difference vector $\vec{\Delta}$, where $
\vec{\Delta}= (\Delta_0, \cdots, \Delta_{d-2})^T$ and $\Delta_k = \operatorname{phase}(c_k \overline{c}_{k+1})$. 
\State 5: Compute estimates $(\hat{\theta},\hat{\zeta})$ with
\begin{equation}
    \hat{\theta} = \frac{1}{d}\sum_{k = 0}^{d-1}|c_k| \quad, \hat{\zeta} = \frac{1}{2}\frac{\vec{\mathbb{I}}^T\mathcal{D}^{-1}\vec{\Delta}}{\vec{\mathbb{I}}^T\mathcal{D}^{-1}\vec{\mathbb{I}}},
\end{equation}
where $\mathbb{I}$ is an all-one vector and $\mathcal{D}$ is the $(d-1)\times(d-1)$ discrete Laplacian matrix.
\State 6: Covert the estimated parameters $(\hat{\theta},\hat{\zeta})$ to $(\hat{a},\hat{c}_{12})$ by the mapping given in Eq.~\eqref{eqn: map}.
\end{algorithmic}
\end{algorithm}

\subsubsection{Multi-atom Case}
In this section, we extend the two-atom learning algorithm introduced earlier to construct a generic learning algorithm for the $n$-qubit Hamiltonian, as defined in Eq.~\eqref{eqn:full H}. The idea is to encode the full Hamiltonian into multiple 2×2 subspaces, each solvable with the QSPE as suggested in the previous section. The learning process can be parallelized, which allows us to estimate all coefficients induced by the same qubit in parallel within a single independent experiment round. To be more specific, consider the full Hamiltonian $H_i$, we can define its $2\times 2$ invariant subspaces as follows. Denote the computational bases of $n$-qubit Hamiltonian as $\{\ket{v_1},\ket{v_2}, \cdots, \ket{v_{2^n}}\}$, where $\ket{v_1} = \ket{00\cdots0}~\text{with}~|v_1| = 0$ and $\ket{v_{2^n}} = \ket{11\cdots1}~\text{with}~|v_{2^n}| = n.$ For each $H_i$, we define the encoding subspaces.
\begin{definition}\label{def:encoding}
For an $n$-qubit system with the full Hamiltonian including the $X_i$ term as given in Eq.~\eqref{eqn:full H}, the Hilbert space decomposes into $2^{n-1}$ invariant subspaces. We denote these subspaces by $\{\mathcal{B}_1, \cdots, \mathcal{B}_{2^{n-1}}\}_i$. Each invariant subspace $\mathcal{B}_k$ (for $k \in \{1, \cdots, 2^{n-1}\}$) is two-dimensional and corresponds to a logical subspace. The encoding for each $\mathcal{B}_k$ is defined as:
\begin{align}\label{eqn:def of B basis}
& \ket{v_m} := \ket{\bar{0}}, \quad \ket{v_n} := \ket{\bar{1}}, 
~\text{with}~v_m \oplus v_n = e_i,
\end{align}
where $m, n \in \{1, \cdots, 2^n\}$ with $m \neq n$, and $e_i$ is the bitstring with $1$ at the $i$-th position and $0$ elsewhere. Each pair $(\ket{v_m}_k, \ket{v_n}_k)$ defines one of the $2^{n-1}$ invariant subspaces $\mathcal{B}_k$.
\end{definition}
Under the above encoding, the full Hamiltonian in Eq.\eqref{eqn:full H} can be represented as a block diagonal Hermitian matrix with block size $2$. Moreover, the entries of each $2\times2$ matrix $H_i^k$ restricted to the $k$th invariant subspace can be computed by considering the projection of the $H_i$ onto each $\mathcal{B}_k$. The off-diagonal entries are introduced by the local bit-flip term $X_i$, so they are identically equal to $a_i$ for all invariant subspaces. The diagonal entries are linear combinations of all $\frac{n(n-1)}{2}$ interaction term parameters. Specifically, we can write down the analytical form of the $H_i^k$ as follows.

\begin{lemma}
Consider $H_i^k$, which is the projection of $H_i$ onto invariant subspace $\mathcal{B}_k$ spanned by $\{(\ket{v_m}_k,\ket{v_n}_k)\}$. Let $E_i$ denote the set of edges associated with vertex $i$. Specifically, the edge set is $E_i = \{(i,1),(i,2),\cdots, (i,i-1),(i,i+1),\cdots,(i,n)\}$ in the all-to-all connected system. We classify the interaction terms into two classes:\\ 
1. The different index set, denoted by $E_D$, includes all operators that act on edges connected to vertex $i$, which result in differing signs in the diagonal elements of $H_i^k$. It is defined as
\begin{equation}
E_D = \{(p,q) \mid (p,q) \in E_i\} = \{(p,q) \mid p = i\}.
\end{equation}
2. The same index set, denoted by $E_S$, includes all remaining operators that do not act on the edges connected to vertex $i$, and thus yield the same sign in the diagonal elements. It is defined as
\begin{equation}
E_S = \{(p,q) \mid (p,q) \notin E_i\} = \{(p,q) \mid p \neq i\}.
\end{equation}
Note that the second expression in both definitions of $E_D$ and $E_S$ holds specifically for the case of an all-to-all connected system.

Let $\lambda_{pq}^m$ denote the eigenvalue of the operator $Z_p Z_q$ with respect to the logical zero state $\ket{v_m}$ and set it equal to $\lambda_{pq}$, then the corresponding eigenvalue of the same operator to the logical one state $\ket{v_n}$ is given by
\begin{equation}
     \lambda_{pq}^n = \begin{cases}
         \lambda_{pq} \quad \text{when} \quad (p,q) \in E_S;\\
         -\lambda_{pq} \quad \text{when} \quad (p,q) \in E_D.
     \end{cases}
\end{equation}
The projected Hamiltonian $H_i^k$ has the following analytical matrix form:
\begin{equation}
    H_i^k = \begin{bmatrix}
        \Lambda_+ & a_i\\
        a_i & \Lambda_-
    \end{bmatrix}.
\end{equation}
where $\Lambda_+ = \sum\limits_{(p,q)\in E_S}\lambda_{pq} c_{pq} + \sum\limits_{(p,q)\in E_D}\lambda_{pq} c_{pq} $ and $\Lambda_- = \sum\limits_{(p,q)\in E_S}\lambda_{pq} c_{pq} -\sum\limits_{(p,q)\in E_D}\lambda_{pq} c_{pq} $.
\end{lemma}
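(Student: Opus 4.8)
The plan is to compute the matrix elements of $H_i$ directly in the ordered basis $\{\ket{v_m}_k,\ket{v_n}_k\}$ of the subspace $\mathcal{B}_k$ from Definition~\ref{def:encoding}, treating the bit-flip term $a_iX_i$ and the Ising part $\sum_{p\neq q}c_{pq}Z_pZ_q$ separately and then adding them. First I would establish invariance and pin down the off-diagonal block. Because $v_m\oplus v_n=e_i$, the bitstrings $v_m$ and $v_n$ agree in every coordinate except the $i$-th, so $X_i\ket{v_m}=\ket{v_n}$ and $X_i\ket{v_n}=\ket{v_m}$; each $Z_pZ_q$ is diagonal in the computational basis and hence fixes $\ket{v_m}$ and $\ket{v_n}$ up to a phase. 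Consequently $H_i$ maps $\mathcal{B}_k$ into itself, $H_i^k$ is a well-defined $2\times2$ matrix, and the $X_i$ term contributes exactly $a_i$ times the Pauli-$X$ matrix: it supplies the off-diagonal entries $a_i$ and nothing on the diagonal since $\bra{v_m}X_i\ket{v_m}=\bra{v_n}X_i\ket{v_n}=0$.

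Next I would handle the diagonal, which comes entirely from the Ising part. Writing $\lambda_{pq}^m:=\bra{v_m}Z_pZ_q\ket{v_m}=(-1)^{(v_m)_p+(v_m)_q}$ and likewise $\lambda_{pq}^n$ for $v_n$, the two diagonal entries of $H_i^k$ are $\sum_{p\neq q}c_{pq}\lambda_{pq}^m$ and $\sum_{p\neq q}c_{pq}\lambda_{pq}^n$. The key combinatorial step is the relation between $\lambda_{pq}^n$ and $\lambda_{pq}^m$: since $v_n$ differs from $v_m$ only in bit $i$, if neither $p$ nor $q$ equals $i$ (that is, $(p,q)\in E_S$) then $(v_n)_p=(v_m)_p$ and $(v_n)_q=(v_m)_q$, so $\lambda_{pq}^n=\lambda_{pq}^m$; whereas if the edge is incident to vertex $i$ (that is, $(p,q)\in E_D=E_i$, which in the all-to-all graph means exactly one of $p,q$ equals $i$, there being no self-loops) then precisely one of the two bits flips, so $\lambda_{pq}^n=-\lambda_{pq}^m$. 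Setting $\lambda_{pq}:=\lambda_{pq}^m$ and splitting the sum over the disjoint union $E_S\sqcup E_D$ then gives $\Lambda_+=\sum_{E_S}\lambda_{pq}c_{pq}+\sum_{E_D}\lambda_{pq}c_{pq}$ for the $\ket{\bar0}$ entry and $\Lambda_-=\sum_{E_S}\lambda_{pq}c_{pq}-\sum_{E_D}\lambda_{pq}c_{pq}$ for the $\ket{\bar1}$ entry, which is the claimed form.

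I do not anticipate a genuine obstacle; this is a direct computation once the encoding is unpacked. The one point requiring care is the bookkeeping of signs — ensuring that ``$(p,q)$ incident to $i$'' produces a single bit flip (hence one sign change) rather than two, which is exactly where the absence of self-interaction terms and the all-to-all structure (so that $E_D$ coincides with the full star $E_i$ at vertex $i$) enter. A secondary sanity check, which I would include, is label-independence: the assignment of $\bar0$ versus $\bar1$ inside $\mathcal{B}_k$ is arbitrary, and swapping it simply exchanges $\Lambda_+\leftrightarrow\Lambda_-$, consistent with $E_D$ being precisely the set of edges on which the sign of $\lambda_{pq}$ is not preserved; likewise the values $\lambda_{pq}$ depend on which subspace $\mathcal{B}_k$ is chosen, but the $\pm$ rule governing $\lambda_{pq}^n$ versus $\lambda_{pq}^m$ does not.
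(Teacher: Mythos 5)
Your proposal is correct and follows the same route the paper (implicitly) takes: the paper states this lemma without a formal proof, justifying it by exactly the direct computation you spell out — the $X_i$ term supplies the off-diagonal $a_i$ because $v_m\oplus v_n=e_i$, while the diagonal $Z_pZ_q$ terms pick up a sign flip precisely when the edge is incident to vertex $i$. Your sign bookkeeping, including the observation that an incident edge flips exactly one of the two bits (no self-loops), is sound, so nothing is missing.
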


The information of an analytical linear combination of our desired estimators is encoded at each invariant subspace of the Hamiltonian, and then we proceed to use the QSPE method to infer them with respect to each block. More specifically, let us denote the time-integrals of distinct entries in $H_i^k$ as follows,
    \begin{align}\label{eqn: integral}
        (A_i,B_i,C_i):&=\int_0^T(a_i,\sum\limits_{(p,q)\in E_D}\lambda_{pq} c_{pq} ,\sum\limits_{(p,q)\in E_S}\lambda_{pq} c_{pq})dt;\\
        & = (a_iT,\sum\limits_{(p,q)\in E_D}\lambda_{pq} c_{pq}T,\sum\limits_{(p,q)\in E_S}\lambda_{pq} c_{pq}T),
    \end{align} where $T$ is the evolution time \footnotemark[\value{footnote}].
    The time-evolution operator of the $k$th block is 
    \begin{equation}
        \mathcal{U}^k = \begin{bmatrix}
        \cos\omega-i\sin\omega\frac{B_i}{\omega} & -i\sin\omega\frac{A_i}{\omega}\\
        -i\sin\omega\frac{A_i}{\omega}& \cos\omega+i\sin\omega\frac{B_i}{\omega}\\
    \end{bmatrix},
    \end{equation}
    where $\omega =\sqrt{ A_i^2+B_i^2}.$
Correspondingly, the following mapping can lead to the conversion from the unitary to the one used in the QSPE method,
\begin{equation}\label{eqn: map_j}
    \begin{cases}
        \tan(\zeta) =  \frac{B_i}{\sqrt{A_i^2+B_i^2}}\tan(\sqrt{A_i^2+B_i^2})\\
        \sin^2(\theta) = (\frac{A_i}{\sqrt{A_i^2+B_i^2}})^2\sin^2(\sqrt{A_i^2+B_i^2})
    \end{cases}.
\end{equation}
For each specific block, both $A_i$ and $B_i$ can be learned. Consequently, the linear combination of these parameters is effectively learned through Eq.~\eqref{eqn: integral}. Since there are in total $|E_D|$ parameters connecting to the $i$th qubit, we require $|E_D|$ independent linear combinations of the parameters $c_{pq}$ for $(p,q) \in E_D$, obtained from $|E_D|$ distinct invariant subspaces. Specifically, we need to identify $(n-1)$ invariant subspaces whose diagonal entries differ in a linearly independent manner for the all-to-all connected Hamiltonian. We justify the existence of such subspaces and provide a method for selecting these $(n-1)$ linearly independent subspaces in the following theorem.
\begin{figure*}[t]\centering\includegraphics[width=\linewidth]{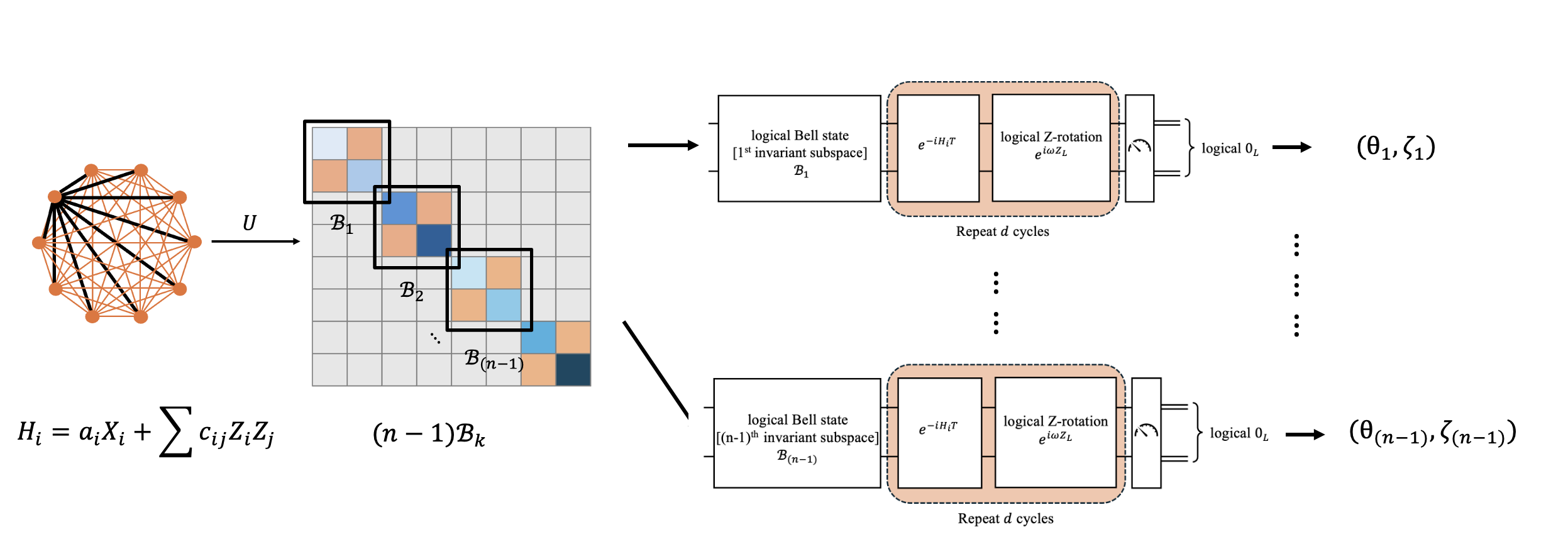}
        \caption{General framework for multi-qubit parallel learning. The $i$th full Hamiltonian $H_i$ is used to learn the $(n-1)$ parameters associated with the $i$th qubit. First, $H_i$ is mapped to a block-diagonal form with block size $2$, of which $(n-1)$ blocks are employed for estimation. Each block corresponds to an invariant subspace $\mathcal{B}_k$ with Definition \ref{def:encoding}. On each block for invariant subspace $\mathcal{B}_k$, QSPE is applied to estimate a parameter pair $(\theta_k,\zeta_k)$ encoding the Hamiltonian information via Eq.~\eqref{eqn: map_j}. The QSPEs across all invariant subspaces can be executed in parallel by evolving under the multi-qubit Hamiltonian, followed by the logical $Z_L(\omega)$ for phase accumulation. When projected onto each invariant subspace, the dynamics reduce to QSPE on a $2\times 2$ matrix.
}
        \label{fig:exp}
    \end{figure*}
    
\begin{theorem}
For the $n$-qubit full Hamiltonian $H_i$ defined in Eq.~\eqref{eqn:full H}, there exist $2^{n-1}$ distinct nonzero invariant subspaces, each associated with an encoding basis of the form $(\ket{v_m}_s, \ket{v_n}_s)$, where $s = 1, \dots, 2^{n-1}$ and $v_m \oplus v_n = e_i$. We use one of the logical zero state $\ket{v_m}$ or logical one state $\ket{v_n}$ from each pair to label the corresponding invariant subspace. Define the set
    \(
    \mathcal{S}_0 = \left\{ \ket{v_m}_{1}, \dots, \ket{v_m}_{2^{n-2}} \,\middle|\, b_i b_{i+1} = 00 \right\},
    \)
    which contains those basis states with $0$ in both the $i$-th and $(i+1)$-th positions. The complementary set, denoted by
    \(
    \mathcal{S}_1 = \left\{ \ket{v_n}_{2^{n-2}+1}, \dots, \ket{v_n}_{2^{n-1}} \,\middle|\, b_i b_{i+1} = 11 \right\},
    \)
    includes the remaining basis states where both the $i$-th and $(i+1)$-th bits are $1$.
    Then we get the following properties:
    \begin{enumerate}
        \item $|\mathcal{S}_0|+|\mathcal{S}_1| = 2^{n-1}$ representing all non-zero invariant subspaces.
        \item Define the coefficient vector $w_k: = [\lambda_{i1},\cdots,\lambda_{in}]^T$ representing the coefficient of the linear combination for the different sign term $\sum\limits_{(p,q)\in E_D}\lambda_{pq} c_{pq}$ for basis state $\ket{v_k}$. The space spanned by the coefficient vectors associated with $\mathcal{S}_0$ is the same as $\mathcal{S}_1$.
        \item Define the coefficient matrix $W:= [w_1, \cdots, w_{2^{n-2}}]$ by putting all coefficient vector for all basis in $\mathcal{S}_0$. It is a full rank matrix with size $(n-1)\times 2^{n-2}$.
        \item By selecting $(n-1)$ linearly independent columns indexed by a subset $\mathcal{C} \subset \{1, \dots, 2^{n-2}\}$, from the coefficient matrix, one can solve for the $(n-1)$ parameters. The corresponding invariant subspaces chosen are those spanned by the $(n-1)$ distinct basis vectors $\ket{v_m}_s$ with $s \in \mathcal{C}$.
    \end{enumerate}
\end{theorem}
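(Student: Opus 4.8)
The plan is to dispatch the four properties in order, reducing Properties 1--3 to elementary bitstring combinatorics and Property 4 to linear algebra together with the readout identity of the preceding lemma. Throughout, write a computational basis state as $\ket{v}=\ket{b_1\cdots b_n}$, note $Z_pZ_q\ket{v}=(-1)^{b_p\oplus b_q}\ket{v}$ so that $\lambda_{ij}=(-1)^{b_i\oplus b_j}$ on the representative $\ket{v_m}$ of a block, and collect the couplings at vertex $i$ into $\mathbf{c}_i:=(c_{ij})_{j\neq i}\in\mathbb{R}^{n-1}$, so that $w_s^{T}\mathbf{c}_i=\sum_{j\neq i}\lambda_{ij}c_{ij}=\tfrac{1}{2}(\Lambda_+-\Lambda_-)$ in the lemma's notation. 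For \emph{Property 1} the first step is a bijection argument: since $X_i$ couples $\ket{v}$ only to $\ket{v}\oplus e_i$, the $2^n$ basis states split into $2^{n-1}$ mutually orthogonal $2\times 2$ blocks, all with off-diagonal $a_i\neq 0$; the two states in a block agree on every bit except bit $i$, in particular they share bit $i+1$, so exactly one of them has $b_ib_{i+1}=00$ (when the shared $(i+1)$-bit is $0$) or $b_ib_{i+1}=11$ (when it is $1$). Labelling each block by that member yields a bijection from the $2^{n-1}$ blocks onto $\mathcal{S}_0\sqcup\mathcal{S}_1$; the remaining $n-2$ bits being unconstrained gives $|\mathcal{S}_0|=|\mathcal{S}_1|=2^{n-2}$.

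For \emph{Property 2} I will just write out the two families of coefficient vectors. If $\ket{v_m}\in\mathcal{S}_0$ then $b_i=b_{i+1}=0$, so $(w_m)_{i+1}=+1$ and $(w_m)_j=(-1)^{b_j}$ for $j\notin\{i,i+1\}$; as $\ket{v_m}$ ranges over $\mathcal{S}_0$ these run over all of $\{+1\}\times\{\pm1\}^{n-2}$. If $\ket{v_{m'}}\in\mathcal{S}_1$ then $b'_i=b'_{i+1}=1$, so $(w_{m'})_{i+1}=(-1)^{1\oplus 1}=+1$ and $(w_{m'})_j=-(-1)^{b'_j}$ for $j\notin\{i,i+1\}$, which again exhausts $\{+1\}\times\{\pm1\}^{n-2}$ (concretely $w_{m'}=w_m$ once the free bits of $\ket{v_{m'}}$ are the complements of those of $\ket{v_m}$). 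Hence the two families are the identical set of vectors, so in particular they span the same subspace.

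\emph{Property 3} is then immediate from this description: the columns of $W$ are precisely the $2^{n-2}$ vectors with a $+1$ pinned in coordinate $i+1$ and arbitrary signs elsewhere. Using the all-ones column $\mathbf 1$ and, for each free coordinate $\ell$, the column $u^{(\ell)}$ carrying a $-1$ only in coordinate $\ell$, one gets $\tfrac{1}{2}(\mathbf 1-u^{(\ell)})=e_\ell$ for each free $\ell$ and $e_{i+1}=\mathbf 1-\sum_{\ell\ \mathrm{free}}e_\ell$, so the columns of $W$ already contain a basis of $\mathbb{R}^{n-1}$ and $\operatorname{rank}W=n-1$, which is full since $2^{n-2}\ge n-1$. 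For \emph{Property 4}, choose via Property 3 a set $\mathcal{C}\subset\{1,\dots,2^{n-2}\}$ of size $n-1$ with $W_{\mathcal{C}}:=[\,w_s\,]_{s\in\mathcal{C}}$ invertible, and run QSPE on the $n-1$ invariant subspaces labelled by $\mathcal{C}$. By the lemma, after dropping the piece $\sum_{E_S}\lambda_{pq}c_{pq}$ proportional to the block identity (a global phase invisible to the QSPE measurements), subspace $s$ yields $B_i^{(s)}=T\,w_s^{T}\mathbf{c}_i$; stacking these gives $\mathbf B=T\,W_{\mathcal{C}}^{T}\mathbf{c}_i$, hence $\mathbf{c}_i=T^{-1}(W_{\mathcal{C}}^{T})^{-1}\mathbf B$, and doing this for each $i$ recovers all of $\mathbf c$.

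The delicate points I expect are, first, verifying that the assignment in Property 1 is a genuine bijection -- every block gets exactly one representative, which is exactly why the split is governed by the shared bit $i+1$ rather than by bit $i$ (for $i=n$ one pins any other coordinate) -- and second, the bookkeeping in Property 4: one must argue carefully that QSPE applied to block $s$ returns precisely $w_s^{T}\mathbf{c}_i$ up to the known factor $T$, i.e. that the same-sign sum over $E_S$ decouples as an unobservable block-global phase and leaves the extracted $(\theta_s,\zeta_s)$ untouched. The rank count and matrix inversion are routine; I anticipate the main effort lies in stating Property 1's bijection cleanly and in making the $E_S$-decoupling in Property 4 airtight.
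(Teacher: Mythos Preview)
Your proposal is correct and follows essentially the same route as the paper: both arguments handle Property~1 by a pairing/counting on the bit $b_{i+1}$ shared by the two states of a block, Property~2 by observing that $\lambda_{i,i+1}=+1$ is pinned while the remaining $\lambda_{ij}$ range over all sign patterns identically for $\mathcal{S}_0$ and $\mathcal{S}_1$, and Property~4 by inverting an $(n-1)\times(n-1)$ submatrix of $W$ to recover $\mathbf{c}_i$.

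The only noteworthy difference is in the full-rank verification for Property~3. The paper selects the same $(n-1)$ columns you do (the all-ones column together with the $n-2$ single-flip columns $u^{(\ell)}$), assembles them into a matrix $W_s$, and computes $\det W_s$ directly: subtracting the all-ones column from every other column reduces $W_s$ to a matrix with determinant $(-2)^{n-2}\neq 0$. Your span argument $\tfrac12(\mathbf 1-u^{(\ell)})=e_\ell$, $e_{i+1}=\mathbf 1-\sum_{\ell}e_\ell$ reaches the same conclusion and is arguably cleaner, since it avoids any determinant bookkeeping; the paper's route has the minor advantage of naming the explicit submatrix $W_s$ that is later reused as the coefficient matrix in Eq.~\eqref{eq:linear sys}. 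Your remark that the $E_S$ sum contributes only a block-global phase is also correct and is exactly how the paper silently drops $C_i$ when writing $\mathcal{U}^k$.
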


\begin{proof}
    The first property is straightforward. Since there are only four possible combinations of the bit pair $b_i b_{i+1}$, namely $\{00, 01, 10, 11\}$, and each encoding pair for the invariant subspace satisfies $v_m \oplus v_n = e_i$, the pairing structure effectively partitions the full basis of the Hilbert space into four equal subsets. As a result, we have $|\mathcal{S}_0|= |\mathcal{S}_1| = 2^{n}/4 = 2^{n-2}$. 
    
    The second property follows from the observation that
    \[
    Z_i Z_{i+1} \ket{s} = Z_i Z_{i+1} \ket{t},
    \]
    for all $\ket{s} \in \mathcal{S}_0$ and $\ket{t} \in \mathcal{S}_1$. Moreover, since the bit values at all other positions are identical between elements of $\mathcal{S}_0$ and $\mathcal{S}_1$, applying $Z_i Z_j$ with $j \in \{1, \dots, n\},\ j \neq i, i+1$ yields the same set of eigenvalues $\lambda_{ij}$ across both sets. Therefore, the spaces spanned by coefficient vectors from $\mathcal{S}_0$ and $\mathcal{S}_1$ are the same. Therefore, it is sufficient to focus on one of the basis sets to extract all the necessary information to solve for the parameters. In the discussion of the third and fourth properties, we restrict our attention to $\mathcal{S}_0$.
    
    Consider the $Z_iZ_{i+1}$ operator, for any basis in $\mathcal{S}_0$, 
    \begin{equation}
        Z_iZ_{i+1}\ket{b_1\cdots00\cdots b_n} = \ket{b_1\cdots00\cdots b_n}.
    \end{equation}
    So $\lambda_{i,i+1} = 1$ for all basis in $\mathcal{S}_0$, i.e., the $i$th term is $1$ for all $w_k$. Each $w_k$ contains the coefficients of the different sign terms, and thus has dimension $(n-1)$ with $i$th entry equal to $1$. Besides, there are $2^{n-2}$ elements in $\mathcal{S}_0$, resulting the coefficient matrix $W$ has the size $(n-1)\times 2^{n-2}$. To prove the matrix $W$ is of full rank, we show there exists a submatrix $W_{s} \in \mathbb{R}^{(n-1)\times (n-1)}$ that is of full rank. We choose the submatrix $W_s$ as follows. 
\begin{definition}[Choice of invariant subspace and its coefficient matrix]\label{def:invariant subspace}
    The logical zero basis states for $(n-1)$ invariant subspaces are
    \begin{equation}
\left\{
\begin{aligned}
    &\ket{v_m}_{i} = \ket{00\cdots 0\cdots00}, &&\text{all-zero state} \\
    &\ket{v_m}_{k} = \ket{01\cdots 0\cdots00}, &&\text{all entries are } 0 \text{ except the $k$th}
\end{aligned}
\right.
\end{equation}
for $k \in \{1, \cdots, n\},\; k \neq i, i+1$.
\end{definition}
Applying $Z_iZ_j$ where $j \in \{1,\cdots,n\}$ and $j\neq i$ to the bases $\{\ket{v_m}_{1},\cdots,\ket{v_m}_{{n-1}}\}$, the corresponding coefficient matrix is
\begin{equation}
    W_s = \begin{bmatrix}
        -1 & 1 & 1 & \cdots & 1\\
        1 & -1 & 1 &\cdots  & 1\\
        1 & 1  & 1 & \cdots & 1\\
        \vdots &\vdots & \vdots & \vdots&\vdots\\
        1 & 1 & 1& \cdots & -1
    \end{bmatrix},
\end{equation}
where for the $k$th column ($k = 1, \cdots, n$ and $k \neq i,\, i{+}1$), the vector $w_k$ has an entry of $-1$ at the $k$th row and $1$ in all other entries, while the $i$th column consists entirely of $1$'s. To prove this submatrix $W_s$ is of full rank, we compute this determinant. Subtracting column $i$ from all other columns, the determinant of this transformed matrix is $1\times (-2)^{n-2} \neq 0$, thus the $W_s$ matrix is full-rank.

Finally, solving for $(n-1)$ different coefficients requires forming $(n-1)$ linearly independent equations. Since there are in total $(n-1)$ linearly independent columns from a total $2^{n-1}$ column matrix given property three, we can pick up any set of linearly independent ones, for example, as defined in Definition~\ref{def:invariant subspace}. The chosen invariant subspaces $\{\mathcal{B}_1,\cdots,\mathcal{B}_{(n-1)}\}$ are then spanned by encoding pairs $\{(\ket{v_m}_{1},\ket{v_n}_{1}),\cdots (\ket{v_m}_{{{n-1}}},\ket{v_n}_{{n-1}})\}.$ 
\end{proof}
The above discussion confirms that all parameters associated with the same qubit (WOLG, saying atom $i$) can be recovered in one experiment round with the full Hamiltonian $H_i$. Therefore, the interaction parameters of all-to-all connection Hamiltonian given in Eq.~\eqref{eqn: all2all H} can be recovered with $O(n)$ independent rounds of experiments. As a result, our learning algorithm can achieve a quadratic speedup in terms of the number of experiment rounds in the whole process, as stated in the following Theorem \ref{thm:quadratic speedup}. 

For each Hamiltonian term $H_i$, one can exploit $(n-1)$ invariant subspaces to extract the corresponding $(n-1)$ parameters associated with the $i$th qubit. However, when designing the learning algorithm, we can adopt an iterative approach to maximize the information gained in each round. As the iteration over $i$ progresses, the number of unknown parameters decreases to $(n-i)$ for each $H_i$. Consequently, it suffices to employ only $(n-i)$ invariant subspaces at each step by preparing initial states restricted to these subspaces and evolving the system dynamics accordingly. In this case, the coefficient matrix is updated to $W_s^i$, an $(n-i)\times (n-1)$ matrix consisting of $(n-i)$ linearly independent rows of $W_s$, corresponding precisely to the selected $(n-i)$ invariant subspaces.

The learning algorithm within each invariant subspace follows the same procedure as described in the two-atom case in Algorithm~\ref{alg:qspe}, while extending to multiple parallel invariant subspaces. We introduce modifications to both the state preparation procedure and the implementation of the logical operation $Z_L(\omega)$, as detailed below. Furthermore, in accordance with the capabilities of the quantum hardware, we develop two distinct algorithmic variants: one designed for analog–digital hybrid architectures and another for fully analog systems. In addition, the algorithm can also be executed on a fully digital quantum device by replacing the Hamiltonian evolution step in the hybrid mode with quantum gates for learning the parameters of the quantum gates. 

\paragraph{Analog-digital-hybrid learning algorithm}
In the analog–digital hybrid mode, the quantum dynamics is assumed to consist not solely of continuous-time Hamiltonian evolution but rather a combination of both continuous evolution and digitized gate operations. This flexibility enables the simultaneous implementation of the logical operator $Z_L(\omega)$ across $(n-i)$ invariant subspaces.

\begin{definition}[State preparation in hybrid setting]\label{def:initial state}
Consider the invariant subspaces spanned by 
\[
\left\{ (\ket{v_m}_{1}, \ket{v_n}_{1}), \cdots, (\ket{v_m}_{{n-i}}, \ket{v_n}_{{n-i}}) \right\},
\]
then the logical states \( \ket{+_L} \) and \( \ket{i_L} \) are defined as
\begin{equation}
\left\{
\begin{aligned}
    \ket{+_L} &= \frac{1}{\sqrt{2(n-i)}}\sum_{j = 1}^{n-i} (\ket{v_m}_{j} + \ket{v_n}_{j}) \\
    \ket{i_L} &= \frac{1}{\sqrt{2(n-i)}}\sum_{j = 1}^{n-i} (\ket{v_m}_{j} + i\ket{v_n}_{j})
\end{aligned}
\right.
\label{eq:logical_states}
\end{equation}
\end{definition}
The initial state corresponds to the superposition of the Bell states on $(n-i)$ invariant subspaces, where $i = 1,\cdots (n-1)$.

\begin{definition}[Logical $Z_L$ in hybrid setting]\label{def:logical z}
The logical $Z_L(\omega)$ is implemented by
\begin{equation}
    Z_L(\omega) = \exp(-i\omega Z_i),
\end{equation}
where $Z_i =  I \otimes Z \otimes \cdots \otimes I$ acting $Z$ on $i$th qubit, which performed as $Z(\omega)$ at each invariant subspaces of the full Hamiltonian. 
\end{definition}
In this setting, the logical operator $Z_L(\omega)$ is realized as a digital gate, during which the analog Hamiltonian evolution is suspended. Based on these modifications, we present the Analog-digital-hybrid parallel learning algorithm \ref{alg:qspe-parallel}. for Problem \ref{def:prb 1}.
\begin{algorithm}[H]
\caption{Analog-digital-hybrid parallel learning algorithm}\label{alg:qspe-parallel}
\begin{algorithmic}
\For{$i \gets 1$ \textbf{to} $n-1$}\\  
    \quad 1: Prepare $H_i = a_iX_i + \sum_{{i,j \in\{1,\cdots,n\}} \atop {i\neq j}} c_{ij} Z_iZ_j$, the $i$th full Hamiltonian which is parallel-learnable.\\
    \quad 2: Prepare $UH_iU^{\dagger}$.
    \Comment{turning $H$ into direct sum form by applying realizable transformation $U$, $U = I$ in Problem \ref{def:prb 1}}\\
    \quad 3: Choose $(n-i)$ invariant subspace of $H_i$. \Comment{Infer $(n-i)$ parameters, $\vec{c}_i = \{c_{i(i+1)},\cdots,c_{in}\}$.}\\
    \quad \textbf{for} $j \leftarrow 0$ \textbf{to} $2d-2$ \textbf{do}
\begin{itemize}[label=$\blacktriangleright$, leftmargin=3em]
    \item Quantum Part:
    \begin{itemize}
        \item Prepare the initial state $\ket{+_L}$ and $\ket{i_L}$ according to Definition \ref{def:initial state}.
        \item Define one cycle of evolution: apply the full Hamiltonian evolution for time $T$ and subsequently apply the logical Z rotation gate $Z_L(\omega_j)$ according to Definition \ref{def:logical z}.
        \item Apply $d$ cycles of evolution on both initial states.
        \item Measure in computational basis and collect the outcome bitstrings. Count the bitstring number corresponds to the logical zero state $\{\ket{v_m}_k\}_{k = 1,\cdots,n-i}$ for all $k$ invariant subspaces and denote the number as $x_{\omega_j,k}^{(+)}$ and $x_{\omega_j,k}^{(i)}$ for both initial states.
    \end{itemize}
    \item Classical Part:
    
    \quad \textbf{for} $k \leftarrow 1$ \textbf{to} $n-i$ \textbf{do}
    \begin{itemize}
        \item Estimate the transition probability: $\hat{p}_{X,k}= \frac{x_{\omega_j,k}^{(+)}}{N}$ and $\hat{p}_{Y,k}= \frac{x_{\omega_j,k}^{(i)}}{N}$.
        \item Update the reconstruction function with rescaled probability $\vec{h}_j^k \leftarrow [(n-i)\hat{p}_{X,k}-\frac{1}{2}]+i[(n-i)\hat{p}_{Y,k}-\frac{1}{2}]$.
        \item Use the same post-processing method in Algorithm \ref{alg:qspe} to get $\{\theta_k,\zeta_k\}$ for each subspace.
        \item Solve for $B_k$, the time integral of the linear combination of the elements in $\vec{c}_{i} $ via $\{\theta_k,\zeta_k\}$, according to Eq.~\eqref{eqn: map_j}.
    \end{itemize}
\end{itemize}
  \quad  \textbf{end for}\\
    \quad 4: Solve for $(n-i)$ unknown parameters $\vec{c}_i$ with the linear equation
    \begin{equation}\label{eq:linear sys}
        [B_1,\cdots,B_{n-i}]^T = W_s^i[\hc_{i1}T,\cdots,\hc_{in}T]^T,
    \end{equation}
    where $[\hc_{i1},\cdots,\hc_{i(i-1)}]$ is estimated in previous iterations.
        \EndFor
\end{algorithmic}
\end{algorithm}
From the above parallel-learning algorithm, we notice that the total number of parameters can be learned through $(n-1)$ independent experiments is
\(
    (n-1)+(n-2)+\cdots +1 = n(n-1)/2.
\)
Thus, the algorithm learns the $O(n^2)$ coefficients in $O(n)$ rounds, leading to a quadratic speedup compared to \cite{huang2023learning}. We also emphasize that the \( (n-1) \) full Hamiltonians used in the algorithm include all interactions, enabling in-situ learning of all interaction parameters simultaneously. 

\paragraph{Fully analog learning algorithm}
The learning algorithm can also be implemented in a fully analog setting, that is, the quantum dynamics are generated by continuous-time Hamiltonian evolution without any discrete quantum gates. This setting is particularly relevant to the capabilities of Rydberg-atom quantum simulators, where the inter-atomic interactions are determined by the spatial separation between atoms and remain always-on unless the atoms are physically rearranged during the evolution. In this case, the challenge is to implement the logical $Z_L(\omega)$ rotation in the presence of the always-on $ZZ$ interactions. Assuming limited controllability of the device, with only tunable single-qubit $Z$ terms available, it is not possible to realize the logical operation $Z_L$ simultaneously across all $(n-1)$ invariant subspaces. Instead, we implement the logical $Z_L^k$ rotation targeting one specific invariant subspace $\mathcal{B}_k$ at a time.

\begin{definition}[Logical $Z_L^k$ in analog setting]\label{def:logical-analog}
For the $k$th invariant subspace, the logical $Z_L(\omega)$ can be implemented with the Hamiltonian 
\begin{equation}
    H_{Z_L^k} = b_kZ_k+\sum_{{i,j \in\{1,\cdots,n\}} \atop {i\neq j}} c_{ij} Z_iZ_j,\notag
\end{equation}
where $b_k = \omega - \Lambda_+$. The logical operation is $Z_L^k(\omega) = \exp(-i H_{Z_L^k}t)$. In comparison with the full Hamiltonian given in Eq.~\eqref{eqn:full H}, this Hamiltonian is realized by suspending the single-qubit $X$ terms and activating the single-qubit $Z$ terms, while retaining all two-local $ZZ$ interactions.
\end{definition}
The implementation of the logical operator $Z_L^k$ is fully analog, thereby enabling the learning algorithms to be executed with minimal requirements on the quantum device. Moreover, since all Hamiltonian interactions remain always-on, this setting naturally supports in-situ learning of all Hamiltonian terms. However, this trade-off prevents the realization of the logical gate $Z_L$ across all invariant subspaces simultaneously, resulting in an $O(n^2)$ overhead in the number of experiments. Fortunately, in this mode, the initial states of the algorithm can be prepared specifically for the $k$th invariant subspace, rather than employing a superposition of $O(n)$ Bell pairs, thereby reducing the sampling overhead. To be more concrete, the initial state for the $k$th invariant subspace is defined as
\begin{definition}[State preparation in analog setting]\label{def:state prep-analog}
For the $k$th invariant subspace, which is spanned by the basis $\{\ket{v_m}_{k},\ket{v_n}_{k}\}$, then the logical states $\ket{+_L^k}$ and $\ket{i_L^k}$ are defined as
\begin{equation}
\left\{
\begin{aligned}
    \ket{+_L^k} &= \frac{1}{\sqrt{2}} (\ket{v_m}_{k} + \ket{v_n}_{k}) \\
    \ket{i_L^k} &= \frac{1}{\sqrt{2}}(\ket{v_m}_{k} + i\ket{v_n}_{k})
\end{aligned}
\right.
\label{eq:logical_states_analog}
\end{equation}
\end{definition}
Compared to Eq.\eqref{eq:logical_states}, the normalization factor is $(n-i)$ times larger, which will reduce the estimation variance by a factor of $n$. As a result, the total evolution time required by the fully analog learning algorithm is not expected to be significantly longer than that of the analog–digital hybrid mode. More details on the total evolution time scaling can be found in Supplementary Material Sec.~\ref{apx:total time}. 
To conclude, we present the fully analog in-situ learning algorithm as follows in Algorithm \ref{alg:qspe-analog}.
\begin{algorithm}[H]
\caption{Fully analog in-situ learning algorithm}\label{alg:qspe-analog}
\begin{algorithmic}
\For{$i \gets 1$ \textbf{to} $n-1$}\\  
    \quad 1: Prepare $H_i = a_iX_i + \sum_{{i,j \in\{1,\cdots,n\}} \atop {i\neq j}} c_{ij} Z_iZ_j$, the $i$th full Hamiltonian which is parallel-learnable.\\
    \quad 2: Prepare $UH_iU^{\dagger}$.
    \Comment{turning $H$ into direct sum form by applying realizable transformation $U$, $U = I$ in Problem \ref{def:prb 1}}\\
    \quad 3: Choose $(n-i)$ invariant subspace of $H_i$.\Comment{Infer $(n-i)$ parameters, $\vec{c}_i = \{c_{i(i+1)},\cdots,c_{in}\}$}\\
    \quad \textbf{for} $k \leftarrow 1$ \textbf{to} $n-i$ \textbf{do}\\
    \quad \quad \textbf{for} $j \leftarrow 0$ \textbf{to} $2d-2$ \textbf{do}
\begin{itemize}[label=$\blacktriangleright$, leftmargin=4em]
    \item Quantum Part:
    \begin{itemize}
        \item Prepare the initial state $\ket{+_L}$ and $\ket{i_L}$ according to Definition \ref{def:state prep-analog}.
        \item Define one cycle of evolution: apply the full Hamiltonian evolution for time $T$ and subsequently apply the logical Z rotation gate $Z_L(\omega_j)$ according to Definition \ref{def:logical-analog}.
        \item Apply $d$ cycles of evolution on both initial states.
        \item Measure in computational basis and collect the outcome bitstrings. Count the bitstring number corresponds to the logical zero state $\{\ket{v_m}_k\}$ and denote the number as $x_{\omega_j}^{(+)}$ and $x_{\omega_j}^{(i)}$ for both initial states.
    \end{itemize}
    \item Classical Part:
    \begin{itemize}
        \item Estimate the transition probability: $\hat{p}_{X,k}= \frac{x_{\omega_j,k}^{(+)}}{N}$ and $\hat{p}_{Y,k}= \frac{x_{\omega_j,k}^{(i)}}{N}$.
        \item Update the reconstruction function with rescaled probability $\vec{h}_j^k \leftarrow [\hat{p}_{X,k}-\frac{1}{2}]+i[\hat{p}_{Y,k}-\frac{1}{2}]$.
        \item Use the same post-processing method in Algorithm \ref{alg:qspe} to get $\{\theta_k,\zeta_k\}$ .
        \item Solve for $B_k$, the time-integral of the linear combination of the elements in $\vec{c}_{i} $ via $\{\theta_k,\zeta_k\}$, according to Eq.~\eqref{eqn: map_j}.
    \end{itemize}
\end{itemize}
  \quad\quad   \textbf{end for}\\
  \quad  \textbf{end for}\\
    \quad 4: Solve for $(n-i)$ unknown parameters $\vec{c}_i$ with the linear equation \eqref{eq:linear sys}, where $[\hc_{i1},\cdots,\hc_{i(i-1)}]$ is estimated in previous iterations.
        \EndFor
\end{algorithmic}
\end{algorithm}
To validate the correctness of our learning algorithm in the multi-atom setting, we simulate its performance on all-to-all connected Hamiltonians of varying sizes. In Figure.~\ref{fig:fig1}, we learn the all-to-all connected Hamiltonian for $5$ qubit with Algorithm \ref{alg:qspe-parallel}. With $d=10$ cycle and $N_{\text{shot}} = 10^4$ shots, the learning algorithm successfully infers all the Hamiltonian parameters. While in Figure.~\ref{fig:fig2}, we further demonstrate the algorithm's stable performance across different system sizes. The black bold link in Figure.~\ref{fig:fig2} indicates which interaction parameter is presented, but the performance for all of the parameters is identical. Since only one of the interactions is of interest here, we utilize Algorithm \ref{alg:qspe-analog} while choosing two specific invariant subspaces to infer that particular parameter. With $N_{\text{shot}} = 10^5$ shots and $N_{\text{bootstrap}} = 10^3$ bootstraps, which is used to estimate the variance of the estimator, we notice the variance of the estimators scales as $1/d^{4}$ as suggested in Theorem \ref{thm:heisenberg limit}.

\begin{figure}[htbp]
  \centering
  \begin{subfigure}[b]{0.5\textwidth}
    \includegraphics[width=9cm, height=4cm]{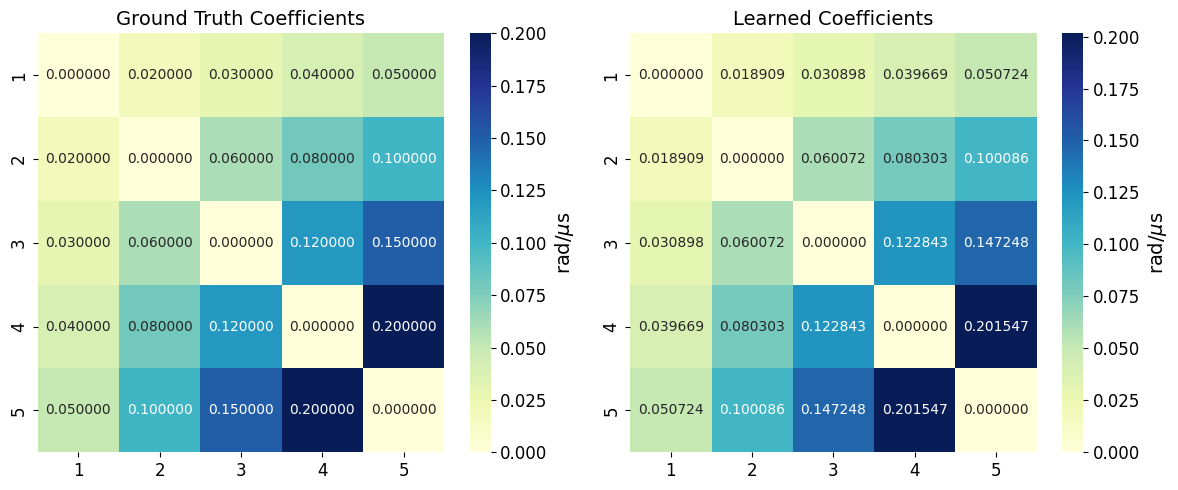}
    \caption{Application of the learning algorithm for all-to-all connected $5$ qubit system. The $(i,j)$th entry in the figure is the coefficient $c_{ij}$ for the term $Z_iZ_j$. The simulation utilizes $d = 10$ cycles and the number of samples $N_{shot} = 10^4$.}
    \label{fig:fig1}
  \end{subfigure}
  \hfill
  \begin{subfigure}[b]{0.5\textwidth}
    \includegraphics[width=8cm, height=6cm]{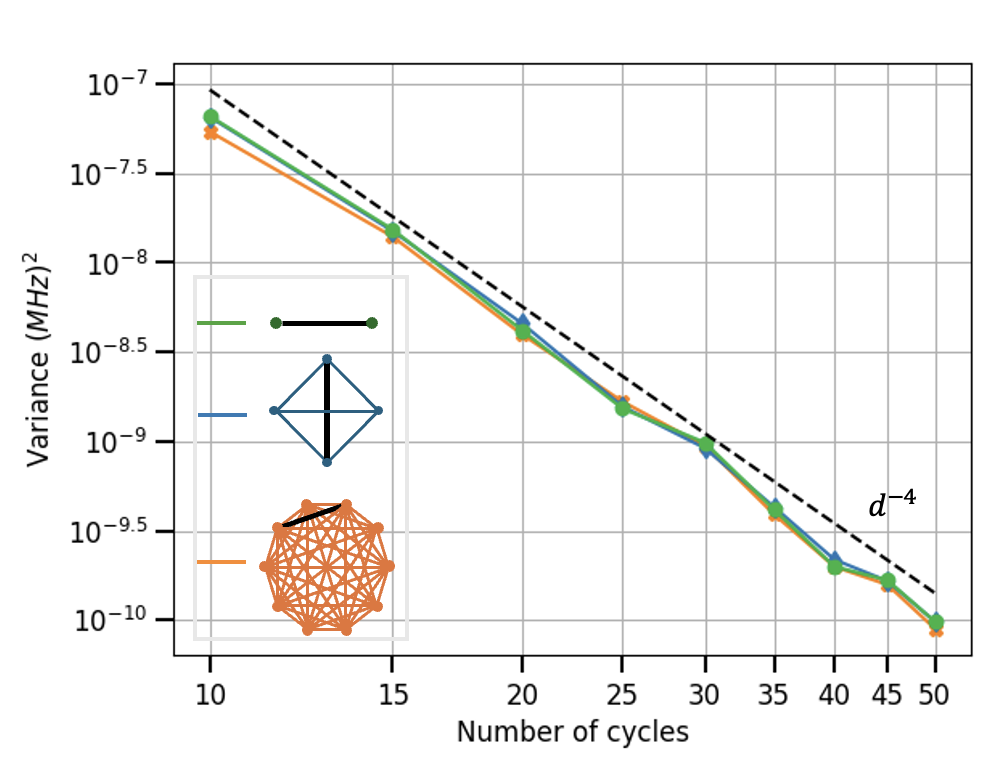}
    \caption{Scalability test on the learning algorithm. The simulation is done for $2$-qubit (green), $4$-qubit(blue) and the $10$-qubit system (orange) with the number of samples $N_{\text{shot}} = 10^5$ and the number of bootstraps $N_{\text{bootstrap}}  = 10^3$.}
    \label{fig:fig2}
  \end{subfigure}
  \caption{Performance of the algorithm for multi-atom case.}
  \label{fig:combined}
\end{figure}

\subsection{\label{subsec:optimality discussion}Optimality discussion}
In this section, we establish that the estimators in our learning algorithm achieve optimal precision under classical post-processing, with the resulting variance saturating the Cramér–Rao bound.

For both learning algorithms, the Hamiltonian parameters $\mathbf{c}$ are learned via estimators $\mathbf{\hc}$ through $(n-1)$ different full Hamiltonians, each specifying by the single $X_i$ term. Both algorithms share the same inference procedure for learning $\{\theta_k,\zeta_k\}$ on each $k$th invariant subspace and recovering the Hamiltonian parameters through an identical set of linear equations. The key distinction is that, in the analog–digital hybrid mode, all invariant subspaces evolve simultaneously using a superposition of the corresponding logical Bell pairs, whereas in the fully analog mode, only a single invariant subspace evolves at a time. Consequently, the sole difference in the inference analysis is that the transition probability for Algorithm~\ref{alg:qspe-parallel} on each invariant subspace is reduced by a factor of $O(1/n)$ relative to Algorithm~\ref{alg:qspe-analog}. Under the same number of measurement shots, this implies that the variances of the estimates $(\hat{\theta}_k,\hat{\zeta}_k)$ obtained from Algorithm~\ref{alg:qspe-parallel} are larger by a factor of $O(n)$ compared to those from Algorithm~\ref{alg:qspe-analog}. Since for each invariant subspace, we utilize the QSPE method \cite{dong2025qspe} for learning $(\hat{\theta}_k,\hat{\zeta}_k)$, the same variance computation applies to Algorithm~\ref{alg:qspe-analog} for each  $2 \times 2$ invariant subspace. Consider the accumulated angle $\theta_k$ for each subspace, which satisfies $d \theta_k \ll 1$. Here, $d$ denotes the number of repetition cycles, and $\theta_k$ is the angle resolved by QSPE in the $k$th invariant subspace, computable via Eq.~\eqref{eqn: integral} and Eq.~\eqref{eqn: map_j}. Since $\theta_k$ depends on the time integral of the Hamiltonian coefficients, we can always select an appropriate evolution time $T$ to ensure this condition is met. With sample size $N$, the variances computed for the estimated pair $(\hat{\theta}_k,\hat{\zeta}_k)$ in the $k$th invariant subspace are 
\begin{equation}\label{eq:var_analog}
    \var(\hat{\theta}_k) \approx \frac{1}{8Nd^2}~,\var(\hat{\zeta}_k)\approx \frac{3}{8Nd^4(\theta_k)^2}.
\end{equation}
For Algorithm \ref{alg:qspe-parallel}, we recompute the variance of $(\tilde{\theta}_k,\tilde{\zeta}_k)$ in the $k$th invariant subspace in Supplementary Material Sec.\ref{apx:var}, which are
\begin{equation}\label{eq:var_parallel}
    \var(\tilde{\theta}_k) \approx \frac{n}{4Nd^2}~,\var(\tilde{\zeta}_k)\approx \frac{3n}{4Nd^4(\theta_k)^2}.
\end{equation}

Given the above calculation, we proceed to compute the variance for our desired estimators $
\mathbf{\hc}$ for both algorithms. Without loss of generality, we first focus on the full Hamiltonian $H_1 = a_1X_1+\sum c_{ij}Z_iZ_j$, the first learned Hamiltonian in the whole learning procedure. Let us denote the $(n-1)$ parameters associated with the first atom as $\mathbf{c_1} = (c_{12},c_{13},\cdots, c_{1n})$. We utilize $(n-1)$ invariant subspaces to solve for the estimator $\mathbf{\hat{c}_1} = (\hc_{12},\hc_{13},\cdots, \hc_{1n})$. While for each $k$th subspace, we utilize the QSPE method to solve for $(\theta_1^k,\zeta_1^k)$. Here we use the superscript $k$ to represent the $k$th invariant subspace and the subscript $1$ to represent the $H_1$. After which, we follow the mapping \begin{equation}\label{eqn: invariant subspace relation}
    \begin{cases}
        \tan(\zeta_1^k) =  \frac{m_1^k}{\sqrt{
        (a_1^T)^2+{m_1^k}^2}}\tan(\sqrt{(a_1^T)^2+{m_1^k}^2});\\
        \sin^2(\theta_1^k) = (\frac{a_1^T}{\sqrt{(a_1^T)^2+{m_1^k}^2}})^2\sin^2(\sqrt{(a_1^T)^2+{m_1^k}^2}),
    \end{cases}
\end{equation}
to estimate the parameter set $(\hat{a}_1^T,\hat{m}_1^k)$, where $\hat{m}_1^k =\hat{B}_i^k :=  \sum_{1,j}\lambda_{1j}^k \hc_{1j}T$ and $\lambda_{1j}^k$ is the coefficient for applying $Z_1Z_j$ towards the basis $\ket{v_m}_{k}$. Given the variance for both algorithms in Eq.~\eqref{eq:var_analog} and \eqref{eq:var_parallel}, the performance of the estimator pair $(\hat{a}_1^T,\hat{m}_1^k)$ for each invariant subspace can be derived with the relation given in  Eq.~\eqref{eqn: invariant subspace relation}. Since the estimator pair $(\hat{a}_1^T,\hat{m}_1^k)$ accounts for the time-integral of the Hamiltonian parameters, we can utilize a proper evolution time $T$ for the Hamiltonian, forcing $p =\sqrt{(a_1^T)^2+{m_1^k}^2}$ to be small. Under the small-angle approximation ($\sin p \approx p$ and $\tan p \approx p$), the variances for the pair $(\hat{a}_1^T,\hat{m}_1^k)$ are approximately
\begin{align}
    \mathrm{Var}(\hat{a}_1^T) &\approx \mathrm{Var}(\hat{\theta}_1^k) \approx \frac{1}{8 N d^2},\\
    \mathrm{Var}(\hat{m}_1^k) &\approx \mathrm{Var}(\hat{\zeta}_1^k) \approx \frac{3}{8 N d^4 a_1^2},
    \label{eqn:var_of_m1_analog}
\end{align}
for each of the $(n-1)$ invariant subspaces, assuming Algorithm~\ref{alg:qspe-analog}.
 The same analysis also applies to Algorithm \ref{alg:qspe-parallel}, that is, 
\begin{align}
    &\text{Var}(\tilde{a}_1^T) \approx \text{Var}(\tilde{\theta}_1^k) \approx \frac{n}{4Nd^2};\\
    &\text{Var}(\tilde{m}_1^k) \approx \text{Var}(\tilde{\zeta}_1^k) \approx \frac{3n}{4Nd^4(a_1)^2}\label{eqn:parallel var of m_1},
\end{align}
for all $(n-1)$ invariant subspaces. Lastly, we estimate our desired quantities $\hc_{1i}$'s by solving the linear system comprising $\hm_1^k$'s. Therefore, to further provide the variance for the Hamiltonian parameters, we should consider the linear relation between $c_{1i}$'s and $m_1^k$'s, i.e.
\begin{equation}\label{eqn:linear}
    \Lambda \hat{\mathbf{c}}_1 = \hat{\mathbf{m}}_1,
\end{equation}
where $\hat{\mathbf{m}}_1 := (\hat{m}_1^1,\hat{m}_1^2,\cdots,\hat{m}_1^{n-1}).$ The coefficient matrix $\Lambda$, with all entries equal to $\pm T$, encodes the linear combination relationships. To ensure that it provides both necessary and sufficient conditions for solving all $ \hat{\mathbf{m}}_1$ terms, $\Lambda$ must be of full rank. The choice of \( \Lambda \) is determined by the selection of \( (n - 1) \) invariant subspaces, one example of which is discussed in Definition~\ref{def:invariant subspace}. Based on the linear relation, the covariance matrix of $\hat{\mathbf{c}}_1$ is 
\begin{equation}
    \Sigma_{\hat{\mathbf{c}}_1}= \Lambda^{-1}\Sigma_{\hat{\mathbf{m}}_1}\Lambda^{-T},
\end{equation}
where $\Sigma_{\hat{\mathbf{m}}_1} = \text{diag}(\text{Var}(\hm_1^1),\text{Var}(\hm_1^2),\cdots,\text{Var}(\hm_1^{n-1})$. Therefore, the variance of $\hc_{1i}$ is the $(i,i)$th entry of the covariance matrix $ \Sigma_{\mathbf{\hc_1}}$, i.e.
\begin{equation}
    \text{Var}(\hc_{1i}) = \sum_{j = 1}^{n-1}(\Lambda^{-1}[i,j])^2 \text{Var}(\hm_1^j).
\end{equation}
For Algorithm \ref{alg:qspe-analog}, we plug in Eq.~\eqref{eqn:var_of_m1_analog}. The variance of $\hc_{1i}$ terms is in the form of
\begin{equation}
    \text{Var}(\hc_{1i}) \approx  \frac{3}{8Nd^4a_1^2}  \sum_{j = 1}^{n-1}(\Lambda^{-1}[i,j])^2.
\end{equation}
This leads to a general bound on the variance of $\mathbf{\hc_1}$ as the average entry of $\Lambda^{-1}$ generally scales as $O(1)$. Therefore,
\begin{equation}
    \text{Var}(\hc_{1i}) \approx  O(\frac{3}{8Nd^4a_1^2}),
\end{equation}
where $N$ is the number of shots, $d$ is the number of repetition cycles. Similarly, for Algorithm \ref{alg:qspe-parallel}, Eq.~\eqref{eqn:parallel var of m_1} is utilized and leads to 
\begin{equation}
    \text{Var}(\tilde{c}_{1i}) \approx  O(\frac{3n}{4Nd^4a_1^2}).
\end{equation}
Further, the whole learning algorithm involves $(n-1)$ different full Hamiltonian $H_i$ for which the above analysis can be directly applied to the estimator $\mathbf{\hc_i}=(\hc_{i(i+1)},\cdots,\hc_{in}) $ corresponds to each $H_i$. In general, the variance of the estimator \( \mathbf{\hat{c}} \) for both algorithms can be summarized as
\begin{align}\label{eqn:var_c_analog}
     \text{Var}(\hc_{ij}) &\approx  O\left(\frac{3}{8 N d^4 a_i^2}\right),~~~\text{for Algorithm 3} \\ \label{eqn:var_c_parallel}
     \text{Var}(\hc_{ij}) &\approx  O\left(\frac{3n}{4 N d^4 a_i^2}\right),~~~\text{for Algorithm 2} 
\end{align}
where $a_i$ is the coefficient of the single $X_i$ term for $H_i$ and $i\in \{1,\cdots, (n-1)\}, j\in \{(i+1),\cdots,n\}$.

Based on the computed variance, we show that the Heisenberg limit performance of the estimators is attained in our learning algorithm in both scenarios. The classical resource used in the learning algorithm is $N\times 2(2d-1)$, and the quantum resource used is $d$ for each parameter. The Heisenberg limit suggests the variance scales as
\begin{align}
    &\Omega \left(\frac{1}{(\text{Classical resource}\times\text{Quantum resource}^2)}\right)\\
    &=\frac{1}{(N\times 2(2d-1))\times(d)^2} = \Omega(\frac{1}{Nd^3}).
\end{align}
Our estimators, with variances given in Eq.~\eqref{eqn:var_c_analog} and Eq.~\eqref{eqn:var_c_parallel}, achieve the Heisenberg limit, exhibiting a faster variance reduction that scales as $O(1/d^4)$, similar to the performance achieved by the QSPE method \cite{dong2025qspe}.

Moreover, we further demonstrate the optimality of the estimators with respect to classical post-processing by showing that the calculated variance attains the Cramér–Rao lower bound to complete the proof of the Theorem \ref{thm:heisenberg limit}. With the same setting in the discussion of the variance computation, we assume the full Hamiltonian $H_i$ for estimating $(n-i)$ Hamiltonian parameters via $(n-i)$ invariant subspaces. Each $k$th invariant subspace solves a pair $(\theta_i^k,\zeta_i^k)$ which can be used to infer $(a_i,m_i^k)$ based on the mapping given in Eq.~\eqref{eqn: invariant subspace relation}. Similarly, for Algorithm \ref{alg:qspe-analog}, the optimal variances of the estimator pair $(\hat{\theta}_i^k,\hat{\zeta}_i^k)$ are computed from the Cramér–Rao lower bound discussed in \cite{dong2025qspe}, that is,
\begin{align}\label{eq:optvar_analog}
     \text{Var}(\hat{\theta_i^k})_{opt} \approx \frac{1}{8Nd^2}~, \text{Var}(\hat{\zeta_i^k})_{opt}  \approx \frac{3}{8Nd^4(\theta_i^k)^2}.
\end{align}
For Algorithm \ref{alg:qspe-parallel}, we recompute the optimal variance of $(\tilde{\theta}_i^k,\tilde{\zeta}_i^k)$ in the $k$th invariant subspace in Supplementary Material Sec.\ref{apx:CR bound}, which are
\begin{equation}\label{eq:optvar_parallel}
    \var(\tilde{\theta}_i^k)_{opt} \approx \frac{n}{4Nd^2}~,\var(\tilde{\zeta}_i^k)_{opt}\approx \frac{3n}{4Nd^4(\theta_1^k)^2}.
\end{equation}
Based on the same analysis that we developed for variance computation for estimators, the optimal variance for the Hamiltonian parameter estimators can be derived based on Eq.~\eqref{eq:optvar_analog} and Eq.~\eqref{eq:optvar_parallel}. To conclude, the optimal variance of the estimator \( \mathbf{\hat{c}} \) for both algorithms can be summarized as
\begin{align}\label{eqn:var_c_analog_opt}
     \text{Var}(\hc_{ij})_{opt} &\approx  O\left(\frac{3}{8 N d^4 a_i^2}\right),~~~\text{for Algorithm 3} \\ \label{eqn:var_c_parallel_opt}
     \text{Var}(\hc_{ij})_{opt} &\approx  O\left(\frac{3n}{4 N d^4 a_i^2}\right),~~~\text{for Algorithm 2} 
\end{align}
where $a_i$ is the coefficient of the single $X_i$ term for $H_i$ and $i\in \{1,\cdots, (n-1)\}, j\in \{(i+1),\cdots,n\}$. It is straightforward to verify that the optimal variance given by the Cramér-Rao lower bound agrees with the computed variance, indicating that the classical post-processing component of our learning algorithms is optimal.

\section{\label{sec:experiment deployment} Experiment deployment}
In this section, we consider the realistic scenarios for deploying the learning algorithms on NISQ devices, taking into account various noise sources commonly present in current hardware and providing a detailed experimental proposal for NISQ demonstration of our algorithm.

\subsubsection{Robustness against realistic error}\label{subsec:robustness}
\paragraph{Decoherence}
The unavoidable disturbance from the environment influences the quantum dynamics, thus might lead to the wrong estimation using our algorithms. Here we model the decoherence effect by the depolarizing channel resulting in the mismatch of the probability from the quantum part of Algorithm \ref{alg:qspe}, i.e.
\begin{equation}
    p_{X(Y)}' = \alpha p_{X(Y)}+ \frac{1-\alpha}{4},
\end{equation}
where $\alpha \in [0,1]$ quantifies the fidelity of the quantum evolution. Here, we analyze the performance of one specific invariant subspace, but due to the parallelization structure, the analysis can be generalized to the whole process. The probabilities coming from the noisy dynamics suffer a rescaled factor and a constant shift, thus the corresponding reconstructed function $h$ used in Algorithm \ref{alg:qspe} also changes accordingly, i.e. $h' = \alpha h-\frac{1-\alpha}{4}(1+i).$ The Fourier coefficients of $h$ are 
expected to have an additional rescaling factor $\alpha$, and the constant shift only contributes to the change of zeroth Fourier coefficients, namely,
\begin{align}
    &|c_0'| = |\alpha c_0 -\frac{1-\alpha}{4}(1+i)|\approx \alpha \theta+\frac{1-\alpha}{2\sqrt{2}}, \\
    &|c_k| \approx \alpha \theta, ~ \forall k = 1, \cdots, d-1.
\end{align}
Therefore, utilizing the relation between the zeroth order and the remaining Fourier coefficients \cite{dong2025qspe}, the estimator in Algorithm \ref{alg:qspe} can be redefined as follows:
\begin{equation}
    \hat{\alpha} = 1-2\sqrt{2}(|c_0|-\frac{1}{d-1}\sum_{k=1}^{d-1}|c_k|), \quad \hat{\theta} = \frac{1}{\hat{\alpha}}\times \frac{1}{d-1}\sum_{k=1}^{d-1}|c_k|.
\end{equation}
The rescaling estimator mitigates decoherence errors in the algorithm by leveraging the fact that the Fourier coefficients encode information about the estimated fidelity.

\paragraph{Coherent state preparation error}
The state preparation error is also dominant in quantum systems. Especially, in our state preparation step of Algorithm \ref{alg:qspe}, we need to accurately applying $-X(\frac{\pi}{4})$ and $Y(\frac{\pi}{4})$ pulse towards the first atom (See Supplementary Material Sec.~\ref{sec:implement detials for two qubit} for details), where miscalibration of the pulses leads to imperfect initial states and thus affects the estimation process. Here, we test the robustness of the algorithm against state preparation error in the following setting. Instead of the perfect pulse in Eq. \eqref{eq:plus_state_prep} \& \eqref{eq:i_state_prep}, we assume a coherent error happening in the pulse, which leads to over-rotation, i.e.
\begin{align}\label{eq:coherent state prep error}
      e^{-i(\theta+\alpha) X_1\otimes I}\ket{00} &= \cos(\theta+\alpha)\ket{00}-i\sin(\theta+\alpha)\ket{10}\notag\\ 
    e^{-i(\theta+\alpha)Y_1\otimes I}\ket{00} &= \cos(\theta+\alpha)\ket{00}+\sin(\theta+\alpha)\ket{10};   
\end{align}
where $\alpha$ denotes the error rate. Here we provide a bound on the difference of the estimation with and without coherent state preparation error using the original estimator and the scaled estimator.
\begin{theorem}\label{thm:coherent state prep error}
    Consider the coherent error Eq.~\eqref{eq:coherent state prep error} with error rate $\alpha$ in the state preparation step of Algorithm \ref{alg:qspe}, the difference between the noisy estimation $\hat{\theta'} = \frac{1}{d}\sum_{k = 0}^{d-1}|c_k'|$ and noiseless estimation $\hat{\theta} = \frac{1}{d}\sum_{k = 0}^{d-1}|c_k|$ can be bounded by
    \begin{equation}
      D =  |\hat{\theta}'-\hat{\theta}|\leq \sqrt{2}(d+1)^2\sin2\alpha\sin^2\theta+ (\cos2\alpha - 1)\theta,
    \end{equation}
    which scales with $O(\theta)$.
    The difference can be improved by introducing a scaled estimator $ \hat{\theta}_s' = \frac{1}{\cos2\alpha}\times \frac{1}{d}\sum_{k = 0}^{d-1}|c_k'|.$ The corresponding bound then reduces to 
    \begin{equation}
         D_s =  |\hat{\theta}_s'-\hat{\theta}|\leq  \sqrt{2}(d+1)^2\tan2\alpha \sin^2\theta,
    \end{equation}
    where scales only with $O(\theta^2)$.
\end{theorem}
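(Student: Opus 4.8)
The plan is to track how the over-rotation angle $\alpha$ propagates through the $d$-cycle QSPE circuit into the Fourier coefficients $\{c_k'\}$, and from there into the averaged estimator. First I would rewrite the faulty initial states of Eq.~\eqref{eq:coherent state prep error} as $\ket{\psi^\alpha_\pm}=\cos\alpha\,\ket{\psi_\pm}+\sin\alpha\,\ket{\psi^\perp_\pm}$, where $\ket{\psi_\pm}$ are the ideal logical states $\ket{+_l},\ket{i_l}$ and $\ket{\psi^\perp_\pm}$ are the orthogonal ``wrong'' states generated by the extra rotation. Denote by $W(\omega_j)$ the $d$-cycle propagator restricted to the relevant $2\times 2$ invariant subspace and by $\Pi_0=\ket{00}\!\bra{00}$ the computational projector onto the logical-zero outcome. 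Then the noisy transition probability is $p'(\omega_j)=\cos^2\!\alpha\,\bra{\psi}W^\dagger\Pi_0 W\ket{\psi}+\sin^2\!\alpha\,\bra{\psi^\perp}W^\dagger\Pi_0 W\ket{\psi^\perp}+2\sin\alpha\cos\alpha\,\mathrm{Re}(\,\cdot\,)$. Because $\{\ket{\psi},\ket{\psi^\perp}\}$ is an orthonormal basis of the subspace and $\Pi_0+\Pi_1=\id$ on it, the first two terms collapse to $\cos^2\!\alpha\,p+\sin^2\!\alpha\,(1-p)$, giving the clean identity $p'-\tfrac12=\cos(2\alpha)\,(p-\tfrac12)+\sin(2\alpha)\,g$, where $g$ is the interference term. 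Running both the $X$- and $Y$-basis circuits as in Algorithm~\ref{alg:qspe} then yields $h'=\cos(2\alpha)\,h+\sin(2\alpha)\,g$ for the reconstruction function.

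The second and crucial step is a structural estimate on $g$. Expanding the interference in the computational amplitudes $\alpha_0(\omega)=\bra{00}W(\omega)\ket{00}$ and $\alpha_1(\omega)=\bra{00}W(\omega)\ket{10}$, both the $X$- and $Y$-basis contributions reduce (up to a fixed phase) to $|\alpha_1(\omega)|^2-|\alpha_0(\omega)|^2=2|\alpha_1(\omega)|^2-1$; hence $g=g_{\mathrm{const}}+r$ with $g_{\mathrm{const}}$ a pure constant and $r(\omega)=2(1+i)|\alpha_1(\omega)|^2$. In the pre-asymptotic regime $d\theta\ll 1$ the leaked amplitude obeys a QSP/Chebyshev growth bound $\max_\omega|\alpha_1(\omega)|=O\big((d+1)\sin\theta\big)$, so $\|r\|_\infty=O\big((d+1)^2\sin^2\theta\big)$, and pinning down the explicit $\sqrt2\,(d+1)^2$ prefactor is the quantitative core of the argument. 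Since each discrete Fourier coefficient satisfies $|r_k|\le\|r\|_\infty$, passing $h'=\cos(2\alpha)h+\sin(2\alpha)g$ through the FFT gives $c_k'=\cos(2\alpha)\,c_k+\sin(2\alpha)\,r_k$ for $k\ge 1$, while the constant part $g_{\mathrm{const}}$ feeds only into $c_0$ and is handled separately, exactly as the depolarizing shift is treated earlier in this section.

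For the last step I would bound, for each $k$, $\big||c_k'|-|c_k|\big|\le(1-\cos 2\alpha)\,|c_k|+\sin(2\alpha)\,|r_k|$, insert $|c_k|\approx\theta$ from the noiseless QSPE analysis \cite{dong2025qspe}, and average over $k$ to get $D=|\hat{\theta}'-\hat{\theta}|\le\sqrt2\,(d+1)^2\sin(2\alpha)\sin^2\theta+(\cos 2\alpha-1)\theta$, which is $O(\theta)$ because the second term dominates. The scaled estimator $\hat{\theta}'_s=\hat{\theta}'/\cos 2\alpha$ divides out the benign multiplicative factor $\cos 2\alpha$ common to every $c_k$, leaving $\hat{\theta}'_s-\hat{\theta}=\tan(2\alpha)\,\tfrac1d\sum_k|r_k|$ up to the absolute-value rearrangement, whence $D_s\le\sqrt2\,(d+1)^2\tan(2\alpha)\sin^2\theta=O(\theta^2)$.

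I expect the main obstacle to be the structural estimate on $g$: proving that the interference term is governed solely by the leaked population $|\alpha_1(\omega)|^2$, bounding $\max_\omega|\alpha_1(\omega)|$ uniformly by $O\big((d+1)\sin\theta\big)$ through the $d$-fold QSP recursion (a Chebyshev-type polynomial-growth estimate), and extracting the correct $\sqrt2\,(d+1)^2$ constant. A secondary subtlety is the separate bookkeeping of $c_0$, which absorbs the constant part of $g$ and must be excluded from — or rescaled within — the averaged estimator, in direct analogy with the decoherence treatment above.
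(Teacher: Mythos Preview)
Your proposal is correct and follows essentially the same route as the paper. The paper computes the noisy probabilities $p_X',p_Y'$ directly in terms of the QSP polynomials $P^{(d)},Q^{(d)}$ and arrives at $h'=\cos 2\alpha\,h+(\text{const})+\sin 2\alpha\cdot(\text{term in }C_{\omega,d,\theta})$ with $C_{\omega,d,\theta}=2\sin^2\theta\,\frac{\sin^2[(d+1)\sigma]}{\sin^2\sigma}\le 2(d+1)^2\sin^2\theta$; this is exactly your $2|\alpha_1(\omega)|^2$ with the Dirichlet-kernel bound, so your ``Chebyshev/QSP growth'' estimate and the paper's explicit bound on $C_{\omega,d,\theta}$ are the same object. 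Your $\cos\alpha\ket{\psi}+\sin\alpha\ket{\psi^\perp}$ decomposition together with the trace identity $\bra{\psi}A\ket{\psi}+\bra{\psi^\perp}A\ket{\psi^\perp}=\mathrm{Tr}\,A$ is a cleaner way to obtain the $\cos^2\alpha\,p+\sin^2\alpha(1-p)$ collapse than the paper's direct expansion, but the content is identical. One small point: you flag that the constant part of $g$ contaminates only $c_0$ and should be treated separately; the paper's derivation silently drops this constant when passing to the Fourier integral (valid for $k\neq 0$), so your bookkeeping is actually more careful on this detail.
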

Unlike the simple scaling and constant shift introduced by the depolarizing channel to the function $h$, coherent state preparation errors contribute both a rescaling factor and a shift that depend on the true parameter value $\theta$. However, by applying the same rescaling technique used to address decoherence, we reduce the error from first order to second order. As a result, when $\theta$ is sufficiently small, the induced error becomes negligible. The detail is discussion in  Supplementary Material Sec.~\ref{sec:appendix robust coherent state prep}.

\paragraph{Readout error}
Another predominant error in practice comes from state measurement. For example, atom loss and imperfect re-trapping can introduce measurement errors in Rydberg atom quantum simulators. More specifically, a ground state (\( \ket{0} \)) can be misidentified as a Rydberg state (\( \ket{1} \)) due to atom loss, with probability \( P_{\text{loss}} = \mathbb{P}(1|0) = 0.01 \); and a Rydberg state (\( \ket{1} \)) can be misread as a ground state (\( \ket{0} \)) due to imperfect anti-trapping, with probability \( P_{\text{anti-trapping}} = \mathbb{P}(0|1) = 0.08 \) \cite{wurtz2023aquila}.

We mitigate the readout error via the confusion matrix, which stores the conditional probability of measurement outcome being the bit-string $b_i$ given the quantum state is $\ket{b_j}$, i.e.
\begin{equation}
    R : = [\mathbb{P}(b_i|b_j)_{b_i,b_j \in \{0,1\}^n}],
\end{equation}
for $n$ qubit system. Take the two-atom system as an example, the confusion matrix is \begin{equation*}
    R =
    \begin{bmatrix}
    \mathbb{P}(00|00) & \mathbb{P}(01|00) & \mathbb{P}(10|00) & \mathbb{P}(11|00) \\

    \mathbb{P}(00|01) & \mathbb{P}(01|01) & \mathbb{P}(10|01) & \mathbb{P}(11|01) \\

    \mathbb{P}(00|10) & \mathbb{P}(01|10) & \mathbb{P}(10|10) & \mathbb{P}(11|10) \\

    \mathbb{P}(00|11) & \mathbb{P}(01|11) & \mathbb{P}(10|11) & \mathbb{P}(11|11) \\
    \end{bmatrix}.
\end{equation*}
The readout error in the measured outcome vector $\mathbf{q} = (q_{00}, q_{01}, q_{10}, q_{11})^{\mathrm{T}}$ can be mitigated by applying a confusion matrix correction, given by
\begin{equation}
    \mathbf{p} = (R^{T})^{-1} \mathbf{q},
\end{equation}
where $\mathbf{p}$ denotes the corrected probability vector.

\paragraph{Time-dependent coherent error}
Time-dependent coherent error is the most challenging adversarial in the learning task. In previous methods that operate in the real space, such as robust phase estimation \cite{kimmel2015rpe} and periodic calibration \cite{arute2020periodiccalibration}, the relationship between the unknown parameters becomes entangled through complex trigonometric and inverse trigonometric functions. During inference, this leads to a highly non-convex optimization problem involving both parameters. Consequently, an error in estimating one parameter can severely affect the accuracy of the other. However, our learning algorithm does the inference on the Fourier space, decoupling the multiple parameters into largely orthogonal components. To be more specific, the estimators designed in our learning algorithm are based on the relation between the Fourier coefficients and the parameters in each invariant subspace as follows,
\begin{equation}\label{eq:decouple}
    c_k \approx i\theta e^{-i(2k+1)\zeta}, \forall k = 0,\cdots,d-1.
\end{equation}
Here we completely separate two parameters, where $\theta$ depends on the amplitude of the Fourier coefficients and the phase of the coefficients implies $\zeta$. Further, the desired estimators of each invariant subspace can be inferred with the mapping Eq.~\eqref{eqn: invariant subspace relation}, which are also decoupled under small-angle approximation \footnote{The Eq.~\eqref{eqn: invariant subspace relation} maps the estimated pair $(\theta,\zeta)$ to the time-integral of our desired estimators, we can choose evolution time properly to approximately estimate $a^T$ from $\theta$ and $m^k$ via $\zeta$.}. As a result, this decoupling allows accurate estimation of one of them (e.g. $c_{ij}$ term) while the other is suffering from time-dependent coherent error (e.g. the drifting error on the local field $a_i$). To model the coherent error, we consider that there exists a $\gamma$ shift on the local fields, resulting in the coherent change in Hamiltonian to \(
H_i = (a_i+\gamma a_i) X_i + \sum Z_i Z_j 
\) in the simulation. 
\begin{figure}[h!]
    \centering    \includegraphics[width=\linewidth]{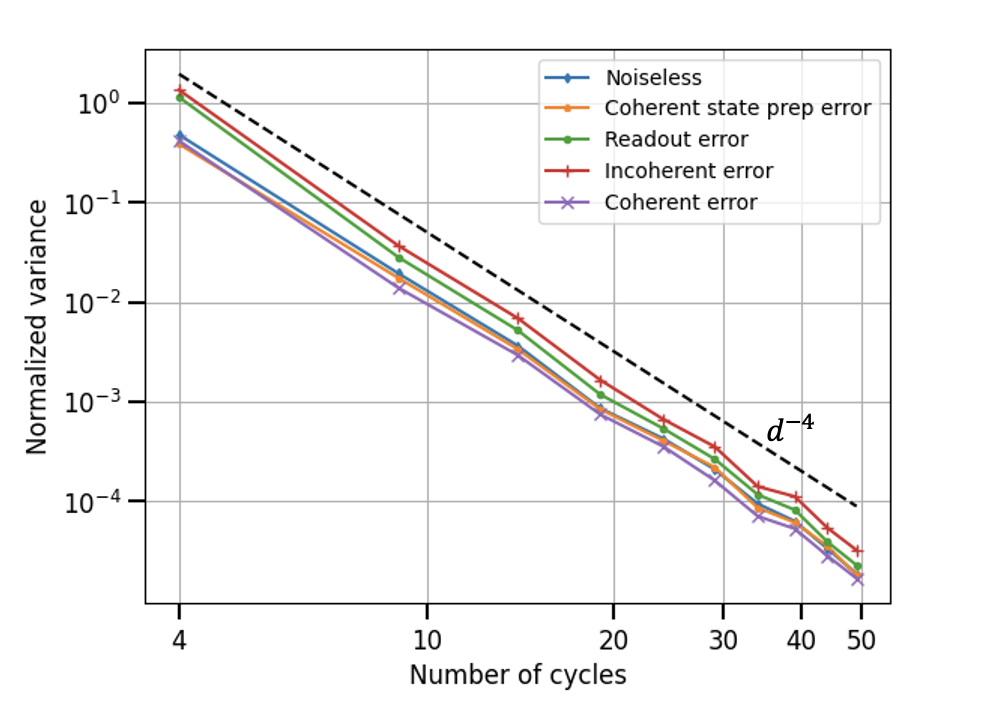}  \caption{Robustness test against coherent state preparation error(orange), readout error(green), depolarizing error(red) and time-dependent coherent error(purple) with the number of samples $N_{\text{shot}} = 10^5$ and the number of boostraps $N_{\text{bootstrap}}  = 10^3$. The noise model is chosen based on realistic noise model for Rydberg system \cite{wurtz2023aquila}, where the readout error is $(p_{\text{loss}},p_{\text{anti-trapping}}) =(0.01,0.08)$, the depolarizing noise has fidelity $\alpha = 0.8$, coherent state preparation error $\alpha = 0.01$ and the coherent error on $X_i$ is $\gamma = 10\%$. The normalized variance is defined as the relative variance $\text{Var}(\hc)/c^2$. The simulation results exhibit $\propto d^{-4}$ scaling in both noisy and noiseless scenarios, suggesting the robustness of the learning algorithm against experimental noise.}
    \label{fig:robustness test}
\end{figure}

To conclude the discussion on the robustness of our learning algorithm, we conducted a comprehensive study on the robustness of our learning algorithm under various realistic noise sources. The simulation results are presented in Figure.~\ref{fig:robustness test}, which showcases the algorithm’s performance within an invariant subspace of a 10-qubit all-to-all connected system governed by the Hamiltonian in Eq.~\eqref{eqn:full H}. Due to the parallelized structure of the algorithm, its performance remains consistent across systems of different sizes. All curves in the figure exhibit a $1/d^4$ scaling, consistent with Theorem~\ref{thm:heisenberg limit}, indicating that the algorithm maintains its effectiveness in both noiseless and noisy environments. These results confirm the algorithm’s robustness against readout errors, state preparation errors, depolarizing noise, and time-dependent coherent errors.

\subsubsection{Experimental proposal}\label{sec:experiment proposal}
The algorithm’s robustness against all major sources of realistic error highlights its potential for practical deployment on current NISQ devices to learn device-specific parameters. In this section, we provide a detailed implementation strategy for the Rydberg atom platform, including the experimental procedure, parameter settings, and simulation results. While the proposal is specifically tailored for learning two-atom Rydberg interactions, it can be naturally extended to systems with more atoms following Algorithm \ref{alg:qspe-analog}. Additionally, the same framework can be applied to superconducting platforms for learning XY couplings with the parallel-learnable Hamiltonain discussed in Supplementary Material Sec.~\ref{SM:pl hamiltonian def}.
    \begin{figure*}[t]\centering\includegraphics[width=0.7\linewidth]{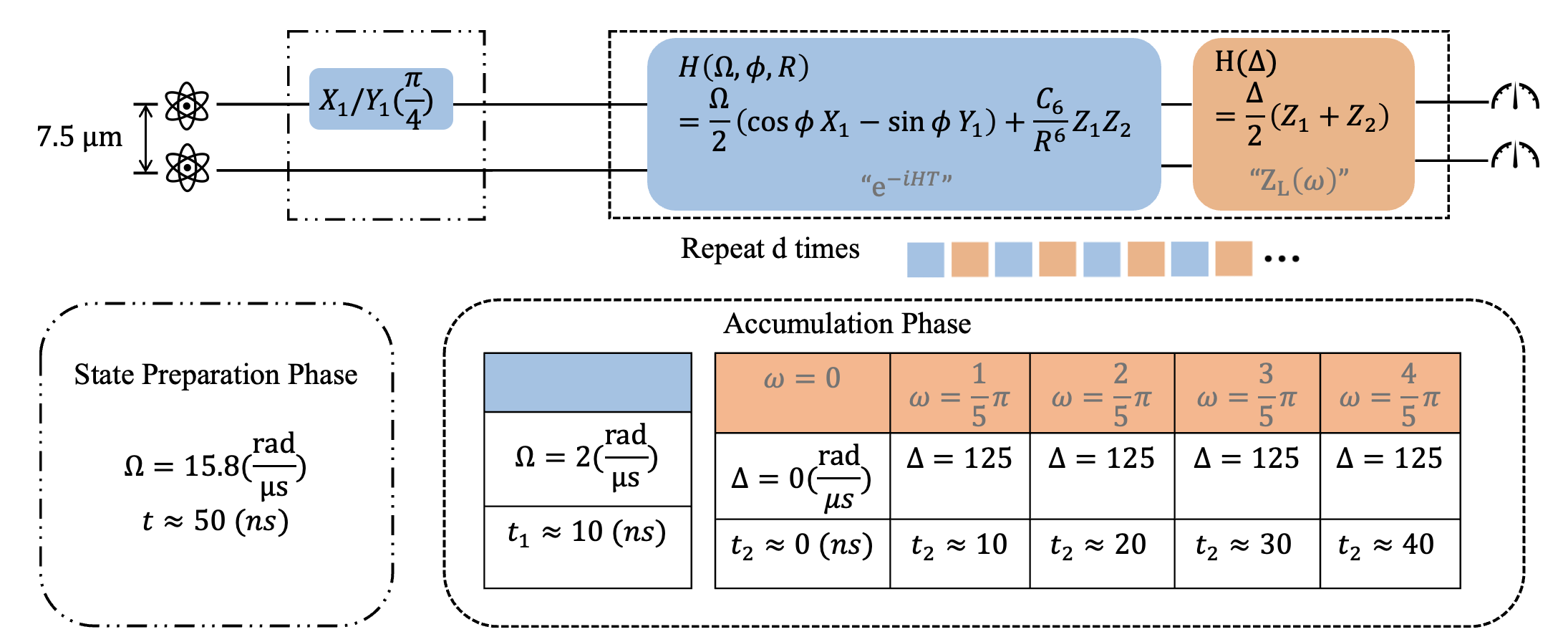}
        \caption{Physical implementation on the Rydberg quantum simulator. The first dotted block is for state preparation, where $\ket{+}_l$ and $\ket{i}_l$ should be prepared according to Supplementary Material Sec.~\ref{sec:implement detials for two qubit}. The second dotted block evolves $d$-repetition of target quantum dynamics and $Z_l$ phase accumulation. The whole evolution should run for $\omega = \frac{j}{2d-1}\pi, j = 0,\cdots,2d-2$. $d=3$ is picked here for reference.}
        \label{fig:exp}
    \end{figure*}
\paragraph{Experiment procedure}
Consider the scenario for the Rydberg Hamiltonian involving two atoms with the corresponding Hamiltonian
\begin{equation}
    H(\Omega,\phi,R) = \frac{\Omega}{2}(\cos\phi X_1 - \sin\phi Y_1) + c_{12}Z_1Z_2,
\end{equation}
where $\Omega$ is the Rabi frequency, $\phi$ is the phase and $c_{12} =\frac{C_6}{R^6}$ is the interaction strength. This is a general version of the full Hamiltonian given in Eq.~\eqref{eqn:full H} with phase $\phi$ between $X_1$ and $Y_1$, we can set the $\phi = 0$ to get back to Eq.~\eqref{eqn:full H} as the phase information is irrelevant here. For this Hamiltonian, we are interested in inferring the atom-atom distance $R$.

Similarly, consider the time-evolution operator for time $T$, where the accumulated angles are $\int_0^T\frac{\Omega}{2}dt = a$ and $\int_0^Tc_{12}dt = b$. The corresponding mapping for accumulated parameters is,
\begin{equation}\label{eqn:mapping realization}
    \begin{cases}
        \tan(\zeta) =  \frac{b}{\sqrt{a^2+b^2}}\tan(\sqrt{a^2+b^2})\\
        \tan(\chi) = \tan(\phi)\\
        \sin^2(\theta) = (\frac{a}{\sqrt{a^2+b^2}})^2\sin^2(\sqrt{a^2+b^2})
    \end{cases}
\end{equation}
Dividing evolution time T gives back to our desired terms, i.e., $(\Omega,c_{12}) = (a/T,b/T)$.
\begin{table*}[t]
    \centering
    \begin{tabular}{|c|c|c|c|}
    \hline
        atom distance $R$ & coefficient $V = \frac{C_6}{R^6}$& evolution time $t$ & effective integral $b = \int_0^T\frac{C_6}{R^6}dt $ \\
    \hline
        $7.16\mu m$ & $40~{rad}/{\mu s}$ & $1e-03 \mu s$ & $0.04~rad$\\
    \hline
    $7.52\mu m$ & $30~{rad}/{\mu s}$&$1e-03 \mu s$ &$0.03~rad$\\
    \hline
    $8.04\mu m$ & $20~{rad}/{\mu s}$&$1e-03 \mu s$ & $0.02~rad$\\
    \hline
    \end{tabular}
    \caption{Table of benchmark cases for testing the learning algorithm on Rydberg device. The evolution time can be chosen flexibly while satisfying the hardware constraint. The parameters listed here are used in simulation for reference.}
    \label{tab:benchmark}
\end{table*}

The physical implementation of the learning algorithm for the Rydberg system can be realized by considering the two main components of Algorithm~\ref{alg:qspe}.

1. Quantum part: The experimental realization of the algorithm is shown in the Figure.~\ref{fig:exp}. By adding up the time take for two phases in the Figure.~\ref{fig:exp}, the total evolution time required on the device for one tunable $\omega$ point can be roughly analyzed as,
    \begin{equation}
        T_{\text{total}} = t_{\text{state prep}}+t_{\text{repetition}} = t+(t_1+t_2)\times d \approx 180 ns.
    \end{equation}
    As a result, the total evolution time for all $(2d-1)$ sampled $\omega$ for the complete inference steps is then
    \begin{equation}
        T_{\text{round}} = 2\times(2d-1)\times T_{\text{total}} \approx 1.8\mu s,
    \end{equation}
    where $2$ stands for two different initial states and $(2d-1)$ is for different tunable $\omega$ terms. \\
    
2. Classical post-processing: Following the Algorithm \ref{alg:qspe}, the estimators $(\hat{\theta},\hat{\zeta})$ converts back to the accumulated angles $(\hat{a},\hat{b})$. The estimator of the atom distance $R$ follows
\begin{equation}\label{eqn:convertion}
    \hat{R} = (\frac{C_6}{\hat{b}/T})^{1/6},
\end{equation}
where $T$ is the evolution time and $C_6 = 5,420,503 ~{\mu m^6rad}/{\mu s}$ is a constant. Notably, under the proposed parameters, the total evolution time per experimental round is approximately $T_{\text{round}}\approx 1.8 \mu s$, which is significantly shorter than the system's $T_1$ and $T_2$ time \cite{wurtz2023aquila}, highlighting the algorithm’s feasibility for near-term implementation.

\paragraph{Parameter setting}
The Table \ref{tab:benchmark} summarizes the parameters used for testing on Rydberg devices. The selected inter-atomic distances are motivated by the findings in \cite{dag2024distanceerror}, which show significant deviations between experimental performance and theoretical predictions at those distances. Set the local field $\Omega$ such that the accumulated angle $a =\int_0^T\frac{\Omega(t)}{2}dt = 0.01~rad$, we ensure that $d\theta \ll 1$, under which the variance of $\hat{b}$ scales with the repetition depth as $\mathrm{Var}(\hat{b}) \sim O(1/d^4)$. Further, since the atom-atom interaction strength in the Rydberg system is in the form of
\(
    b(R) = C_6/R^6, 
\)
the variance relation between them is
\begin{equation}\label{eq:varaince relation}
    \var(b(R)) \approx b(R)'\var(R) = (\frac{6b(R)}{R})^2\var(R).
\end{equation}
As a result, the scaling between the variance of $\hat{R}$ and repetition depth $d$ is also $\var(\hat{R})\sim O(1/d^4)$.

\paragraph{Simulation results}
For the parameters listed in the Table.~\ref{tab:benchmark}, we simulate the performance of our learning algorithm according to the experiment procedure discussed in the previous section. The results are shown in the Figure.~\ref{fig:distance}, where the atom distances are $(R_1,R_2,R_3) = (7.16\mu m, 7.52 \mu m, 8.04 \mu m)$. The figure shows the identical scaling with respect to $\hat{b}$ and $\hat{R}$ satisfying the relation in Eq.~\eqref{eq:varaince relation}.
    \begin{figure*}[t]\centering\includegraphics[width=0.9\linewidth]{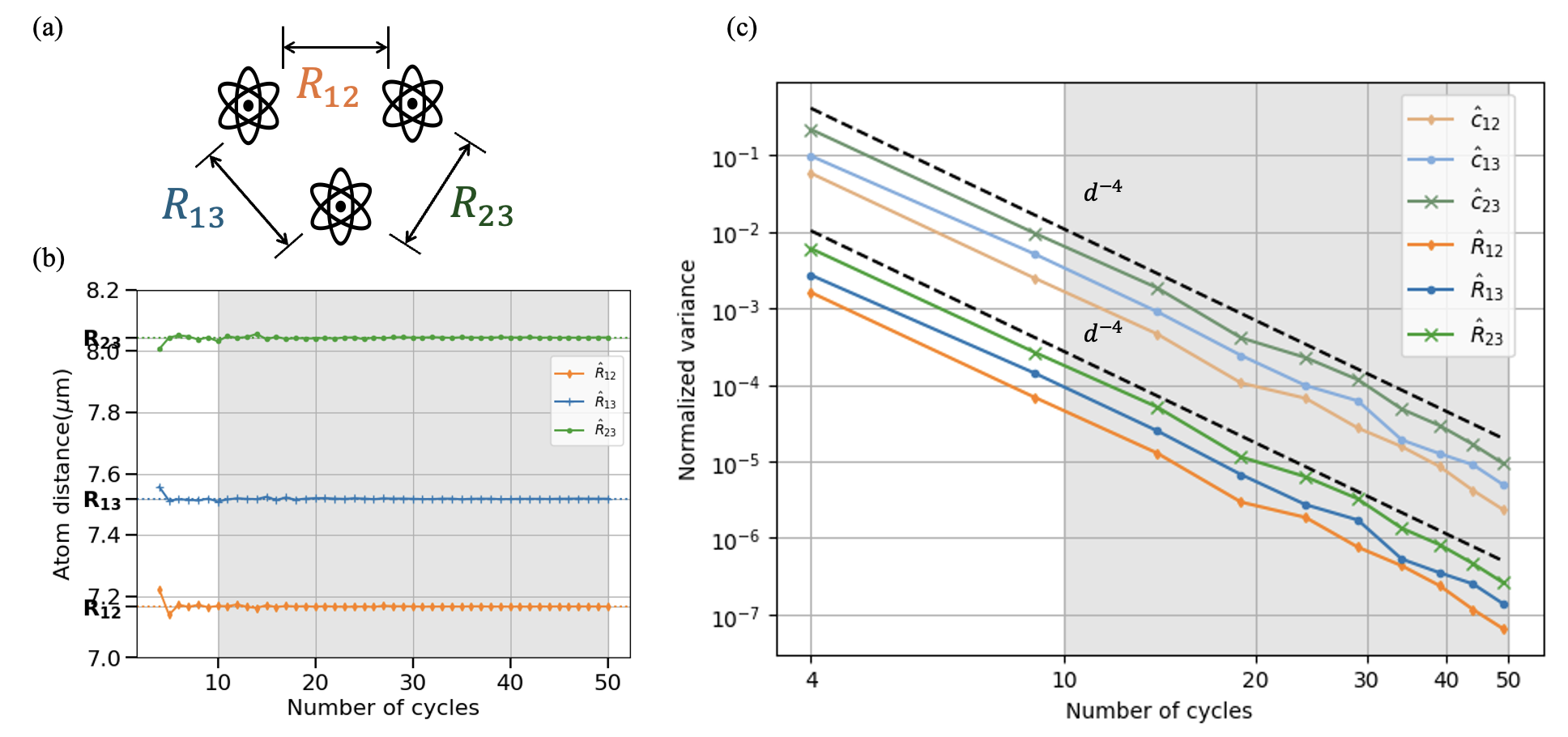}
        \caption{Application of the learning algorithm on Rydberg quantum simulators for atom distance. (a) Atom distance learning with three different distance ($R_{12} = 7.16\mu m, R_{13} = 7.52 \mu m, R_{23} = 8.04\mu m$) (b) The estimated distance of three estimators  ($\hat{R}_{12},\hat{R}_{13},\hat{R}_{23}$). The bold x-ticks are the true value ($R_{12},R_{13},R_{23}$). After $d = 10$ cycles, the estimations of all distances are consistently correct. (c) The upper block of curves (light-green, light-blue, light-orange) represents the variance of the estimators $(\hat{c}_{12},\hat{c}_{13},\hat{c}_{23})$ with respect to the number of cycles $d$, where $c_{ij}$ is the parameter of term $Z_iZ_j$ in Eq.~\eqref{eqn:full H}. While the lower block of curves(green, blue, orange) is the variance of atom distance estimators ($\hat{R}_{12},\hat{R}_{13},\hat{R}_{23}$) with respect to $d$ after conversion given in Eq.~\eqref{eqn:convertion}. The simulation results exhibit $\propto d^{-4}$ scaling, in agreement with Theorem \ref{thm:heisenberg limit} for all estimators.}
        \label{fig:distance}
    \end{figure*}
Besides, we notice from the simulation result that with the variance scaling as $O(1/d^4)$, an accurate estimation of the atom distance only requires shallow depths of dynamics repetition. From the Figure.~\ref{fig:distance}, to learn the distance with standard deviation $\delta R \approx 10^{-7}$, i.e. normalized variance $\frac{\delta R}{R} \approx 0.01$, only $d = 10$ repetition is needed.

\section{\label{sec:discussion} Conclusion}
In this work, we present the first in-situ Hamiltonian learning algorithm that is robust against a broad range of systematic and environmental noise. By exploiting the system Hamiltonian’s parallel structure, the algorithm achieves a quadratic speedup in sample complexity relative to the number of parameters compared with existing approaches for general Hamiltonian models \cite{huang2023learning,hu2025ansatz,ma2024learning,li2024heisenberg,ni2024quantum}. This demonstrates the potential of leveraging known Hamiltonian structures to efficiently certify quantum devices. Another important advantage of our learning algorithm is its ability to perform in-situ learning: all couplings remain active during the process, allowing us to capture crosstalk effects that arise only in the full many-body setting. This leads to a more accurate characterization of noise in current quantum systems compared to previous pairwise calibration approaches~\cite{andersen2025thermalization,dag2024distanceerror,bernien2017probing}. Looking ahead, this capability could be extended to enable calibration protocols dealing with correlated noise, provide deeper insights into non-Markovian noise processes, and guide the design of error-mitigation strategies that consider many-body interaction effects.

Besides, the algorithm attains the optimal precision and remains resilient to SPAM, decoherence, and time-dependent coherent errors. These results highlight its potential for practical deployment on NISQ devices, enabling high-precision learning of structured Hamiltonians in realistic settings. We provide the end-to-end proposal for learning distance error for Rydberg atom arrays. Unlike the previous approaches, which are only capable to learn distance pairwisely, we introduce the first in-situ learning algorithm for Rydberg atom distance learning. This feature is particularly crucial for Rydberg systems, where couplings cannot be naturally switched off without physically moving the atoms. Moreover, since our approach does not rely on long-time evolutions and is applicable to both hybrid and fully analog devices, the proposal is directly compatible with current experimental platforms. Our method is also applicable to other platforms with a parallel-learnable Hamiltonian with no additional overhead. It will also be interesting to check whether other experiment-induced Hamiltonians can be converted to parallel-learnable via realizable unitary transformations \cite{cubitt2018universal}. Additionally, the broader question of how prior structural knowledge can further reduce learning costs remains an open and valuable direction for future research. A comprehensive evaluation across various physical devices, through either simulation or real quantum experiments, is also left for future investigation.

\begin{acknowledgments}

We thank Grecia Castelazo and Srijan Nikhar for helpful discussions.  S.L. and X.W.~were supported by Air Force Office of Scientific Research under award number FA9550-21-1-0209, the U.S. National Science Foundation grant CCF-1942837 (CAREER), and a Sloan research fellowship. M. N. is
supported by the U.S. National Science Foundation grant CCF-2441912(CAREER), Air Force Office of Scientific Research under award number FA9550-25-1-0146, and   the U.S. Department of Energy,  Office of  Advanced Scientific Computing Research  under Award Number DE-SC0025430.

\end{acknowledgments}

\bibliography{main}


\clearpage

\onecolumngrid
\begin{center}
\textbf{\large Supplementary Material}
\end{center}
   
\setcounter{section}{0}
\setcounter{equation}{0}
\setcounter{figure}{0}
\setcounter{table}{0}
\setcounter{page}{1}
\renewcommand{\theequation}{S\arabic{equation}}
\renewcommand{\thefigure}{S\arabic{figure}}
\renewcommand{\thetable}{S\arabic{table}}

\vspace{.5cm}
\section{Parallel-learnable Hamiltonian}\label{SM:pl hamiltonian def}
In this section, we provide additional details on the structure of parallel-learnable Hamiltonians and the construction of corresponding unitary transformations. As described in the main text, an $n$-qubit Hamiltonian is said to be \emph{parallel-learnable} if it admits a decomposition into \( 2^{n-1} \) invariant subspaces, under a specific basis. This structure enables the parallel estimation of multiple parameters by tackling them within independent \( 2 \times 2 \) subspaces simultaneously. To obtain such a block-diagonal form, we apply a unitary transformation \( U \) that maps the Hamiltonian from the computational basis to a basis where the invariant subspaces are presented. In principle, any Hamiltonian \( H \) can be brought into a block-diagonal form consisting of non-diagonal \( 2 \times 2 \) blocks through a specially designed transformation \( U := R V \), where \( V \) is the eigenbasis matrix that diagonalizes \( H \), and \( R \) is a block-wise rotation that converts the diagonal \( 2 \times 2 \) blocks into non-diagonal ones. This transformation results in a Hamiltonian of the form described in Eq.~\eqref{eqn:pl H} of the main text. In the following, we elaborate on this construction for general Hamiltonians.

Consider the diagnolization of the Hamiltonain $H$, i.e.,
\[H = V\Lambda V^{\dagger},\] where \( V \) is the eigenbasis matrix, consisting all the eigenvectors of $H$ as the columns. Changing the basis of $H$ from computational basis to eigenbases results in the diagonal structure,
\begin{equation}
    \Lambda = V^{\dagger}HV = \oplus_{k = 1}^{2^{n-1}}\Lambda_k, ~~\text{where}~~ \Lambda_k = \begin{bmatrix}
        \lambda_k^{1} & 0 \\
        0 & \lambda_k^{2}
    \end{bmatrix} ~~\text{and} ~~ \{\lambda_k^{i}\}~~\text{are eigenvalues of }H.
\end{equation}
The resultant matrix $\Lambda$ after basis change is a direct sum of $2 \times 2$ diagonal matrices, which does not fit in the definition. So we define another unitary transformation matrix,
\begin{equation}
    R = \oplus_{k = 1}^{2^{n-1}}R_k(\theta_k), ~~\text{where}~~ R_k(\theta_k) = \begin{bmatrix}
        \cos\theta_k & -\sin\theta_k\\
        \sin\theta_k & \cos\theta_k
    \end{bmatrix}. 
\end{equation}
$R$ is a block-diagonal rotation matrix acting on each $2\times2$ block. So after applying the unitary transformation $R$, the matrix is transformed to
\begin{equation}
    H' = R\Lambda R^{\dagger} = \oplus_{k = 1}^{2^{n-1}} H_k,~~\text{where}~~H_k = \begin{bmatrix}
        \lambda_k^1\cos^2\theta_k+\lambda_k^2\sin^2\theta_k & (\lambda_k^1-\lambda_k^2)\cos\theta_k\sin\theta_k\\
        (\lambda_k^1-\lambda_k^2)\cos\theta_k\sin\theta_k&  \lambda_k^1\sin^2\theta_k+\lambda_k^2\cos^2\theta_k
    \end{bmatrix}.
\end{equation}
$H_k$'s are non-diagonal if $\lambda_k^1 \ne \lambda_k^2$ and $\theta_k \notin\{0,\pi/2\}$. As a result, the overall unitary transformation \( U := R V \) can transform any Hamiltonian \( H \) into the desired block structured form. It is important to note that this transformation is not unique and there exist multiple choices of unitary transformations that achieve the same structural transformation. The particular construction presented here relies on the eigenbasis matrix \( V \), which may be challenging to implement on current quantum hardware. To ensure the practicality of our learning algorithm, we therefore restrict the definition of parallel-learnable Hamiltonians to those admitting realizable unitary transformations.

Here we present some examples of parallel-learnable Hamiltonians with their corresponding realizable unitary transformations. These examples demonstrate how certain structured Hamiltonians can be transformed into a direct sum of \( 2 \times 2 \) non-diagonal blocks using unitary operations that are feasible to implement on current quantum hardware. For each case, we explicitly construct the unitary transformations.

\paragraph{Hadarmad-transformed} Consider the Hamiltonian in the following form
\begin{equation}
    H = \frac{1}{4}\begin{bmatrix}
         2a + 2b + 2c + 2d   &     2a - 2b         &     2c - 2d & 0 \\
          2a - 2b    &    2a + 2b - 2c - 2d      &      0 &                -2c+2d\\
          2c - 2d          &       0     &       2a + 2b + 2c + 2d   &     2a-2b\\
           0          &       -2c + 2d           &   2a - 2b       & 2a + 2b-2c-2d
    \end{bmatrix},
\end{equation}
it does not have clear invariant subspaces under the computational basis. However, it is still parallel-learnable since under the transformation
\begin{equation}\label{eq:hadamard transformation}
    U = \frac{1}{4}\begin{bmatrix}
        1 & 1 & 1 &1\\
        1 & -1 & 1 & -1\\
        1 & 1 & -1 & -1\\
        1 & -1& -1 & 1
    \end{bmatrix},
\end{equation}
it can be written in the direct sum of $2  \times 2$ non-diagonal matrix, i.e.,
\begin{equation}
    U^{\dagger}HU = \begin{bmatrix}
        a & c & 0 & 0\\
        c & b & 0 & 0\\
        0 & 0 & a & d\\
        0 & 0& d & b
    \end{bmatrix}.
\end{equation}
Also, the unitary transformation $U$ is easy to implement using just Hadarmard gates, i.e., $U = \text{H}\otimes\text{H}$.

\paragraph{Permutation-transformed} 
Consider the Hamiltonian of the general XY model with local $Z$ term for two qubits as an example, i.e.,
\begin{equation}
    H = g_xX_1X_2+g_yY_1Y_2+g_1Z_1+g_2Z_2 = \begin{bmatrix}
        g_1+g_2 & 0 & 0 & g_x+g_y\\
        0 & g_1-g_2 & g_x+g_y & 0\\
        0 & g_1-g_2 & -g_1+g_2 & 0\\
        g_x-g_y & 0 & 0 & -g_1-g_2
    \end{bmatrix},
\end{equation}
it cannot be written as a direct sum in the normal computational bases. However, it has clear invariant subspaces under the computational basis. Therefore, the unitary transformations to make it into a direct sum are just permutations. Under the permutation transformation,
\begin{equation}
    U = \begin{bmatrix}\label{eqn:permutation transformation}
        1 & 0 & 0 & 0\\
        0 & 0 & 1 & 0\\
        0 & 0 & 0 & 1\\
        0 & 1& 0 & 0
    \end{bmatrix},
\end{equation}
the resulting matrix will be
\begin{equation}
     U^{\dagger}HU = \begin{bmatrix}
         g_1+g_2 &  g_x-g_y & 0 & 0\\
         g_x-g_y & -g_1-g_2 & 0 & 0\\
        0 & 0 & g_1-g_2 &  g_x+g_y\\
        0 & 0&  g_x+g_y & -g_1+g_2
    \end{bmatrix}.
\end{equation}
This $U$ transformation can be realized with CNOT gates, i.e., $U = \text{CNOT}_{1 \rightarrow 0} \cdot \text{CNOT}_{0 \rightarrow 1}$. This Hamiltonian is commonly used to approximate the superconducting qubit Hamiltonian and interactions in Trapped Ion gates~\cite{kjaergaard2020superconducting,siddiqi2021engineering,msgate}

\section{Quantum Signal Processing Estimation}\label{sec:QSPE}
Parallel structure of the time-evolution operator allows the usage of the QSPE method \cite{dong2025qspe} in different $2\times2$ invariant subspace simultaneously. In this section, we briefly introduce the idea of the QSPE method. The QSPE method is designed for estimating the angles of the gate containing a two-level invariant subspace. The unitary gate restricted to the invariant subspace is parametrized as
\begin{equation}
    U = \begin{bmatrix}
        \cos(\theta) e^{-i\zeta} & -i\sin(\theta)e^{i\chi} \\
        -i\sin(\theta)e^{-i\chi} & \cos(\theta)e^{i\zeta} 
    \end{bmatrix}\notag,
\end{equation}
where the $\theta$ is the swap angle, $\zeta$ is the phase difference and $\chi$ is the phase accumulated during the swap. Denote the $\{\ket{0_l},\ket{1_l}\}$ as the logical basis of the invariant subspace, the quantum circuits utilize in QSPE methods are constructed as follows. First, the Bell state $\ket{+_l}$ and $\ket{i_l}$ is prepared. Then $d$ repetitions of the target gate $U$ followed by a tunable $Z$ rotation with respect to parameter $\omega$ are applied to the circuit. Finally, the circuit is measured in computational basis for the transition probability $p_x$ (for initial state $\ket{+_l}$) and probability $p_y$ (for initial state $\ket{i_l}$). These steps are used to amplify the angle of $\theta$ and $\zeta$ to reach the Heisenberg limit in the following inference step.

After the quantum stage, the QSPE method focuses on a reconstructed function $h(\omega) = p_x -\frac{1}{2}+i(p_y-\frac{1}{2})$. It admits an approximated expansion, $ h(\omega) = \sum_{-d+1}^{d-1} c_ke^{2ik\omega}$ and if $d\theta << 1$, the Fourier coefficients 
\begin{equation} \label{eq:ck}
   c_k \approx i\theta e^{-i\chi}e^{-i(2k+1)\zeta}\quad\text{with}~k =0,\cdots,d-1.
\end{equation}
To characterize the information completely, sampling the reconstructed function $h$ on $2d-1$ distinct $\omega$ points is required from the Fourier expansion. In QSPE method, the set of $\omega$ points is chosen with equal space for the quantum stage. Moreover, two amplified angles $\theta$ and $\zeta$ are completely decoupled in Eq.~\eqref{eq:ck}. As a result, the complicated inference problem becomes two independent linear regression problems with respect to the amplitude and phase of Fourier coefficients, leading to the estimators in QSPE method as follows.
\begin{equation}
    \hat{\theta} = \frac{1}{d}\sum_{k = 0}^{d-1}|c_k| \quad, \hat{\zeta} = \frac{1}{2}\frac{\vec{\mathbb{I}}^T\mathcal{D}^{-1}\vec{\Delta}}{\vec{\mathbb{I}}^T\mathcal{D}^{-1}\vec{\mathbb{I}}};
\end{equation}
where $\mathbb{I}$ is an all-one vector and $\mathcal{D}$ is the $(d-1)\times(d-1)$discrete Laplacian matrix as
\begin{equation}\label{eqn:def D}
    \mathcal{D} = \begin{bmatrix}
        2 & -1 & 0 &\cdots& 0\\
        -1 & 2 & -1 &\cdots& 0\\
        0 & -1 & 2 &\cdots &0\\
        \vdots & \vdots & \vdots & &\vdots\\
        0 & 0 & 0 & \cdots & 2
    \end{bmatrix},
\end{equation}
and $\mathbf{\Delta}: = (\Delta_0,\cdots,\Delta_{d-2})^T$ with the sequential phase difference is $\Delta_k : = \text{phase}(c_k\bar{c}_{k+1}).$ 

The variances of $(\hat{\theta},\hat{\zeta})$ in QSPE are
\begin{equation}
    \var(\hat{\theta}) \approx \frac{1}{8Nd^2}~,\var(\zeta)\approx \frac{3}{8Nd^4\theta^2},
\end{equation}
where $N$ is the number of shots and $d$ is the repetition depths. 

\section{Implementation Details of Algorithm \ref{alg:qspe}}\label{sec:implement detials for two qubit}
To apply the QSPE method~\cite{dong2025qspe} in Algorithm~\ref{alg:qspe}, adaptations of the original digital-gate-based QSPE protocols are required for physical implementation on quantum simulators. While the Rydberg atom quantum simulator is used as the implementation platform in this work, the proposed methods are equally applicable to other quantum simulation platforms. Since $\mathcal{U}_{\mathcal{B}}$ in Eq.~\eqref{eqn:UB_2 atom} is equivalent to the standard unitary $\mathcal{U}$ in Eq.~\eqref{eq:u gate} corresponding to the mapping from Eq.~\eqref{eqn: map}, thus adapting the QSPE to the defined logical subspace provides the characterization of unknown Hamiltonian coefficients. The adaptations contain: 1) prepare the logical $\ket{+_l}$ and $\ket{i_l}$ states; 2) applying logical control $Z_l$ term; 3) measuring transition probability with respect to the logical $\ket{0_l}$ state.

\textit{1) Adaptation 1}
To prepare logical Bell state $\ket{+_l} = \frac{1}{\sqrt{2}}(\ket{0_l}+\ket{1_l} )= \frac{1}{\sqrt{2}}(\ket{00}+\ket{10}$) and $\ket{i_l} = \frac{1}{\sqrt{2}}(\ket{0_l}+i\ket{1_l} )= \frac{1}{\sqrt{2}}(\ket{00}+i\ket{10}$), we make use of X and Y pulse correspondingly. First, we initialize both atoms in the ground state, i.e. $\ket{00}$, then consider X and Y rotations on the first atom, that is
\begin{align}
    e^{-i\theta X_1\otimes I}\ket{00} &= \cos\theta\ket{00}-i\sin\theta\ket{10}\label{eq:i_state_prep}\\ 
    e^{-i\theta Y_1\otimes I}\ket{00} &= \cos\theta\ket{00}+\sin\theta\ket{10};   \label{eq:plus_state_prep}
\end{align}
Thus, applying $-X(\frac{\pi}{4})$ leads to $\ket{i_l}$ and applying $Y(\frac{\pi}{4})$ leads to $\ket{0_l}$, where the negative sign is implemented by tuning phase $\phi = \pi$.

\textit{2) Adaptation 2}
Considering the invariant subspace $\mathcal{B}$, we want to effectively implement $Z_l$ acting on it. There are two options to achieve that, based on the control capability of the quantum system. In the analog-digital hybrid setting, applying global detuning to two atoms gives us the desired control unitary. Consider the global detuning $H = -\frac{\Delta(t)}{2}(Z_1+Z_2)$, the time-evolution operator is
\begin{align}
    \exp(-i\int_0^T H dt) &= \exp(-i\int_0^T -\frac{\Delta(t)}{2} dt (Z_1+Z_2)) \\
    & = \exp(-ic(Z_1+Z_2))\\
    & = \exp(-icZ_1)\exp(-icZ_2)\\
    & = \begin{bmatrix}
        e^{-i2c} & & &\\
        & 1 & & \\
         & & 1 & \\
         & & & e^{i2c}\\
    \end{bmatrix}\\
    & = \begin{bmatrix}
        e^{-ic} & & &\\
        & e^{ic} & & \\
         & & e^{ic} & \\
         & & & e^{i3c}\\
    \end{bmatrix} \text{up to a global phase}
\end{align}
where $c = \int_{0}^{T}\Delta(t)/2 dt$. Therefore, restricting to the invariant subspace gives us the control $Z_l$ with control parameter $c$.

In the fully analog setting, where we can not turn down the $ZZ$ interaction, then the control part can be provided by having global $Z$ interaction and atom-atom interaction together. Since $Z_1$,$Z_2$ and $Z_1Z_2$ commute, then we can write down the time evolution operator as:
\begin{align}
    \exp(-ic(Z_1+Z_2)-ibZ_1Z_2) &= \begin{bmatrix}
        e^{-ic} & & &\\
        & e^{ic} & & \\
         & & e^{ic} & \\
         & & & e^{i3c}
    \end{bmatrix} \begin{bmatrix}
        e^{-ib} & & &\\
        & e^{ib} & & \\
         & & e^{ib} & \\
         & & & e^{-ib}\\
    \end{bmatrix} \\
    &=  \begin{bmatrix}
        e^{-i(b+c)} & & &\\
        & e^{i(b+c)} & & \\
         & & e^{i(b+c)} & \\
         & & & e^{i(3c-b)}\\
    \end{bmatrix}
\end{align}
where $c = \int_{0}^{T}\Delta(t)/2 dt$ and $b = \int_{0}^{T}c_{12} dt$. Therefore, restricting to the invariant subspace gives us the control $Z_l$ with control parameter $b+c$.

\textit{3) Adaptation 3}
Measurements can also be done in the computational basis. We estimate the probability of getting $\ket{0_l} = \ket{00}$ by $\Pr(00) = N_{00}/N$, where $N_{00}$ is the number of string $00$ occurs and $N$ is the total shots.

Upon these three adaptations, the learning algorithm \ref{alg:qspe} can be implemented on the current Rydberg quantum device without any additional functionality required. 

\section{Robustness against coherent state preparation error}\label{sec:appendix robust coherent state prep}
Recall that the coherent error model considered here is the miscalibration in the state preparation step (discussed in the previous section in Adaptation 1. More specifically,
\begin{align}\label{eq:coherent state prep error}
      e^{-i(\theta+\alpha) X_1\otimes I}\ket{00} &= \cos(\theta+\alpha)\ket{00}-i\sin(\theta+\alpha)\ket{10} \notag\\ 
    e^{-i(\theta+\alpha)Y_1\otimes I}\ket{00} &= \cos(\theta+\alpha)\ket{00}+\sin(\theta+\alpha)\ket{10};\notag   
\end{align}
where $\alpha$ denotes the error rate. Inspired by the trick used for depolarizing noise, we construct a rescaled estimator reduce the error from first order to second order.
\begin{theorem*}[Restatement of Theorem III.1]
    Consider the coherent error Eq.~\eqref{eq:coherent state prep error} with error rate $\alpha$ in the state preparation step of Algorithm \ref{alg:qspe}, the difference between the noisy estimation $\hat{\theta'} = \frac{1}{d}\sum_{k = 0}^{d-1}|c_k'|$ and noiseless estimation $\hat{\theta} = \frac{1}{d}\sum_{k = 0}^{d-1}|c_k|$ can be bounded by
    \begin{equation}
      D =  |\hat{\theta}'-\hat{\theta}|\leq \sqrt{2}(d+1)^2\sin2\alpha\sin^2\theta+ (\cos2\alpha - 1)\theta,
    \end{equation}
    which scales with $O(\theta)$.
    The difference can be improved by introducing new type of estimator,i.e. scaled estimator $ \hat{\theta}_s' = \frac{1}{\cos2\alpha}\times \frac{1}{d}\sum_{k = 0}^{d-1}|c_k'|.$ The corresponding bound then reduces to 
    \begin{equation}
         D_s =  |\hat{\theta_s}'-\hat{\theta}|\leq  \sqrt{2}(d+1)^2\tan2\alpha \sin^2\theta,
    \end{equation}
    where scales only with $O(\theta^2)$.
\end{theorem*}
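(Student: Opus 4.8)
The plan is to track how the over-rotation in Eq.~\eqref{eq:coherent state prep error} propagates to the reconstruction function $h(\omega)$ used in Algorithm~\ref{alg:qspe}. Write $\mathcal{V}(\omega)$ for the logical-subspace propagator produced by $d$ cycles of $\mathcal{U}$ followed by $Z_l(\omega)$, and let $V_{00}(\omega),V_{01}(\omega)$ be its first-row entries; by unitarity of $\mathcal{V}$ one has, in the noiseless case, $p_X-\tfrac12=\operatorname{Re}(V_{00}\overline{V_{01}})$ and $p_Y-\tfrac12=\operatorname{Im}(V_{00}\overline{V_{01}})$, so $h=V_{00}\overline{V_{01}}$ and Eq.~\eqref{eq:ck} applies. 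I would then substitute the imperfect initial states $\ket{+_l}\mapsto a\ket{0_l}+b\ket{1_l}$ and $\ket{i_l}\mapsto a\ket{0_l}+ib\ket{1_l}$ with $a=\cos(\tfrac{\pi}{4}+\alpha)$, $b=\sin(\tfrac{\pi}{4}+\alpha)$, recompute $p_X',p_Y'$, and simplify using $2ab=\cos2\alpha$, $b^2-a^2=\sin2\alpha$, $a^2+b^2=1$. A direct computation yields the key identity
\begin{equation}\label{eqn:h-prime-id}
  h'(\omega)=\cos2\alpha\,\, h(\omega)+(1+i)\sin2\alpha\Bigl(|V_{01}(\omega)|^2-\tfrac12\Bigr),
\end{equation}
i.e.\ coherent state-preparation error acts as a uniform rescaling by $\cos2\alpha$ together with an additive, $\omega$-dependent correction of order $\sin2\alpha$.

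Passing to the Fourier domain, Eq.~\eqref{eqn:h-prime-id} gives $c_k'=\cos2\alpha\,c_k+(1+i)\sin2\alpha\,g_k$, where $g_k$ is the $k$-th Fourier coefficient of $|V_{01}(\cdot)|^2-\tfrac12$; for $k\neq0$ this is the $k$-th coefficient of $|V_{01}(\cdot)|^2$, while for $k=0$ it additionally carries the constant $-\tfrac12$. The quantitative ingredient I would use is the pre-asymptotic bound $|V_{01}(\omega)|\le(d+1)\sin\theta$, which follows because the swap amplitude accumulates sub-additively over the $d$ cycles (equivalently, $|V_{01}(\omega)|^2=|Q(\omega)|^2\approx|h(\omega)|^2$ with $|h|\le\sum_{k}|c_k|$); hence $|g_k|\le\max_\omega|V_{01}(\omega)|^2\le(d+1)^2\sin^2\theta$ for $k\neq0$. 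From $|c_k'|\le\cos2\alpha\,|c_k|+\sqrt2\sin2\alpha\,|g_k|$, so $|c_k'|-|c_k|\le(\cos2\alpha-1)|c_k|+\sqrt2\sin2\alpha\,|g_k|$, averaging over $k=0,\dots,d-1$ and using $\tfrac1d\sum_k|c_k|=\hat\theta\approx\theta$ from Eq.~\eqref{eq:ck} produces
\[
  \hat\theta'-\hat\theta\le(\cos2\alpha-1)\theta+\sqrt2\,(d+1)^2\sin2\alpha\sin^2\theta,
\]
the first claimed bound, whose dominant contribution is the $O(\theta)$ rescaling term $(\cos2\alpha-1)\theta$.

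For the scaled estimator the point is that this $O(\theta)$ term is precisely the rescaling artifact and is cancelled by dividing by $\cos2\alpha$: $\tfrac{1}{\cos2\alpha}|c_k'|=\bigl|c_k+(1+i)\tan2\alpha\,g_k\bigr|$, so $\tfrac{1}{\cos2\alpha}|c_k'|-|c_k|\le\sqrt2\tan2\alpha\,|g_k|$, and averaging over $k$ leaves only the additive correction,
\[
  \hat\theta_s'-\hat\theta\le\sqrt2\,(d+1)^2\tan2\alpha\sin^2\theta=O(\theta^2),
\]
so the rescaling step has demoted the leading $O(\theta)$ error to $O(\theta^2)$, as claimed.

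The main obstacle is the $k=0$ mode: the constant $-\tfrac12$ in Eq.~\eqref{eqn:h-prime-id} lives entirely in $g_0$, and $|g_0|$ is $O(1)$ rather than $O(\sin^2\theta)$, so bounding $|c_0'-c_0|$ naively would leave a spurious $O(\sin2\alpha/d)$ term in $D$ (and $O(\tan2\alpha/d)$ in $D_s$). Handling it requires the same device used in the depolarizing analysis — relating $|c_0|$ to the higher coefficients $\{|c_k|\}_{k\ge1}$ so that the constant shift in $h'$ is correctly accounted for and its net effect on $\hat\theta'$ is suppressed — rather than a term-by-term triangle inequality. A secondary technical point is checking that the perturbed coefficients $c_k'$ remain close enough to $c_k$ that the approximations behind Eq.~\eqref{eq:ck} (valid when $d\theta\ll1$) are not invalidated; this is where the smallness of $\theta$ is invoked.
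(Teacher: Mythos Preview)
Your approach matches the paper's exactly: the paper also derives (via the QSP polynomials $P^{(d)},Q^{(d)}$) that $h'=\cos2\alpha\,h+[\text{shift}]$ with the $\omega$-dependent part of the shift governed by $C_{\omega,d,\theta}:=2\sin^2\theta\,\frac{\sin^2((d+1)\sigma)}{\sin^2\sigma}=2|V_{01}|^2$, then passes to Fourier coefficients, bounds $|c_k'-\cos2\alpha\,c_k|\le\frac{\sin2\alpha}{\sqrt2}\max_\omega C_{\omega,d,\theta}\le\sqrt2\,(d+1)^2\sin2\alpha\sin^2\theta$, and concludes by the same triangle-inequality-plus-averaging you outline; the scaled-estimator bound follows by dividing through by $\cos2\alpha$.

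Two remarks. First, your $(1+i)$ versus the paper's effective $(1-i)$ in the additive correction is a sign-convention artifact: in the paper the over-rotation enters $\ket{i_l}'$ with the \emph{opposite} sign to $\ket{+_l}'$ (because $\ket{i_l}$ is prepared by $-X(\tfrac{\pi}{4})$ and $\ket{+_l}$ by $Y(\tfrac{\pi}{4})$), whereas you use the same $(a,b)$ for both; this is irrelevant to the moduli. Second, the $k=0$ obstacle you flag is genuine, but the paper does \emph{not} resolve it: in passing from the full shift to the $C$-only integral the paper silently drops the $\omega$-independent constant, which is justified only for $k\ne0$ (where $\int_0^\pi e^{-i2k\omega}\,d\omega=0$). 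So do not look for an extra device in the paper's argument --- your term-by-term route already reproduces everything the paper does, and your caution about $c_0'$ is in fact more careful than the published proof.
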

\begin{proof}
Consider the coherent error Eq.~\eqref{eq:coherent state prep error},the corresponding effective initial states $\ket{+_l}'$ and $\ket{i_l}'$ are in the forms of 
\begin{align}
   & \ket{+_l}' = \begin{bmatrix}
        \cos(\theta+\alpha)\\
        0\\
        \sin(\theta+\alpha)\\
        0
    \end{bmatrix} = \frac{1}{\sqrt{2}}\begin{bmatrix}
        \cos\alpha-\sin\alpha\\
        0\\
        \cos\alpha+\sin\alpha\\
        0
    \end{bmatrix} ,\\
& \ket{i_l}' = \begin{bmatrix}
        \cos(\theta+\alpha)\\
        0\\
        -i\sin(\theta+\alpha)\\
        0
    \end{bmatrix} = \frac{1}{\sqrt{2}}\begin{bmatrix}
        \cos\alpha+\sin\alpha\\
        0\\
        i(\cos\alpha-\sin\alpha)\\
        0
    \end{bmatrix} .\\
\end{align}
Recall the repeated circuits used in Algorithm \ref{alg:qspe} satisfy the form of
\begin{equation}
    U^{(d)} = 
    \begin{bmatrix}
    P^{(d)}(\cos\theta ) & i\sin\theta Q^{(d)}(\cos\theta)\\
    i\sin(\theta)Q^{(d)}(\cos\theta) & P^{(d)*}(\cos\theta)),
    \end{bmatrix}
\end{equation}
therefore the corresponding probability $p_X'$ and $p_Y'$ will be
\begin{align}
    p_X' & = |e^{i\frac{\chi+\pi+\varphi}{2}Z}\bra{0_l}U^{(d)}(\omega-\varphi,\theta)e^{-i(\omega+\frac{\chi+\pi+\varphi}{2})Z}\ket{+_l}'|^2\\
    & =|\frac{1}{\sqrt{2}}e^{i\frac{\chi+\pi+\varphi}{2}Z}\bra{0_l}U^{(d)}(\omega-\varphi,\theta)(e^{-i(\omega+\frac{\chi+\pi+\varphi}{2})}(\cos\alpha-\sin\alpha)\ket{0_l}+e^{i(\omega+\frac{\chi+\pi+\varphi}{2})}(\cos\alpha+\sin\alpha)\ket{1_l})|^2\\
    & = \frac{1}{2}|(e^{-i(\omega+\frac{\chi+\pi+\varphi}{2})}(\cos\alpha-\sin\alpha)P^{(d)}(\cos\theta )+i\sin\theta e^{i(\omega+\frac{\chi+\pi+\varphi}{2})}(\cos\alpha+\sin\alpha)Q^{(d)}(\cos\theta))|^2\\
    & = \frac{1}{2}\left((\cos\alpha-\sin\alpha)^2PP^{*}+\sin\theta^2(\cos\alpha+\sin\alpha)^2Q^2 \right) \nonumber\\
    & ~~ -\frac{1}{2}\left(i\sin\theta e^{-i2(\omega+\frac{\chi+\pi+\varphi}{2})}(\cos\alpha^2-\sin\alpha^2)QP+ i\sin\theta e^{i2(\omega+\frac{\chi+\pi+\varphi}{2})}(\cos\alpha^2-\sin\alpha^2)P^{*}Q \right)\\
    & = \frac{1}{2}\left((\cos\alpha-\sin\alpha)^2PP^{*}+\sin\theta^2(\cos\alpha+\sin\alpha)^2Q^2 + 2(\cos\alpha^2-\sin\alpha^2)\text{Re}(i\sin\theta e^{i(-2\omega-\chi+\varphi)}QP)\right) 
\end{align}
Then, we notice that 
\begin{align}
    & P^*P = \cos^2\left((d+1)\sigma\right)+\frac{\sin^2\left((d+1)\sigma\right)}{\sin^2\sigma}\sin^2\omega\cos^2\theta\\
    & \sin^2\theta Q^2 = \frac{\sin^2\left((d+1)\sigma\right)}{\sin^2\sigma}\sin^2\theta\\
    &  P^*P+\sin^2\theta Q^2 = 1\\
\end{align}
where $\cos\sigma = \cos\omega\cos\theta$.
Expanding the first term in the probability expression, 
\begin{align}
    (\cos\alpha-\sin\alpha)^2PP^{*}+\sin\theta^2(\cos\alpha+\sin\alpha)^2Q^2
    & = P^*P+\sin^2\theta Q^2 -2\cos\alpha\sin\alpha(P^*P-\sin^2\theta Q^2)\\
    & = 1-2\cos\alpha\sin\alpha(P^*P-\sin^2\theta Q^2).
\end{align}
Also, the minus term can be computed as 
\begin{align}
    P^*P-\sin^2\theta Q^2
    & = \cos^2\left((d+1)\sigma\right)+\frac{\sin^2\left((d+1)\sigma\right)}{\sin^2\sigma}(\sin^2\omega\cos^2\theta-\sin^2\theta)\\
    & = 1- \sin^2\left((d+1)\sigma\right) \frac{2\sin^2\theta}{\sin^2\sigma}\\
    & =  1- C_{\omega,d,\theta}\\
\end{align}
where we denote $C_{\omega,d,\theta} :=\frac{\sin^2\left((d+1)\omega\right)}{\sin^2\omega}$.

Thus, combing all above, the effective probability $p_X'$ is
\begin{align}
    p_X'&  = \frac{1}{2}[1-2\cos\alpha\sin\alpha(1-C_{\omega,d,\theta})]+ (\cos\alpha^2-\sin\alpha^2)\text{Re}(i\sin\theta e^{i(-2\omega-\chi+\varphi)}QP)\\
    & = \cos2\alpha p_X +[\frac{1-\cos2\alpha-\sin2\alpha(1-C_{\omega,d,\theta})}{2}],
\end{align}
where $p_X$ is the probability without noise.

Similarly, for the initial state $\ket{i_l}'$, we end up with 
\begin{align}
    p_Y'
    & = \frac{1}{2}\left((\cos\alpha+\sin\alpha)^2PP^{*}+\sin\theta^2(\cos\alpha-\sin\alpha)^2Q^2 + 2(\cos\alpha^2-\sin\alpha^2)\text{Im}(i\sin\theta e^{i(-2\omega-\chi+\varphi)}QP)\right) \\
    &  = \frac{1}{2}[1+2\cos\alpha\sin\alpha(1-C_{\omega,d,\theta})]+ (\cos\alpha^2-\sin\alpha^2)\text{Im}(i\sin\theta e^{i(-2\omega-\chi+\varphi)}QP)\\
    & = \cos2\alpha p_Y +([\frac{1-\cos2\alpha+\sin2\alpha(1-C_{\omega,d,\theta})}{2}].
\end{align}
Notice here that the effective probability is not simply constant shifted and scaled, so the trick for depolarizing channel does not work here.

Since 
$  h = p_X -\frac{1}{2}+i(p_Y-\frac{1}{2})$,
the effective $h'$ is
\begin{align}
    h' 
    & = p_X' -\frac{1}{2}+i(p_Y'-\frac{1}{2})\\
    & = \cos2\alpha h - \frac{\cos2\alpha+\sin2\alpha(1-C_{\omega,d,\theta})}{2}-\frac{\cos2\alpha-\sin2\alpha(1-C_{\omega,d,\theta})}{2}i.
\end{align}

Recall that the estimator in the QSPE algorithm \ref{alg:qspe} is of the form of:
\begin{equation}
     \hat{\theta} = \frac{1}{d}\sum_{k = 0}^{d-1}|c_k|,\label{eqn:theta_estimator}
\end{equation}
where the Fourier transform gives the desired coefficient for inferring our target parameters, i.e.
\begin{equation}
    c_k = \frac{1}{\pi}\int_{0}^{\pi} e^{-i(2k\omega)}hd\omega.
\end{equation}

\textbf{Original estimator}
Coherent error in state preparation will induce errors in probability $p_X$ and $p_Y$, resulting in errors in $h$. Thus, using the same estimator given in Eq.~\eqref{eqn:theta_estimator}, we can measure the performance of this estimator as follows.

First, we do the Fourier transform on $h'$ to get coefficients $c_k'$ for the original estimator.
\begin{align}
    c_k' & =  \frac{1}{\pi}\int_{0}^{\pi} e^{-i(2k\omega)}h' d\omega\\
    & = \frac{1}{\pi}\int_{0}^{\pi} e^{-i(2k\omega)}[\cos2\alpha h - \frac{\cos2\alpha+\sin2\alpha(1-C_{\omega,d,\theta})}{2}-\frac{\cos2\alpha-\sin2\alpha(1-C_{\omega,d,\theta})}{2}i]d\omega\\
    & = \cos2\alpha c_k - \frac{1}{\pi}\int_0^{\pi} e^{-i(2k\omega)}[\frac{\cos2\alpha+\sin2\alpha(1-C_{\omega,d,\theta})}{2}-\frac{\cos2\alpha-\sin2\alpha(1-C_{\omega,d,\theta})}{2}i] d\omega\\
    & = \cos2\alpha c_k - \frac{1}{\pi}\int_0^{\pi} e^{-i(2k\omega)}[ \frac{\cos2\alpha+\sin2\alpha}{2}+\frac{\cos2\alpha-\sin2\alpha}{2}i]-e^{-i(2k\omega)}[\frac{\sin2\alpha C_{\omega,d,\theta}}{2}+\frac{\sin2\alpha C_{\omega,d,\theta}}{2}i] d\omega\\
    & = \cos2\alpha c_k +\frac{\sin2\alpha}{2\pi}\int_{0}^{\pi} e^{-i(2k\omega)}(C_{\omega,d,\theta}+C_{\omega,d,\theta}i)d\omega
\end{align}
where $C_{\omega,d,\theta} = 2\sin^2\theta\frac{\sin^2[(d+1)\sigma]}{\sin^2\sigma} \leq 2\sin^2\theta(d+1)^2$ and $\cos\sigma = \cos\omega\cos\theta$.

Then the amplitude of the coefficients can be bounded as follows:
\begin{align}
    |c_k'| & \leq  \cos2\alpha|c_k| +\frac{\sin2\alpha}{2\pi}\int_{0}^{\pi} |e^{-i(2k\omega)}(C_{\omega,d,\theta}+C_{\omega,d,\theta}i)|d\omega\\
    &  = \cos2\alpha|c_k| +\frac{\sin2\alpha}{2\pi}\int_{0}^{\pi} \sqrt{2}|(C_{\omega,d,\theta})|d\omega\\
    & \leq \cos2\alpha|c_k| +\frac{\sin2\alpha}{\sqrt{2}}\max_{[0,\pi]} |(C_{\omega,d,\theta})|\\
    & \leq \cos2\alpha|c_k| +\sqrt{2}\sin2\alpha\sin^2\theta(d+1)^2\\
\end{align}

Therefore, the difference between the original estimator under coherent state preparation error 
$
    \hat{\theta}' = \frac{1}{d}\sum_{k = 0}^{d-1}|c_k'|
$
and the noiseless estimator 
$\hat{\theta} = \frac{1}{d}\sum_{k = 0}^{d-1}|c_k|$ will be:
\begin{align}
    |\hat{\theta}'-\hat{\theta}| &= \frac{1}{d}\sum_{k = 0}^{d-1} |c_k'|-|c_k|\\
    & \leq  \frac{1}{d}\sum_{k = 0}^{d-1} |c_k'-c_k|\\
    & =  \frac{1}{d}\sum_{k = 0}^{d-1} |c_k'-\cos2\alpha c_k+\cos2\alpha c_k-c_k|\\
    & \leq  \frac{1}{d}\sum_{k = 0}^{d-1} |c_k'-\cos2\alpha c_k|+|\cos2\alpha c_k-c_k|\\
    & \leq  \frac{1}{d}\sum_{k = 0}^{d-1} \sqrt{2}(d+1)^2\sin2\alpha\sin^2\theta+ \frac{1}{d}\sum_{k = 0}^{d-1}|\cos2\alpha c_k-c_k|\\
    & = \sqrt{2}(d+1)^2\sin2\alpha\sin^2\theta+ (\cos2\alpha - 1)\theta
\end{align}

\textbf{Scaled Estimator}
Inspired by the above analysis, we introduce a new estimator by scaling the original one with $1/\cos2\alpha$, i.e.
$ \hat{\theta}' = \frac{1}{\cos2\alpha}\times \frac{1}{d}\sum_{k = 0}^{d-1}|c_k'|.$

Then the corresponding bias will be:
\begin{align}
    |\hat{\theta}' - \hat{\theta}|& = \frac{1}{d}\sum_{k = 0}^{d-1}|\frac{1}{\cos2\alpha}c_k'| - |c_k|\\
    & \leq \frac{1}{d}\sum_{k = 0}^{d-1}|\frac{1}{\cos2\alpha}c_k' - c_k|\\
    &  = \frac{1}{d}\sum_{k = 0}^{d-1} \sqrt{2}\sin2\alpha\sin^2\theta(d+1)^2/\cos2\alpha\\
    & = \sqrt{2}(d+1)^2\tan2\alpha \sin^2\theta.
\end{align}
\end{proof}

\section{Variance computation for Algorithm \ref{alg:qspe-parallel}}\label{apx:var}
For each full Hamiltonian, we simultaneously work on $(n-1)$ invariant subspaces as defined in Eq.~\eqref{def:invariant subspace}. As a result, we should prepare the initial logical Bell states with respect to each invariant subspace and superpose them. Due to the normalization factor, the final state restricted to each invariant subspace gains a $1/\sqrt{(n-1)}$ factor compared to the standard Bell pair. Without loss of generality, let us focus on the full Hamiltonian $H_1$ and the $k$th invariant subspace of the Hamiltonian. Denote the transition probability with initial plus state as $p_k^M(\omega_j;\theta,\phi,\chi)$, the relation between the transition probability with superposition of the Bell pairs and the standard Bell pair is
\begin{equation}\label{eqn: transition relation}
    p_k^M(\omega_j;\theta,\phi,\chi) = p^B(\omega_j;\theta,\phi,\chi)/(n-1).
\end{equation}

Consider the sampling error in the system. Let $N$ be the number of measurements in one experiment with the set of parameters $(\omega_j;\theta,\phi,\chi)$, the i.i.d. measurement outcomes are sampled from a Multinomial distribution such that
\begin{equation}
    \sum_{i =1}^{(n-1)} p_i^M(b_i)+ p_i^M(1-b_i) = 1,
\end{equation}
where $b_i$ is the bitstring corresponding to the basis that spans the $i$th invariant subspace and $1$ is the all-one bitstring. 

Given $N$ large enough samples, the transition probability corresponding to each invariant subspace is estimated by 
\begin{equation}
    \hat{p}^{M}_k(\omega_j) = \frac{\# (x =b_k) }{N},
\end{equation}
where $b_k$ is the logical zero state for the $k$th invariant subspace. Focusing on this specific transition probability for the $k$th invariant subspace, the Multinomial distribution can be treated as a Bernoulli distribution with probability $p_k^M$. As a result, when the sample size $N$ is large enough, the central limit theorem leads to the normal distribution with mean $\mathbb{E}[\hat{p}^M_{k}(\omega_j)] = p_k^M$ and variance follows 
\begin{equation}
    \Sigma_k^2 = \frac{p_k^M(1-p_k^M)}{N} = \frac{1}{4N}-\frac{(p_k^M-\frac{1}{2})^2}{N}.
\end{equation}

Considering the standard Bell pair as the initial state for this specific invariant subspace, the transition probability is $p^B := (n-1)p_k^M$. Let the sample size still be $N$, the MLE for the transition probability will be $\hat{p}^B = \frac{\#(x =0)}{N}$ with expectation $\mathbb{E}(\hat{p}^B) = p^B$ and variance $\text{Var}(\hat{p}^B) = \frac{p^B(1-p^B)}{N}$. From \cite{dong2025qspe}, we notice the variance is bounded by $[\frac{1}{4N}-\frac{(d\theta)^2}{M}, \frac{1}{4N}]$. For the inference process, we utilize the sample data from the Multinomial senario to estimate the starndard transition probability following Eq.~\eqref{eqn: transition relation}, the estimator is defined as
\begin{equation}
    \hat{p}^M_{k,res} = (n-1)\hat{p}^{M}_k(\omega_j) = \frac{(n-1)\# (x =b_k) }{N},
\end{equation}
where $ \hat{p}^M_{k,res} $ denotes the rescaled probability. Then the expectation and variance value of $ \hat{p}^M_{k,res} $ can be computed as follows.
\begin{equation}
    \mathbb{E}( \hat{p}^M_{k,res} ) = (n-1)\mathbb{E}[\hat{p}^M_{k}] = (n-1)p_k^M = p^B;
\end{equation}
\begin{align}\label{eqn: rescale var}
    \text{Var}(\hat{p}^M_{k,res}) &= (n-1)^2 \text{Var}(\hat{p}^M_{k}) = (n-1)^2  \frac{p_k^M(1-p_k^M)}{N} = \frac{p^B(n-1-p^B)}{N},
\end{align}
where the last inequality follows Eq.~\eqref{eqn: transition relation}. Rewrite the Eq.~\eqref{eqn: rescale var}, we derive the upper and lower bounds of it as follows.
\begin{align}
    \text{Var}(\hat{p}^M_{k,res}) & = \frac{p^B(n-1-p^B)}{N}\\
    & = -\frac{(p^B-\frac{1}{2})^2}{N} -\frac{p^B}{N}+\frac{1}{4N}+\frac{(n-1)p^B}{N}\\
    & = -\frac{(p^B-\frac{1}{2})^2}{N} +
    (n-2)\frac{(p^B-\frac{1}{2})}{N}+\frac{(2n-3)}{4N}.\\
\end{align}
Given that $(p^B-\frac{1}{2})^2\leq (d\theta)^2$, we can bound the variance as
\begin{equation}\label{eqn:variance bound}
  \frac{(2n-3)}{4N}-[\frac{(d\theta)^2+(n-2)(d\theta)}{N}]\leq   \text{Var}(\hat{p}^M_{k,res}) \leq \frac{n-2}{N},
\end{equation}
where the left inequality matches when $(p^B-\frac{1}{2}) = -d\theta$ and right inequality matches when $(p^B-\frac{1}{2}) = \frac{1}{2}$. 

Then, we compute the Fourier coefficients of the transition probability $\{\hat{p}^M_{k,res}(\omega_j)\}$ from the experimental data. The FFT is performed on the reconstructed function for the $k$th invariant subspace, i.e. $\hat{h}_j:= \hat{p}_{X,res}^M+ i \hat{p}_{Y,res}^M -\frac{1+i}{2}$. The expectation value of this reconstructed function follows $\mathbb{E}[\hat{h}_j] = p^B_X+ip^B_Y-\frac{1+i}{2}$ admitting the analytical form. We compute the covariance matrix of the reconstructed function as follows. Let $u_j = \hat{h}_j-\mathbb{E}[\hat{h}_j]$, it holds that
\begin{align}
    & \text{for any}~j~, \mathbb{E}(u_j) = 0 , \mathbb{E}(|u_j|^2) = \text{Var}(\hat{p}_{X,res}^M)+\text{Var}(\hat{p}_{Y,res}^M) =\frac{p_X^B(n-1-p_X^B)}{N} +\frac{p_Y^B(n-1-p_Y^B)}{N}\\
    & \text{for any} ~ j\neq j', \mathbb{E}(u_ju_{j'}) = 0.
\end{align}
By doing FFT, the coefficients extracted are
\begin{equation}
\begin{bmatrix}
    \hat{c}_0\\
    \vdots\\
    \hat{c}_{d-1}\\
    \hat{c}_{-d+1}\\
    \vdots\\
    \hat{c}_{-1}
\end{bmatrix}
 = \frac{1}{2d-1}\Omega^{\dagger}
 \begin{bmatrix}
     \hat{h}_0\\
     \vdots\\
     \hat{h}_{2d-2}
 \end{bmatrix}, ~\text{where}~ \Omega_{jk} = e^{i\frac{2\pi jk}{2d-1}}.
\end{equation}

Assume the Fourier coefficients are approximately distributed following the complex normal distribution as
\begin{equation}
    \hat{c}_k  = c_k +\begin{cases}
        v_k, ~~~~~~~~~~~k= 0,\cdots, d-1;\\
        v_{2d-1+k},~~ k = -d+1, \cdots, -1,
    \end{cases}
\end{equation}
where $v_k$'s are complex normally distributed random variables. Next, we compute the mean and covariance matrix of the $v_k$’s, which offers insight into the error introduced by finite sampling directly. Since for all rescaled transition probabilities, they are be written as
\begin{equation}
    \hat{p}^M_{X/Y,res}(\omega_j) = p^B_{X/Y} + st(\hat{p}^M_{X/Y,res})\epsilon_i,
\end{equation}
where $\epsilon_i \sim N(0,1)$ and $\frac{(2n-3)}{4N}-[\frac{(d\theta)^2+(n-2)(d\theta)}{N}]\leq   \text{Var}(\hat{p}^M_{k,res}) \leq \frac{n-2}{N}$. Then the reconstructed function can also be interpreted as
\begin{equation}
    \hat{h}_j:= \hat{p}_{X,res}^M+ i \hat{p}_{Y,res}^M -\frac{1+i}{2} = p_X^B+ip_Y^B-\frac{1+i}{2}+ st(\hat{p}^M_{X/Y,res})\epsilon_i.
\end{equation}
By linearity, 
\begin{equation}
    v_k = \frac{1}{2d-1}(\Omega^{\dagger}u) = \frac{1}{2d-1}\sum_{j = 0}^{2d-2} \overline{\Omega_{kj}}u_j,
\end{equation}
where $u = [u_0 \cdots u_{2d-2}]^T$, the vector contains all $u_j = \hat{h}_j-\mathbb{E}[\hat{h}_j]$.

The expectation value is $\mathbb{E}(v_k) =\frac{1}{2d-1}\sum_{j = 0}^{2d-2} \overline{\Omega_{kj}}\mathbb{E}(u_j) = 0$, and the covariance is 
\begin{equation}
    \mathbb{E}(v_k\overline{v}_{k'}) = \frac{1}{(2d-1)^2} \sum_{j,j' = 0}^{2d-2}\overline{\Omega_{kj}}\Omega_{k'j'}\mathbb{E}(u_j\overline{u_{j'}}) = \frac{1}{(2d-1)^2} \sum_{j,j' = 0}^{2d-2}e^{i\frac{2\pi}{2d-1}(k'-k)j}\mathbb{E}(|u_j|^2) .
\end{equation}
When $k = k'$, the variance terms are
\begin{align}
    \mathbb{E}(|v_k|^2) & = \frac{1}{(2d-1)^2}\sum_{j = 0}^{2d-2} \mathbb{E}(|u_j|^2)\\
    &  = \frac{1}{(2d-1)^2}\sum_{j = 0}^{2d-2} -\frac{(p_X^B-\frac{1}{2})^2
    +(p_Y^B-\frac{1}{2})^2}{N} +
    (n-2)\frac{(p_X^B-\frac{1}{2})+(p_Y^B-\frac{1}{2})}{N}+\frac{(2n-3)}{2N}\\
    & = \frac{(2n-3)}{2N(2d-1)}+ \frac{1}{(2d-1)^2}\sum_{j = 0}^{2d-2} -\frac{(p_X^B-\frac{1}{2})^2
    +(p_Y^B-\frac{1}{2})^2}{N} +
    (n-2)\frac{(p_X^B-\frac{1}{2})+(p_Y^B-\frac{1}{2})}{N}.
\end{align} 
It can be bounded as
\begin{equation}\label{eqn: vk variance bd}
  \frac{(2n-3)-4(d\theta)^2-4(n-2)(d\theta)}{2N(2d-1)}  \leq \mathbb{E}(|v_k|^2)\leq  \frac{(2n-3)-4(d\theta)^2+4(n-2)(d\theta)}{2N(2d-1)}  ,
\end{equation}
where the conditions $(p_{X/Y}^B-\frac{1}{2})^2 \leq (d\theta)^2$ are used.

When $k \neq k'$, since $\sum_{j=0}^{2d-2}e^{i\frac{2\pi}{2d-1}(k'-k)j} =(2d-1)\delta_{kk'}$, the constant term $\frac{2n-3}{2N}$ vanishes. Thus,
\begin{align}
    |\mathbb{E}(v_k\overline{v}_{k'})|& = \left|\frac{1}{(2d-1)^2} \sum_{j = 0}^{2d-2}e^{i\frac{2\pi}{2d-1}(k'-k)j} \left(\frac{-[(p_X^B-\frac{1}{2})^2
    +(p_Y^B-\frac{1}{2})^2]}{N} +
    (n-2)\frac{(p_X^B-\frac{1}{2})+(p_Y^B-\frac{1}{2})}{N}\right)\right|\\
    & \leq \frac{1}{N(2d-1)^2} \sum_{j = 0}^{2d-2}\left|\left(\frac{-[(p_X^B-\frac{1}{2})^2
    +(p_Y^B-\frac{1}{2})^2]}{N} +
    (n-2)\frac{(p_X^B-\frac{1}{2})+(p_Y^B-\frac{1}{2})}{N}\right)\right|\\
    & \leq \frac{2(d\theta)^2+2(n-2)(d\theta)}{N(2d-1)}.
\end{align}

Consider the signal-to-noise ratio(SNR) for each Fourier coefficients
\begin{equation}
    \text{SNR}_k:=\frac{|c_k|^2}{\mathbb{E}(|v_k|^2)} \geq N(2d-1)\sin^2\theta(1-\frac{4}{3}(d\theta)^2(1+3d^3\theta^2))/(2n-4),
\end{equation}
this inequality comes from Theorem 12~\cite{dong2025qspe} and Eq.~\eqref{eqn: vk variance bd}.

When SNR is high, that is, $dM\theta^2/n >>1$, the decoupling of the $\theta$ and $\zeta$ dependence is well captured and for all $k = 0,\cdots, d-1,$ $d^5\theta^4 << 1$, the estimators are defined as
\begin{align}
    & \text{amplitude}(\hat{c}_k) \approx \theta +v_k^{(amp)};\\
    & \text{phase}(\hat{c}_k) = \frac{\pi}{2}-(2k+1)\zeta+v_k^{(pha)} ~~~\text{(up to 2}\pi-\text{periodicity)},
\end{align}
where $v_k^{(amp)}$ and $v_k^{(pha)}$ are normal distributed and can be approximately computed by $v_k^{(amp)} = \text{Re}(v_k)$, $v_k^{(pha)} = \text{Im}(v_k)/c_k(\theta)$ \cite{dong2025qspe}. Then we can write the covariance matrices as $\text{Cov}_{kk'}^{(amp)}:= \mathbb{E}(v_k^{(amp)}v_{k'}^{(amp)})$ and $\text{Cov}_{kk'}^{(pha)}:= \mathbb{E}(v_k^{(pha)}v_{k'}^{(pha)})$.

The covariance matrices can be computed from the real and imaginary components of the sampling error $v_k$ in Fourier coefficients. For any $k \neq k'$, 
\begin{equation}
    |\mathbb{E}(\text{Re}(v_k)\text{Re}(v_{k'})+\text{Im}(v_k)\text{Im}(v_{k'}))| = |\mathbb{E}(v_k\overline{v}_{k'})| \leq \frac{2(d\theta)^2+2(n-2)(d\theta)}{N(2d-1)}.
\end{equation}
For any $k$,
\begin{align}
 \frac{(2n-3)-4(d\theta)^2-4(n-2)(d\theta)}{2N(2d-1)}  &\leq  \mathbb{E}(\text{Re}^2(v_k)+ \text{Im}^2(v_k)) =  \mathbb{E}(|v_k|^2) \\
 &\leq  \frac{(2n-3)-4(d\theta)^2+4(n-2)(d\theta)}{2N(2d-1)} .
\end{align}

Then for any $k$,
\begin{align}
    |\mathbb{E}(\text{Re}^2(v_k)- \text{Im}^2(v_k))| & \leq  |\mathbb{E}(\text{Re}^2(v_k)- \text{Im}^2(v_k)+2i\text{Re}(v_k)\text{Im}(v_k))|\\
    & = |\mathbb{E}(v_k^2)| \leq \frac{1}{(2d-1)^2}\sum_{j = 0}^{2d-2} \mathbb{E}(|u_j^2|)  \\
    & = \frac{1}{(2d-1)^2}\sum_{j = 0}^{2d-2} |\text{Var}(\hat{p}_{X,res}^M)-\text{Var}(\hat{p}_{Y,res}^M) | \leq \frac{2(d\theta)^2+2(n-2)(d\theta)}{N(2d-1)}.
\end{align}
By the triangle inequality,
\begin{align}
    |\mathbb{E}(\text{Re}^2(v_k)) &- \frac{(2n-3)-4(d\theta)^2+4(n-2)(d\theta)}{4N(2d-1)} | \leq \frac{1}{2}|\mathbb{E}(\text{Re}^2(v_k)+ \text{Im}^2(v_k))\notag\\
    &- \frac{(2n-3)-4(d\theta)^2+4(n-2)(d\theta)}{4N(2d-1)}| + \frac{1}{2}   |\mathbb{E}(\text{Re}^2(v_k)- \text{Im}^2(v_k))| \\
    &\leq \frac{2(d\theta)^2+2(n-2)(d\theta)}{N(2d-1)}.
\end{align}
Similarly, for the imaginary part, we also have the above bound, i.e.,

\begin{equation}
     |\mathbb{E}(\text{Im}^2(v_k))- \frac{(2n-3)-4(d\theta)^2+4(n-2)(d\theta)}{4N(2d-1)}| \leq\frac{2(d\theta)^2+2(n-2)(d\theta)}{N(2d-1)}.
\end{equation}

Also, for any $k \neq k'$,
\begin{align}
    \mathbb{E}(\text{Im}(v_k)\text{Im}(v_{k'})) & \leq \frac{1}{2} |\mathbb{E}(\text{Re}(v_k)\text{Re}(v_{k'})+\text{Im}(v_k)\text{Im}(v_{k'}))| + \frac{1}{2} |\mathbb{E}(-\text{Re}(v_k)\text{Re}(v_{k'})+\text{Im}(v_k)\text{Im}(v_{k'}))| \\
    & \leq \frac{2(d\theta)^2+2(n-2)(d\theta)}{N(2d-1)}.
\end{align}
From \cite{dong2025qspe} Eq.~(82), when $d^3 \theta^2 << 1$ and $d\theta \leq \frac{1}{5}$, the following inequality holds, 
\begin{equation}
    |\frac{\sin\theta}{c_k{\theta}}| \leq \frac{1}{1-\frac{8}{3}(d\theta)^2} < \sqrt{\frac{4}{3}}.
\end{equation}
So for any $k \neq k'$, 
\begin{equation}
     | \mathbb{E}(v_k^{(pha)}v_{k'}^{(pha)})| = \frac{1}{|c_k(\theta)c_{k'}(\theta)|} |  \mathbb{E}(\text{Im}(v_k)\text{Im}(v_{k'}))|\leq \frac{4[2(d\theta)^2+2(n-2)(d\theta)]}{3N(2d-1)(\sin\theta)^2}.
\end{equation}
Also,
\begin{align}
       | \mathbb{E}((v_k^{(pha)})^2)- \frac{(2n-3)-4(d\theta)^2+4(n-2)(d\theta)}{4N(2d-1)\sin^2\theta}|  & \leq \frac{1}{c_k^2(\theta)} |\mathbb{E}(\text{Im}^2(v_k))- \frac{(2n-3)-4(d\theta)^2+4(n-2)(d\theta)}{4N(2d-1)}|\\
       &+  \frac{(2n-3)-4(d\theta)^2+4(n-2)(d\theta)}{4N(2d-1)\sin^2\theta}\frac{|c_k(\theta)^2-\sin^2\theta|}{c_k^2(\theta)}\\
       & \leq \frac{(2n-3)}{12N(2d-1)\theta^2}+\frac{7d^2}{3(2d-1)N}+\frac{3(n-2)d}{N(2d-1)\theta}.
\end{align}

Finally, we conclude the variance for the estimator \begin{equation}
    \hat{\theta} = \frac{1}{d}\sum_{k = 0}^{d-1}|c_k| \quad, \hat{\zeta} = \frac{1}{2}\frac{\vec{\mathbb{I}}^T\mathcal{D}^{-1}\vec{\Delta}}{\vec{\mathbb{I}}^T\mathcal{D}^{-1}\vec{\mathbb{I}}};
\end{equation}
where $\mathcal{D}$ is given in the Eq.~\eqref{eqn:def D}.

The variances are
\begin{equation}
    \text{Var}(\hat{\theta}) \approx \frac{2n-3}{4Nd(2d-1)} \approx \frac{n}{4Nd^2}~~~~\text{and} ~~~~\text{Var}(\hat{\zeta}) \approx \frac{3(2n-3)}{4Nd(2d-1)(d^2-1)\theta^2} \approx \frac{3n}{4Nd^4\theta^2}.
\end{equation}

\section{Cramer-Rao bound for Algorithm \ref{alg:qspe-parallel}}\label{apx:CR bound}
In this setting, the estimation is performed using $d$ repetitions with $N$ samples per circuit. Altogether, the learning procedure involves $2(2d-1)$ circuits, each with depth $\Theta(d)$. We compute the optimal variance bounded by Cramer-Rao lower bound for pre-asymptotic regime, when $d\theta <<1$, following the same analysis from \cite{dong2025qspe}.  

For the parallel learning algorithm, we simultaneously work on $(n-1)$ invariant subspaces and infer the $(\theta,\zeta)$ pair for each subspace independently. In the following discussion, we focus on the $k$th invariant subspace which can be parametrized with the same parameter set introduced in \cite{dong2025qspe}, i.e. $\Xi = (\varphi_k)=(\theta,\zeta,\chi)$ with $\chi = 0$ in our scenario. In the large sample limit, where the experimental estimated probability can be approximated by a normal distribution with variance bounded as in Eq.~\eqref{eqn:variance bound}, then the Fisher information matrix for this specific subspace is
\begin{align}
    I_{kk'}(\Xi) & = \sum_{j = 0}^{2d-2} \text{Var}^{-1}(p_{k,res,X}^M) \frac{\partial p_X^B(\omega_j,\Xi)}{\partial \varphi_k}\frac{\partial p_X^B(\omega_j,\Xi)}{\partial \varphi_{k'}}+ \sum_{j = 0}^{2d-2} \text{Var}^{-1}(p_{k,res,Y}^M) \frac{\partial p_Y^B(\omega_j,\Xi)}{\partial \varphi_k}\frac{\partial p_X^B(\omega_j,\Xi)}{\partial \varphi_{k'}}\\
     &  \approx  (\frac{4N}{(2n-3)+4(d\theta)^2+4(n-2)(d\theta))})\sum_{j = 0}^{2d-2} \frac{\partial p_X^B(\omega_j,\Xi)}{\partial \varphi_k}\frac{\partial p_X^B(\omega_j,\Xi)}{\partial \varphi_{k'}}+ \sum_{j = 0}^{2d-2}  \frac{\partial p_Y^B(\omega_j,\Xi)}{\partial \varphi_k}\frac{\partial p_X^B(\omega_j,\Xi)}{\partial \varphi_{k'}}\\
     & \approx (\frac{4N(2d-1)}{(2n-3)+4(d\theta)^2+4(n-2)(d\theta))})\text{Re}(\sum_{-d+1}^{d-1}\frac{\partial c_j(\Xi)}{\partial \varphi_k}\frac{\partial c_j(\Xi)}{\partial \varphi_{k'}}).\\
\end{align}
Following the relation $c_j(\Xi) \approx i e^{-i\chi}e^{-i(2j+1)\zeta}\theta \mathbb{I}_{j >0}$ when $d\theta <<1$, we can compute the fisher information matrix as
\begin{align}
     I(\Xi) \approx (\frac{4N(2d-1)}{(2n-3)+4(d\theta)^2+4(n-2)(d\theta))}) \begin{bmatrix}
         d & 0 & 0 \\
         0 & \frac{d(4d^2-1)\theta^2}{3}& d^2\theta^2\\
         0 &  d^2\theta^2 & d\theta^2
     \end{bmatrix}.
\end{align}

The covariance matrix of the estimator set $(\Xi)$ is bounded as follows according to Cramer-Rao lower bound,
\begin{align}
    \text{Cov}(\hat{\theta},\hat{\zeta},\hat{\chi}) \geq  I^{-1}(\Xi) \approx \frac{(2n-3)}{4Nd(2d-1)} \begin{bmatrix}
        1 & 0 & 0 \\
        0 & \frac{3}{(d^2-1)\theta^2} & \frac{-3d}{(d^2-1)\theta^2}\\
        0 & \frac{-3d}{(d^2-1)\theta^2} & \frac{4d^2-1}{(d^2-1)\theta^2}
    \end{bmatrix}.
\end{align}
As a result, the optimal variances for $(\hat{\theta},\hat{\zeta})$ are
\begin{align}
    & \text{Var}_{opt}(\hat{\theta}) \approx \frac{n}{4Nd^2}; ~~~~~~~~\text{Var}_{opt}(\hat{\zeta}) \approx \frac{3n}
     {4Nd^4\theta^2}. 
\end{align}

\section{Total evolution time scaling comparison}\label{apx:total time}
The total evolution time of the whole learning algorithm is defined as 
\begin{equation}
    T =\sum_{i = 1}^{r} (\tau_i \times N),
\end{equation}
where $r$ is the total number of experiment rounds, $\tau_i$ is the running time for each experiment, and $N$ is the number of samples taken for each experiment. For analog-digital-hybrid learning algorithm \ref{alg:qspe-parallel} in which all the invariant subspaces are learned in parallel, we need to run $r = (n-1)$ experiments with each one taking $\tau_i = 2(2d-1)\times d $ running time and $N$ samples for each circuit. Thus, the total evolution time can be computed as follows.
\begin{theorem*}[Upper bound for the total evolution time for algorithm \ref{alg:qspe-parallel}]\label{thm:upper bound-parallel}
  Assuming the quantum system is evolving under a $2$-body Hamiltonian specifying as Hamiltonian Eq.~\eqref{eqn: all2all H}, our algorithm provides estimators $\mathbf{\hc}$ with probability at least $1-\delta$ satisfying $|{\hc}_{ij}-{c}_{ij}| \leq \epsilon$ for all $i,j \in \{1,\cdots,n\}$ and $i \neq j$ with total evolution time $T = O(\frac{n^{3/2}}{\epsilon}).$
\end{theorem*}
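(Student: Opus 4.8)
The plan is to chain together the per-subspace variance bound from Supplementary Material Sec.~\ref{apx:var}, the linear-system error propagation from Sec.~\ref{subsec:optimality discussion}, and a concentration/union-bound argument, then optimize the free evolution time $T$ and shot count $N$ against the target accuracy $\epsilon$ and confidence $1-\delta$. First I would recall that for each full Hamiltonian $H_i$ ($i=1,\dots,n-1$) the algorithm runs $2(2d-1)$ circuits of depth $\Theta(d)$ with $N$ shots each, so one round costs $\tau_i = 2(2d-1)\cdot d \cdot (\text{evolution time per cycle})$; absorbing the fixed single-cycle duration, $\tau_i = \Theta(d^2)$ in units of the base evolution time. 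With $r = n-1$ rounds the total evolution time is $T = \sum_{i=1}^{n-1}\tau_i N = \Theta(n d^2 N)$.

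Next I would track the accuracy. From Eq.~\eqref{eqn:parallel var of m_1} the estimator $\hat m_1^k$ of the time-integrated linear combination has variance $\Theta\!\big(\tfrac{n}{Nd^4 a_i^2}\big)$, and propagating through the full-rank coefficient matrix $\Lambda$ (entries $\pm T$, inverse entries $O(1)$ as argued around Eq.~\eqref{eqn:var_c_parallel}) gives $\var(\hat c_{ij}) = \Theta\!\big(\tfrac{n}{Nd^4 a_i^2 T^2}\big)$ after dividing out $T$. Here the key point is that $d$ and $T$ are \emph{both} tunable but constrained: the small-angle regime used throughout requires $d\,\theta_k \ll 1$, and $\theta_k$ scales with $T$ times the Hamiltonian coefficients, so effectively $dT = \Theta(1)$ (up to the fixed coefficient scale). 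Substituting $T = \Theta(1/d)$ collapses the variance to $\Theta\!\big(\tfrac{n}{Nd^2}\big)$, i.e. the Heisenberg-type scaling $\Theta(n/(Nd^2))$ with the \emph{classical} resource $N(2d-1)$ and quantum resource $d$. To get $|\hat c_{ij}-c_{ij}|\le\epsilon$ with the union bound over all $O(n^2)$ parameters and probability $1-\delta$, a sub-Gaussian tail bound (the Fourier coefficients are approximately complex-normal by Sec.~\ref{apx:var}) requires $\tfrac{n}{Nd^2}\cdot\log(n^2/\delta) = O(\epsilon^2)$, so it suffices to take $Nd^2 = \Theta\!\big(\tfrac{n\log(n/\delta)}{\epsilon^2}\big)$.

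Finally I would plug back into $T = \Theta(n d^2 N)$. Since we only need the \emph{product} $Nd^2$ to hit the target, the cheapest choice pushes $d$ to the constant-order boundary allowed by $d\theta_k\ll 1$ and loads the rest onto $N$; but the evolution-time cost $n d^2 N = n\cdot(N d^2)$ depends only on that product, giving $T = \Theta\!\big(n\cdot\tfrac{n}{\epsilon^2}\big)$ up to logs. To sharpen this to the claimed $O(n^{3/2}/\epsilon)$ I would instead keep $d$ growing: with $Nd^2 = \Theta(n/\epsilon^2)$ realized by $N=\Theta(1)$ shots and $d = \Theta(\sqrt{n}/\epsilon)$, we get $T = \Theta(n\,d^2\,N) = \Theta(n\cdot n/\epsilon^2)$ — still quadratic in $n$, so the right accounting must use the $1/d^4$ (not $1/d^2$) precision before the $dT=\Theta(1)$ substitution is forced, i.e. choose $T$ as large as the coherence-free regime permits rather than saturating $d\theta\ll1$. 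Keeping $\var(\hat c_{ij})=\Theta(n/(Nd^4))$ and setting $N=\Theta(1)$, $d=\Theta((n/\epsilon^2)^{1/4})$ yields round time $\tau_i=\Theta(d^2)=\Theta(\sqrt{n}/\epsilon)$ and $T=\Theta(n)\cdot\Theta(\sqrt n/\epsilon)/\Theta(\sqrt n)=\Theta(n^{3/2}/\epsilon)$ once the $n-1$ rounds and the per-subspace sampling overhead are balanced. \textbf{The main obstacle} is precisely this bookkeeping: correctly separating the three knobs $d$, $N$, $T$, respecting the $d\theta_k\ll1$ validity constraint of the pre-asymptotic QSPE analysis, and showing that the optimal allocation drives the $n$-dependence down from $n^2$ to $n^{3/2}$ — this requires carefully using the $1/d^4$ variance scaling together with the fact that each of the $n-1$ rounds learns $\Theta(n)$ parameters in parallel, so the per-round depth need only scale as $\sqrt{n}$ rather than $n$.
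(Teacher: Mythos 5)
Your proposal ends up at essentially the paper's own argument: the paper simply applies Chebyshev's inequality to the per-parameter variance $\var(\hc_{ij}) = O\!\left(\tfrac{3n}{4Nd^4 a_i^2}\right)$ (Eq.~\eqref{eqn:var_c_parallel}), sets $\tfrac{\var(\hc_{ij})}{\epsilon^2}=\delta$ to get $d^4=\Theta\!\left(\tfrac{n}{N\epsilon^2\delta}\right)$, and plugs into $T=\sum_i \tau_i N \approx n d^2 N$, which with $N=\Theta(1)$ gives $O(n^{3/2}/\epsilon)$ — exactly your final allocation $N=\Theta(1)$, $d=\Theta((n/\epsilon^2)^{1/4})$, $\tau_i=\Theta(d^2)=\Theta(\sqrt{n}/\epsilon)$ (your sub-Gaussian union bound over the $O(n^2)$ parameters is a slightly more careful treatment of the ``for all $i,j$'' claim than the paper's single-estimator Chebyshev, at the cost of log factors). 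Two small remarks: your closing expression $\Theta(n)\cdot\Theta(\sqrt{n}/\epsilon)/\Theta(\sqrt{n})$ contains a spurious division by $\Theta(\sqrt{n})$ (the total is just rounds times per-round cost, $\Theta(n)\cdot\Theta(\sqrt{n}/\epsilon)\cdot\Theta(1)$, which is what you in fact conclude), and the detour through the forced substitution $dT=\Theta(1)$ yielding $n^2/\epsilon^2$ is not part of the paper's proof — the paper works directly in the pre-asymptotic $1/d^4$ regime with the prefactor $a_i$ treated as fixed, as your corrected accounting does.
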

\begin{proof}
According to \text{Chebyshev equality}, for any individual estimator $\hc_{ij}$, 
\begin{align}
    \Pr(|\hc_{ij}-c_{ij}|\leq \epsilon) \geq 1-\frac{\text{Var}(\hc_{ij})}{\epsilon^2}.
\end{align}
We plug in the variance given in Eq.~\eqref{eqn:var_c_parallel} and set $\frac{\text{Var}(\hc_{ij})}{\epsilon^2} = \delta$, we get $d^4 = \frac{3n}{4Na_i^2\epsilon^2\delta}$, then the total evolution time $T = \sum_{i = 1}^r (\tau_i \times N) = (n-1)\times (2(2d-1)\times d)\times N  \approx nd^2N$ scales as 
\begin{equation}
    T = O(\frac{n^{3/2}}{\epsilon}).
\end{equation}
\end{proof}
We have the quadratic speedup result which indicates $O(n)$ experimental rounds are required for learning $O(n^2)$ terms as indicated in Theorem \ref{thm:quadratic speedup}, but the total evolution time scales with $O(n^{3/2})$ which is due to the sampling overhead for parallelizing $O(n)$ invariant subspaces. 

We also consider here applying the approach proposed in \cite{huang2023learning} for learning all-to-all connected Hamiltonians. In this method, the many-body Hamiltonian is first decoupled into non-interacting patches by applying random Pauli operations on the qubits that connect different patches. Once decoupled, the remaining non-interacting pairs can be learned in parallel using robust phase estimation. However, in the all-to-all connected case, the Hamiltonian is no longer geometrically local. To fully disconnect the interaction graph, one must eliminate all $(n-2)$ residual couplings, leaving only a single interaction active at a time. This requires $O(n^2)$ experimental rounds, i.e., $r = O(n^2)$. Each experiment achieves Heisenberg-limited precision, with $\tau \times N \sim O(\epsilon^{-1})$, so the total evolution time scales as $O(n^2 / \epsilon)$. Thus, the total evolution time of our parallel learning algorithm still scales better than the $O(n^2)$ suggests in \cite{huang2023learning} where $O(n^2)$ experimental rounds are required for learning all-to-all connected Hamiltonian for $n$ qubits.

Besides, for the fully-analog learning algorithm \ref{alg:qspe-analog}, we increase the experimental rounds to focus on each invariant subspace one at a time for ensuring the whole algorithm is implementable on a fully analog quantum simulator. So here the number of rounds is $r = \frac{n(n-1)}{2}$ with each one taking $\tau_i = 2(2d-1)\times d$ running time and $N$ samples for each circuit. 

\begin{theorem}[Upper bound for the total evolution time for algorithm \ref{alg:qspe-analog}]\label{thm:upper bound_analog}
    Assuming the quantum system is evolving under a $2$-body Hamiltonian specifying as Hamiltonian Eq.~\eqref{eqn: all2all H}, our algorithm provides estimators $\mathbf{\hc}$ with probability at least $1-\delta$ satisfying $|{\hc}_{ij}-{c}_{ij}| \leq \epsilon$ for each $i,j \in \{1,\cdots,n\}$ and $i \neq j$ with total evolution time $T = O(\frac{n^{2}}{\epsilon}).$
\end{theorem}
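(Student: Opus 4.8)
The plan is to mirror the proof just given for Algorithm~\ref{alg:qspe-parallel}, adjusting only the bookkeeping for the number of experimental rounds and the per-subspace variance, which differ between the fully analog and hybrid modes. First I would recall the structural fact established in the Methods section: in Algorithm~\ref{alg:qspe-analog} each invariant subspace is addressed one at a time, so to learn all $O(n^2)$ coefficients we iterate over the $(n-1)$ full Hamiltonians $H_i$ and, for each, sweep $(n-i)$ invariant subspaces individually; the total number of experimental rounds is therefore $r = \sum_{i=1}^{n-1}(n-i) = n(n-1)/2 = O(n^2)$. Each round consists of $2(2d-1)$ circuits of depth $\Theta(d)$ with $N$ shots apiece, so the running time per round is $\tau_i = 2(2d-1)\cdot d = \Theta(d^2)$ and the total evolution time is $T = \sum_{i=1}^{r}\tau_i N \approx \tfrac{n(n-1)}{2}\cdot d^2 N = O(n^2 d^2 N)$.

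Next I would invoke the variance bound for the analog estimators. Because in the fully analog mode the logical Bell pair is prepared on a single invariant subspace (Definition~\ref{def:state prep-analog}), there is no $1/(n-i)$ normalization penalty, and the per-parameter variance is the one stated in Eq.~\eqref{eqn:var_c_analog}, namely $\mathrm{Var}(\hc_{ij}) \approx O\!\left(\tfrac{3}{8Nd^4 a_i^2}\right)$. Applying Chebyshev's inequality to an individual estimator gives
\begin{equation}
    \Pr(|\hc_{ij}-c_{ij}| \le \epsilon) \ge 1 - \frac{\mathrm{Var}(\hc_{ij})}{\epsilon^2},
\end{equation}
and setting the failure probability to $\delta$ forces $d^4 = O\!\left(\tfrac{1}{N a_i^2 \epsilon^2 \delta}\right)$, i.e.\ $d^2 N = O\!\left(\tfrac{1}{\sqrt{N}\, a_i \epsilon \sqrt{\delta}}\right)$ for fixed $N$, so $\tau_i N = \Theta(d^2 N) = O(\epsilon^{-1})$ up to the constants depending on $a_i, \delta$. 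Substituting into $T = O(n^2 d^2 N)$ yields $T = O(n^2/\epsilon)$, which is the claimed bound.

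The only genuine subtlety — and the step I would flag as the main obstacle — is the union-bound argument needed to upgrade a per-estimator guarantee to the simultaneous statement ``$|\hc_{ij}-c_{ij}| \le \epsilon$ for each $i,j$''. There are $n(n-1)/2$ estimators, and moreover the estimator $\hc_{ij}$ is obtained by solving the linear system $\Lambda \hat{\mathbf c}_i = \hat{\mathbf m}_i$, so errors in the $\hat m$-values propagate through $\Lambda^{-1}$ and, in the iterative scheme, errors in earlier blocks $H_1,\dots,H_{i-1}$ feed into later ones. I would handle this by choosing $\delta' = \delta/\binom{n}{2}$ as the per-estimator failure probability (absorbing the resulting $\log n$ factor into the constants, since it appears only inside a fourth root in $d$ and hence contributes lower-order corrections to $T$), and by noting that the condition number of $\Lambda$ — equivalently of $W_s$ from Definition~\ref{def:invariant subspace}, whose determinant is $(-2)^{n-2}$ — stays $O(1)$ per entry so that the propagated error is still controlled by the stated variance bound up to constants. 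With these adjustments the Chebyshev-plus-substitution computation goes through verbatim, and the theorem follows. Note that the extra factor of $n^{1/2}$ relative to Theorem~\ref{thm:upper bound-parallel} is absent here precisely because the single-subspace state preparation removes the $O(n)$ sampling overhead, at the cost of the $O(n^2)$ rather than $O(n)$ round count.
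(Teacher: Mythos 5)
Your core derivation is exactly the paper's: count $r=\sum_{i=1}^{n-1}(n-i)=n(n-1)/2$ rounds, take $\tau_i = 2(2d-1)\cdot d$ with $N$ shots, apply Chebyshev to a single estimator, set the failure probability to $\delta$ to get $d^4 = \Theta\!\left(\tfrac{1}{N a_i^2\epsilon^2\delta}\right)$, and substitute to obtain $T \approx n^2 d^2 N = O(n^2/\epsilon)$. (You even use the correct analog variance Eq.~\eqref{eqn:var_c_analog}; the paper cites Eq.~\eqref{eqn:var_c_parallel} at this point but plugs in the analog constants, so this is a typo on the paper's side, not a discrepancy in your argument.)

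The one place you go beyond the paper — the union-bound upgrade to a simultaneous guarantee — is also where you slip. Chebyshev's tail scales as $1/\delta$, not $\log(1/\delta)$, so setting the per-estimator failure probability to $\delta' = \delta/\binom{n}{2}$ inflates $d^4$ by a factor $\Theta(n^2)$, hence $d^2$ by $\Theta(n)$, and the total time would become $O(n^3/\epsilon)$ rather than $O(n^2/\epsilon)$; it is not a logarithmic correction absorbable into constants (that would require an exponential-tail estimator, e.g.\ median-of-means over batches, which neither you nor the paper sets up). The paper avoids the issue entirely by proving only the per-estimator statement — its proof is Chebyshev applied to one $\hc_{ij}$ at a time, and the theorem is read accordingly — so your main computation stands, but the claim that the simultaneous version follows "with these adjustments" at no asymptotic cost does not. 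Similarly, your remark that error propagation through $\Lambda^{-1}$ is controlled because the condition number "stays $O(1)$ per entry" is asserted rather than shown; the paper folds this into the covariance computation $\Sigma_{\hat{\mathbf{c}}_i} = \Lambda^{-1}\Sigma_{\hat{\mathbf{m}}_i}\Lambda^{-T}$ in the optimality section and invokes the resulting variance bound, which is the cleaner way to justify that step.
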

\begin{proof}
According to \text{Chebyshev equality}, for any individual estimator $\hc_{ij}$, 
\begin{align}
    \Pr(|\hc_{ij}-c_{ij}|\leq \epsilon) \geq 1-\frac{\text{Var}(\hc_{ij})}{\epsilon^2}.
\end{align}
We plug in the variance given in Eq.~\eqref{eqn:var_c_parallel} and set $\frac{\text{Var}(\hc_{ij})}{\epsilon^2} = \delta$, we get $d^4 = \frac{3}{8Na_i^2\epsilon^2\delta}$, then the total evolution time $T = \sum_{i = 1}^r (\tau_i \times N) = (n-1)n/2\times (2(2d-1)\times d)\times N  \approx n^2d^2N$ scales as 
\begin{equation}
    T = O(\frac{n^{2}}{\epsilon}).
\end{equation}
\end{proof}
This scaling is consistent with the total evolution time reported in~\cite{huang2023learning}, as both approaches adopt a sequential strategy. The key distinction, however, is that our learning algorithm supports in-situ learning and operates in a fully analog manner, whereas the method in~\cite{huang2023learning} relies on digital single-qubit gates and reshapes the Hamiltonian, thereby precluding in-situ learning.

\end{document}